\def\ve#1{\mathchoice{\mbox{\boldmath$\displaystyle\bf#1$}}
{\mbox{\boldmath$\textstyle\bf#1$}}
{\mbox{\boldmath$\scriptstyle\bf#1$}}
{\mbox{\boldmath$\scriptscriptstyle\bf#1$}}}
\newcommand\vea{{\ve a}}
\newcommand\veb{{\ve b}}
\newcommand\vecc{{\ve c}}
\newcommand\ved{{\ve d}}
\newcommand\veg{{\ve g}}
\newcommand\veh{{\ve h}}
\newcommand\vel{{\ve l}}
\newcommand\veu{{\ve u}}
\newcommand\vev{{\ve v}}
\newcommand\vew{{\ve w}}
\newcommand\vex{{\ve x}}
\newcommand\vey{{\ve y}}
\newcommand\vez{{\ve z}}
\newcommand{\hy}{\hbox{-}\nobreak\hskip0pt}
\newtheorem*{rep@theorem}{\rep@title}
\newcommand{\newreptheorem}[2]{%
\newenvironment{rep#1}[1]{%
 \def\rep@title{#2 \ref{##1}}%
 \begin{rep@theorem}}%
 {\end{rep@theorem}}}
\newcommand{\ignore}[1]{}
\newtheorem{proposition}[theorem]{Proposition}
\newcommand{\bt}[1]{\begin{theorem}\label{#1}}
\newcommand{\bc}[1]{\begin{corollary}\label{#1}}
\newcommand{\bl}[1]{\begin{lemma}\label{#1}}
\newcommand{\bp}[1]{\begin{proposition}\label{#1}}
\newcommand{\be}[1]{\begin{example}\rm\label{#1}}
\newcommand{\ba}[1]{\begin{algorithm}\rm\label{#1}}
\newcommand{\bd}[1]{\begin{definition}\rm\label{#1}}
\newcommand{\bpr}{\begin{proof}}
\newcommand{\epr}{\end{proof}}
\newcommand{\et}{\end{theorem}}
\newcommand{\ec}{\end{corollary}}
\newcommand{\el}{\end{lemma}}
\newcommand{\ep}{\end{proposition}}
\newcommand{\ee}{\end{example}}
\newcommand{\ea}{\end{algorithm}}
\newcommand{\ed}{\end{definition}}
\def\N{\mathbb{N}}
\def\R{\mathbb{R}}
\def\Z{\mathbb{Z}}
\def \G {{\cal G}}
\def \T {{\cal T}}
\def \l {\langle}
\def \r {\rangle}
\def \poly {{\rm poly}}
\def \suppo {{\rm supp}}
\def \G {{\cal G}}
\def \C {{\cal C}}
\def \l {\langle}
\def \r {\rangle}
\def \sign {{\rm sign}}
\def \tw {{\rm tw}}
\def \td {{\rm td}}
\def \cl {{\rm cl}}
\def \transpose {{\intercal}}
\newcommand{\NP}{$\mathsf{NP}$\xspace}
\newcommand{\NPh}{$\mathsf{NP}$\hy\textsf{hard}\xspace}
\newcommand{\FPT}{$\mathsf{FPT}$\xspace}
\newcommand{\Wh}[1]{$\mathsf{W[#1]}$\hy\textsf{hard}\xspace}
\newcounter{Accumulate} \setcounter{Accumulate}{0}
  \newwrite\accuwrite \immediate\openout\accuwrite=\jobname.acc
\newenvironment{accumulate}{\Collect@Body\accuPrint}{}
\newcommand{\accuPrint}[1]{
 \ifthenelse{\value{Accumulate} = 0}{%
      #1
  }
  {
    \newtoks\prxxxm
    \prxxxm{#1}
    \immediate\write\accuwrite{\the\prxxxm}
  }
}
\newcommand{\ifaccumulating}[1]{%
  \ifthenelse{\value{Accumulate} = 1}{%
    #1%
  }{}%
}
\newcommand{\ifnoaccumulating}[1]{%
  \ifthenelse{\value{Accumulate} = 0}{%
    #1%
  }{}%
}
\newcommand{\accuprint}{%
  \ifthenelse{\value{Accumulate} = 1}{
    \immediate\closeout\accuwrite
    \input{\jobname.acc} %
  }{}
}
\def\APXmark{{\bf$\!\!$(*)~}}}
\title{A Parameterized Strongly Polynomial Algorithm for Block Structured Integer Programs\footnote{M. Koutecký supported by a postdoctoral fellowship at the Technion. A. Levin supported by  a grant from the GIF, the German-Israeli Foundation for Scientific Research and Development (grant number I-1366-407.6/2016). S. Onn supported by the Dresner chair.}}
\titlerunning{A Parameterized Strongly Polynomial Algorithm for Block Structured ILPs}
\author[1]{Martin Koutecký}
\author[1]{Asaf Levin}
\author[1]{Shmuel Onn}
\affil[1]{Technion -- Israel Institute of Technology, Haifa, Israel \texttt{\{koutecky,levinas,onn\}@technion.ac.il}}
\authorrunning{M. Koutecký, A. Levin, and S. Onn}
\subjclass{F.2.2 Nonnumerical Algorithms and Problems, G.1.6 Optimization}
\keywords{integer programming, parameterized complexity, Graver basis, $n$-fold integer programming}
\begin{document}
\maketitle

\begin{abstract}
The theory of $n$-fold integer programming has been recently emerging as an important
tool in parameterized complexity.
The input to an $n$-fold integer program (IP) consists of parameter $A$, dimension $n$, and numerical data of binary encoding length $L$.
It was known for some time that such programs can be solved in polynomial time using
$O(n^{g(A)}L)$ arithmetic operations where $g$ is an exponential function of the
parameter.
In 2013 it was shown that it can be solved in fixed-parameter tractable time using $O(f(A)n^3L)$
arithmetic operations for a single-exponential function $f$.
This, and a faster algorithm for a special case of \emph{combinatorial} $n$-fold IP, have led to several very recent breakthroughs in the parameterized complexity of scheduling, stringology, and computational social choice.
In 2015 it was shown that it can be solved in strongly polynomial time using $O(n^{g(A)})$ arithmetic operations.

Here we establish a result which subsumes all three of the above results by showing that $n$-fold IP can be solved in strongly polynomial fixed-parameter tractable time using $O(f(A)n^3)$ arithmetic operations.
In fact, our results are much more general, briefly outlined as follows.
\begin{itemize}
\item There is a strongly polynomial algorithm for integer linear programming (ILP) whenever a so-called Graver-best oracle is realizable for it.
\item Graver-best oracles for the large classes of multi-stage stochastic and tree-fold ILPs can be realized in fixed-parameter tractable time.
Together with the previous oracle algorithm, this newly shows two large classes of ILP to be strongly polynomial; in contrast, only few classes of ILP were previously known to be strongly polynomial.
\item We show that ILP is fixed-parameter tractable parameterized by the largest coefficient $\|A\|_\infty$ and the primal or dual treedepth of $A$, and that this parameterization cannot be relaxed, signifying substantial progress in understanding the parameterized complexity of ILP.
\end{itemize}
\end{abstract}

\section{Introduction}

In this article we consider the general linear integer programming (ILP) problem in standard form,
\begin{equation} \label{IP}
  \min\left\{\vew \vex \, \mid A\vex=\veb\,,\ \vel\leq\vex\leq\veu\,,\ \vex\in\Z^{n}\right\} \tag{ILP}.
\end{equation}
with $A$ an integer $m\times n$ matrix, $\veb\in\Z^m$, $\vew\in\Z^n$, $\vel,\veu\in(\Z\cup\{\pm\infty\})^n$.
It is well known to be strongly \NP-hard, which motivates the search for tractable special cases.

The first important special case is ILP in fixed dimension.
In the '80s it was shown by Lenstra and Kannan~\cite{Kannan:1987,Lenstra:1983} that~\eqref{IP} can be solved in time~$n^{O(n)} L$, where $L$ is the length of the binary encoding of the input.
Secondly, it is known that if the matrix $A$ is totally unimodular (all subdeterminants between $-1$ and $1$), all vertices of the feasible region are integral and thus applying any polynomial algorithm for linear programming (LP) suffices.
Later, Veselov and Chirkov~\cite{VeselovC:2009} have shown that the more general class of bimodular ILP is also polynomial-time solvable.
Other results exploit certain structural properties of $A$.
These include the large classes of $n$-fold~\cite{HOR}, tree-fold~\cite{MC}, $2$-stage and multi-stage stochastic~\cite{AH}, and $4$-block $n$-fold~\cite{HKW} ILPs, as well as algorithms for ILPs with bounded treewidth~\cite{GOR}, treedepth~\cite{GO} and fracture number~\cite{DvorakEGKO:2017} of certain graphs related to the matrix $A$.

A fundamental question regarding problems involving large numbers is whether there exists an algorithm whose number of arithmetic operations does not depend on the length of the numbers involved; if this number is polynomial, this is a \emph{strongly polynomial algorithm}~\cite{Tar}.
For example, the ellipsoid method or the interior-point method which solve LP take time which does depend on the encoding length, and the existence of a strongly polynomial algorithm for LP remains a major open problem.
So far, the only strongly polynomial ILP algorithms we are aware of exist for totally unimodular ILP~\cite{HS}, bimodular ILP~\cite{ArtmannWZ:2017}, so-called binet ILP~\cite{AppaKP:2007}, and $n$-fold IP with constant block dimensions~\cite{DeLoeraHL:2015}.
All remaining results, such as Lenstra's famous algorithm or the fixed-parameter tractable algorithm for $n$-fold IP which has recently led to several breakthroughs~\cite{MC,JansenKMR:2018,KnopKM:2017b,KnopKM:2017}, are not strongly polynomial.

\subsection{Our Contributions}
To clearly state our results we introduce the following terminology.
The input to a problem will be partitioned into three parts
$(\alpha, \beta, \gamma)$, where $\alpha$ is the
{\em parametric input}, $\beta$ is the
{\em arithmetic input}, and $\gamma$ is the {\em numeric input}.
A {\em strongly fixed-parameter tractable} (\FPT) \emph{algorithm} for the problem is
one that solves it using $f(\alpha)\poly(\beta)$ arithmetic operations
and $g(\alpha)\poly(\beta,\gamma)$ time, where $f,g$ are some computable functions.
If such an algorithm exists, we say that the problem is \emph{strongly fixed-parameter tractable} (\FPT) \emph{parameterized by $\alpha$}.
Thus, such an algorithm both demonstrates that the problem is \FPT \emph{parameterized by $\alpha$} because it runs in \FPT time $g(\alpha) \poly(\beta, \gamma)$, and provides a strongly polynomial algorithm for each fixed $\alpha$.
Having multiple parameters $\alpha_1, \dots, \alpha_k$ simultaneously is understood as taking the \emph{aggregate parameter} $\alpha = \alpha_1 + \cdots + \alpha_k$.
If the algorithm involves oracles then the oracle queries are also counted as arithmetic operations and the answers to oracle queries should be polynomial in $(\beta,\gamma)$.
Each part of the input may have several entities, which may be presented in unary or binary, where $\l e\r$ denotes the encoding length of an entity $e$ presented in binary.
For the parametric input the distinction between unary and binary is irrelevant.

ILP abounds in natural parameters: the dimension $n$, number of inequalities $m$, largest coefficient $\|A\|_\infty$, largest right-hand side $\|\veb\|_\infty$, various structural parameters of $A$, etc.
Here, we are interested in algorithms which are both strongly polynomial \emph{and} \FPT.

\medskip

Recently it was shown that, if we have access to the so-called {\em Graver basis} of $A$, the problem~\eqref{IP} is polynomial time solvable even for various nonlinear objective functions~\cite{DHK,Onn}.
We show that all of these results can be extended to be strongly polynomial with only $\l A \r$ as the arithmetic input.

\begin{theorem}\label{thm:oracle}
The problem \eqref{IP} with arithmetic input $\l A\r$ and numeric input $\l \vew,\veb,\vel,\veu \r$, endowed with a Graver-best oracle for $A$, is solvable by a strongly polynomial oracle algorithm.
\end{theorem}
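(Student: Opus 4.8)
The plan is to solve \eqref{IP} by the classical Graver-basis augmentation procedure, wrapped in two strongly polynomial preprocessing steps so that the total number of oracle calls depends only on $\l A\r$. Throughout, let $\vex^*$ denote an optimal solution and, for a feasible $\vex$, write its \emph{gap} as $\Delta(\vex)=\vew\vex-\vew\vex^*$, with initial value $\Delta_0$.

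First I would set up the augmentation engine and prove geometric convergence. Given a feasible but non-optimal $\vex$, the difference $\vex^*-\vex$ lies in $\Ker_{\Z}(A)$, so by the conformal Carath\'eodory property of the Graver basis $\G(A)$ it decomposes as a conformal sum $\vex^*-\vex=\sum_{i=1}^{t}\lambda_i\veg_i$ with $\veg_i\in\G(A)$, $\lambda_i\in\Z_{>0}$, each $\veg_i$ conformal to $\vex^*-\vex$, and $t\le 2n-2$. Since $A\veg_i=\vezero$ and, in a conformal sum, each partial term is coordinatewise between $\vezero$ and $\vex^*-\vex$, every $\vex+\lambda_i\veg_i$ is feasible. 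As $\sum_i\lambda_i(\vew\veg_i)=\vew(\vex^*-\vex)=-\Delta(\vex)$, at least one term improves the objective by at least $\Delta(\vex)/(2n-2)$. A Graver-best step is by definition at least this good, so one augmentation reduces the gap by a factor $(1-\tfrac{1}{2n-2})$. Hence $O(n\log\Delta_0)$ Graver-best steps drive the integer-valued gap below $1$, i.e. to $0$.

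The main obstacle is that $\log\Delta_0$ a priori depends on the numeric input $\l\vew,\veb,\vel,\veu\r$, so this count is only polynomial, not strongly polynomial; I would remove the two sources of dependence separately. To bound the magnitude of the objective, apply the Frank--Tardos theorem to replace $\vew$ by an integral $\bar\vew$ of encoding length $\poly(n,\log N)$ satisfying $\sign(\bar\vew\vez)=\sign(\vew\vez)$ for every integer $\vez$ with $\|\vez\|_1<N$, where $N$ will bound the $\ell_1$-diameter of the relevant feasible region. To bound that diameter, I would first solve the LP relaxation by Tardos's strongly polynomial algorithm, whose arithmetic-operation count depends only on $A$, obtaining an optimal vertex $\vex_{LP}$, and then invoke a proximity theorem guaranteeing an ILP optimum within $\ell_\infty$-distance $D(A)$ of $\vex_{LP}$, with $D$ depending only on $A$. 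Intersecting the box of radius $D(A)$ around $\vex_{LP}$ with the original bounds yields an equivalent instance whose variable widths, and hence whose feasible $\ell_1$-diameter $N$, are bounded by a function of $A$ alone; rounding $\vex_{LP}$ onto the lattice $A\vex=\veb$ (itself a feasibility computation reducible to augmentation) supplies an initial feasible $\vex_0$ within $D(A)$ of the optimum.

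With these reductions the pieces fit together: since $\|\bar\vew\|_\infty$ and $N$ are both bounded by functions of $A$, we get $\log\Delta_0\le\log(\|\bar\vew\|_\infty N)=\poly(\l A\r)$, so running the augmentation with the objective $\bar\vew$ (feeding $\bar\vew$ to the Graver-best oracle at each step) performs $\poly(\l A\r)$ oracle calls, each with answer polynomial in $(\l A\r,\l\vew,\veb,\vel,\veu\r)$, while every other operation---Frank--Tardos, Tardos's LP, rounding, and the one-dimensional optimal step-length computations---is strongly polynomial. Correctness holds because the sign-preservation property makes $\bar\vew$ and $\vew$ induce the same minimizer over the bounded box, and proximity guarantees this local optimum is a global optimum of \eqref{IP}. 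The step where I expect the real work is exactly this stripping of the numeric input from the iteration count: it is the interplay of Frank--Tardos (bounding the objective) with Tardos's LP algorithm and proximity (bounding the feasible diameter) that upgrades the geometric-convergence bound $O(n\log\Delta_0)$ from polynomial to strongly polynomial.
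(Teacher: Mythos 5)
Your overall route is the same as the paper's: solve the LP relaxation with Tardos's strongly polynomial algorithm and invoke proximity to shrink the instance to a box whose dimensions depend only on $\l A\r$, apply Frank--Tardos to replace $\vew$ by an equivalent objective of encoding length $\poly(\l A\r)$, and then run Graver-best augmentation, whose iteration count $O(n\log\Delta_0)$ (your geometric-convergence argument is exactly the paper's Proposition~\ref{prop:graverbest}, cited from \cite{Onn}) becomes $\poly(\l A\r)$ after these two reductions. The bookkeeping of which quantities end up bounded by $\l A\r$ matches the paper's Steps 1, 4 and 5.

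The one genuine gap is the initial feasible point, which you dispose of in a single clause: ``rounding $\vex_{LP}$ onto the lattice $A\vex=\veb$ (itself a feasibility computation reducible to augmentation) supplies an initial feasible $\vex_0$.'' This hides a circularity: the augmentation procedure you want to reduce feasibility to itself requires a feasible starting point, and an integer solution of $A\vez=\bar\veb$ (obtained, say, from the Hermite normal form of $A$ --- it is not literally a rounding of $\vex_{LP}$) will in general violate the bounds $\bar\vel\le\vez\le\bar\veu$ badly. The paper breaks the circularity in its Steps 2--3: it first computes such a lattice point $\vez$ via the Hermite normal form (strongly polynomial, and with $\l\vez\r=\poly(\l A\r)$ because $\bar\veb$ has already been reduced), then \emph{relaxes} the box to $\hat l_i=\min\{\bar l_i,z_i\}$, $\hat u_i=\max\{\bar u_i,z_i\}$ so that $\vez$ is feasible for the relaxed program, and then runs $n$ auxiliary Graver-best augmentations, the $i$-th maximizing (or minimizing) the single coordinate $x_i$ to pull it into $[\bar l_i,\bar u_i]$, tightening one bound at a time; each auxiliary run takes $\poly(\l A\r)$ oracle calls because the relaxed box still has $\poly(\l A\r)$ encoding length. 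Without some such two-phase construction your argument has no feasible point from which to launch the final augmentation, so you should either supply this mechanism or cite an equivalent one. (A minor further slip: what you need of $\vex_0$ is not that it is within $D(A)$ of the optimum but merely that it lies in the reduced box, so that $\|\vex_0-\vex^*\|_1<N$ and the Frank--Tardos sign guarantee applies.)
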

The existence of Graver-best oracles is thus of prime interest.
We show such oracles for the wide classes of multi-stage stochastic and tree-fold ILPs; for precise definitions of these classes cf. Section~\ref{subsec:multistage_treefold}.
See Table~\ref{tab:results} for a summary of improvements over the current state of the art.
\begin{theorem}\label{thm:multistage}
Multi-stage stochastic ILP with blocks $B_1, \dots, B_\tau$, $B_i \in \Z^{l \times n_i}$, is strongly \FPT parameterized by $l + n_1 + \cdots + n_\tau$ and $\|A\|_\infty$.
\end{theorem}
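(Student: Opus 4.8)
The plan is to invoke Theorem~\ref{thm:oracle}: since that result gives a strongly polynomial oracle algorithm for \eqref{IP} whenever a Graver-best oracle for $A$ is available, it suffices to \emph{realize} such an oracle for multi-stage stochastic matrices $A$ in strongly \FPT\ time, i.e.\ using $f(l+n_1+\cdots+n_\tau,\|A\|_\infty)$ arithmetic operations and $\poly$ many bit operations per call. The realization rests on two ingredients: a bound on the norm of the Graver basis elements $\mathcal{G}(A)$ depending only on the parameters, and a dynamic program that exploits this bound to locate a Graver-best step.

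The first and central step is the norm estimate. I would record that the primal graph $G_P(A)$ of a multi-stage stochastic matrix (columns as vertices, edges between columns sharing a nonzero row) has treedepth bounded by $n_1+\cdots+n_\tau$, since the nested block structure yields an elimination forest in which the variable group of each stage occupies one level. I would then establish
\[
  \max_{\veg \in \mathcal{G}(A)} \|\veg\|_1 \;\le\; g\bigl(\|A\|_\infty, \td_P(A)\bigr)
\]
for a computable function $g$. The natural route is an induction along the elimination forest combined with the Steinitz lemma: peeling off the root variables level by level, one reorders the sign-compatible decomposition of a kernel element so that all partial sums stay within a parameter-bounded box; if some coordinate of $\veg$ exceeded the bound, a nontrivial partial sum would itself lie in $\Ker(A)$ and be conformally smaller, contradicting $\sqsubseteq$-minimality.

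Second, using this bound I would realize the Graver-best oracle by dynamic programming over the elimination forest. Given the current feasible $\vex$ and a step length $\lambda$, the task is to compute a feasible augmentation $\veh$ with $\|\veh\|_\infty$ below the parametric bound minimizing $\vew(\vex+\lambda\veh)$; processing the forest from leaves to root and maintaining, at each node, a table indexed by the boundedly many attainable partial constraint values of the $l$ rows of its ancestors yields such an $\veh$. Because every element of $\mathcal{G}(A)$ has norm within the bound, the best bounded step of each scale $\lambda\in\{1,2,4,\dots\}$ coincides (up to the usual constant factor that still drives the augmentation bound of Theorem~\ref{thm:oracle}) with a Graver-best step, so the overall best over scales returns the required step. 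The bounded norm keeps every table at size $f(\alpha)$, giving the desired per-call complexity.

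The main obstacle is the norm estimate: making the Steinitz/treedepth induction produce a bound that depends only on $\|A\|_\infty$ and $\td_P(A)$ and \emph{not} on the dimension $n$, while controlling its growth with the nesting depth $\tau$. A secondary but essential point is strong polynomiality of the oracle itself: the scale search ranges over $O(\log\|\veu-\vel\|_\infty)$ values of $\lambda$, which threatens to inject the numeric input $\langle \veb,\vew,\vel,\veu\rangle$ into the \emph{arithmetic}-operation count; I must therefore confine all dependence on numeric data to the bit complexity (e.g.\ by a Frank--Tardos-type preprocessing of $\vew$ to a parameter-polynomially bounded surrogate, or by bounding the number of relevant scales by the norm bound), so that each oracle call genuinely uses $f(\alpha)\poly(\beta)$ arithmetic operations and the substitution into Theorem~\ref{thm:oracle} yields a strongly \FPT\ algorithm.
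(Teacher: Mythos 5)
Your overall architecture matches the paper's: bound the norm of Graver elements of a multi-stage stochastic matrix, realize a $\lambda$-Graver-best step by dynamic programming over the primal structure (using $\tw_P(A)\le\td_P(A)\le n_1+\cdots+n_\tau+1$), enumerate finitely many step lengths, and feed the resulting Graver-best oracle into Theorem~\ref{thm:oracle}. The genuine gap is at the crux, the norm estimate. You claim $\max_{\veg\in\G(A)}\|\veg\|_1\le g(\|A\|_\infty,\td_P(A))$, and this is false: for the two-stage matrix with $B_1=B_2=(1)$ and $n$ scenarios (constraints $x_0+x_i=b_i$), the vector $(1,-1,\dots,-1)$ is a Graver basis element of $\ell_1$-norm $n+1$ while $\td_P(A)=2$ and $\|A\|_\infty=1$. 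What is true, and what the paper uses (Lemma~\ref{lem:multistage_norm}), is that $g_\infty(A)$ is bounded by a function of the parameters alone; this is the result of Aschenbrenner and Hemmecke, and its proof is not a Steinitz-type reordering. The Steinitz induction you sketch is the tool for the \emph{dual} (tree-fold) side, where one bounds $g_1(A)$; on the primal side a sign-compatible decomposition of a kernel element does not yield parameter-bounded partial sums in the way you describe, and no elementary argument of this kind is known for the $\ell_\infty$ bound. Since your own dynamic program only needs $\|\veh\|_\infty\le M$, the lemma you actually require is the $\ell_\infty$ bound, and as written it is both misstated and unproved; you would need to cite or genuinely reprove the Aschenbrenner--Hemmecke bound.

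A secondary issue is the step-length search. Restricting $\lambda$ to powers of two returns only an approximately Graver-best step, whereas Theorem~\ref{thm:oracle} consumes an exact Graver-best oracle as in Definition~\ref{def:gb_oracle}; you would have to rework Proposition~\ref{prop:graverbest} for approximate steps. The paper instead observes that a Graver-best step pair is tight in some coordinate, which yields an exact candidate set $\Lambda$ of size at most $2 M n$ depending only on the parameter and the arithmetic input (Lemmas~\ref{lem:gamma_construction} and~\ref{lem:lambda_gb_oracle}); this also disposes of your concern about the number of scales injecting the numeric input into the arithmetic-operation count, without any Frank--Tardos preprocessing inside the oracle.
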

\begin{theorem}\label{thm:treefold}
Tree-fold ILP with blocks $A_1, \dots, A_\tau$, $A_i \in \Z^{r_i \times t}$, is strongly \FPT parameterized by $r_1 + \cdots + r_\tau$ and $\|A\|_\infty$.
\end{theorem}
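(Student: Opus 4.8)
The plan is to apply Theorem~\ref{thm:oracle}: since that result converts any Graver-best oracle into a strongly polynomial algorithm, and since the parameters $\sum_i r_i$ and $\|A\|_\infty$ constitute the parametric input while the block width $t$ and the number of leaves belong to the (polynomially-allowed) arithmetic input, it suffices to realize a Graver-best oracle for tree-fold matrices using $f(\sum_i r_i,\|A\|_\infty)\cdot\poly(\text{size})$ arithmetic operations. This parallels the strategy behind Theorem~\ref{thm:multistage}, with the roles of primal and dual structure interchanged: multi-stage stochastic matrices have bounded primal treedepth, whereas a tree-fold matrix is precisely one whose \emph{dual} graph has treedepth bounded in terms of $\sum_i r_i$.

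Two ingredients are needed. The first is a bound on the norms of the Graver basis elements of a tree-fold matrix that depends only on $\|A\|_\infty$ and $\sum_i r_i$, and \emph{not} on $t$ or on the number of leaves. The starting point is the Steinitz-type estimate that the $\ell_1$-norm of every Graver element of an integer matrix $M$ with $m$ rows is bounded by a function of $m$ and $\|M\|_\infty$ alone (of the form $(2m\|M\|_\infty+1)^m$), \emph{independently} of the number of columns. I would lift this to the whole tree by induction on the number of levels $\tau$: writing a Graver element $\veg\in\G(A)$ in block form according to the tree, its restriction to each subtree hanging below the root decomposes into Graver elements of the corresponding smaller tree-fold matrix, and the vector these induce through the root block lies in the kernel of a matrix with only as many rows as the root block and entries of magnitude at most $\|A\|_\infty$ times the inner norm bound. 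Applying the row-count estimate at the root then multiplies the inner bound by a factor depending only on that block's row count and $\|A\|_\infty$; iterating over all $\tau\le\sum_i r_i$ levels yields a bound of the claimed shape.

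The second ingredient is the oracle itself, which I would realize by dynamic programming over the tree from the leaves to the root. At each node I maintain a table indexed by the partial right-hand side contributed by that node's block — a vector in $\Z^{r_i}$ whose entries are bounded, via the norm bound and $\|A\|_\infty$, by a function of the parameters — storing the best value $\vew\vex$ attainable by a bounded-norm augmenting vector on the subtree rooted there compatible with that partial right-hand side. Leaf tables come from optimizing over the bounded-norm kernel elements of a single block; internal tables are obtained by convolving the children's tables and combining with the node's own block. Since every table has a number of entries bounded by $f(\sum_i r_i,\|A\|_\infty)$ and all numbers remain polynomial in the input encoding, the whole computation uses $f(\sum_i r_i,\|A\|_\infty)\cdot\poly(\text{size})$ arithmetic operations, and the root table directly yields a Graver-best step.

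The main obstacle is the norm bound of the first ingredient: the induction must be arranged so that passing from the subtrees to the root multiplies the bound by a factor depending only on the root block's row count and $\|A\|_\infty$, so that the final bound over all $\tau$ levels remains a function of $\sum_i r_i$ and $\|A\|_\infty$ alone and never acquires a dependence on $t$ or on the number of leaves. Correctly decomposing a global Graver element into per-child pieces, and arguing that the induced top-level vector again lies in a small-row-count kernel, is the technical heart; the dynamic program and the final invocation of Theorem~\ref{thm:oracle} are then comparatively routine.
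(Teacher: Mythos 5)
Your overall architecture --- bound the Graver norms by a function of $\sum_i r_i$ and $\|A\|_\infty$ alone, realize the oracle by dynamic programming over the tree tracking bounded partial right-hand sides, and invoke Theorem~\ref{thm:oracle} --- is exactly the paper's. Your route to the norm bound differs in packaging but is sound: the paper black-boxes the Chen--Marx induction (whose bound depends on $t$) and removes the $t$-dependence by deleting duplicate columns and citing the De Loera et al.\ lemma that repeating columns does not increase $g_1$; you instead rebuild the induction on top of a column-count-independent base estimate, which achieves the same end. Your leaves-to-root DP is, in substance, the Ganian et al.\ dynamic program over the dual tree decomposition that the paper invokes after encoding $\|\veh\|_1\le M$ by variable splitting, with the same observation that all partial right-hand sides are bounded by $\|A\|_\infty\cdot M$.

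There is, however, one genuine gap: step lengths. A Graver-best step $\veh$ must satisfy $\vew(\vex+\veh)\le\vew(\vex+\lambda\veg)$ for \emph{all} $\vex$-feasible pairs $(\veg,\lambda)\in\G(A)\times\Z$, and the iteration bound of Proposition~\ref{prop:graverbest} relies on competing against the scaled steps: $\vex^*-\vex$ decomposes into at most $2n-2$ terms $\alpha_i\veg_i$ with integer multiplicities, and the step you take must be at least as good as the best of \emph{these}. Your DP optimizes over kernel vectors of $\ell_1$-norm at most $M$, so it finds the best step of the form $\veg\in\G(A)$ with $\lambda=1$, but it can miss a far better $\lambda\veg$ with $\lambda$ large, since $\|\lambda\veg\|_1$ exceeds $M$; and restricting to $\lambda=1$ steps forfeits the polynomial bound on the number of augmentations. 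The fix is the paper's Lemmas~\ref{lem:gamma_construction} and~\ref{lem:lambda_gb_oracle}: a Graver-best pair may be assumed tight in some coordinate, so there are at most $2g_\infty(A)\cdot n\le 2Mn$ candidate step lengths, and one runs your DP once per candidate $\lambda$, minimizing $\lambda\vew\veh$ subject to $A\veh=0$, $\vel\le\vex+\lambda\veh\le\veu$, $\|\veh\|_1\le M$, and returns the best result over all candidates. With that addition your argument goes through.
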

This improves on the algorithm for tree-fold ILP~\cite{MC} not only by making it strongly \FPT, but also by leaving the block length $t$ out of the parameter.
Similarly, the following algorithm for the special case of $n$-fold ILP greatly improves both on the previous results of Hemmecke et al.~\cite{HOR} and Knop et al.~\cite{KnopKM:2017b} and is the currently fastest algorithm for this problem:
\begin{theorem}\label{thm:nfolds}
$n$-fold ILP with blocks $A_1 \in \Z^{r \times t}$ and $A_2 \in \Z^{s \times t}$ can be solved in time $a^{O(r^2s + rs^2)} (nt)^3 + \mathcal{L}(\l A \r)$, where $\mathcal{L}(\l A \r)$ is the runtime of a strongly polynomial LP algorithm.
\end{theorem}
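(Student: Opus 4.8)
The plan is to view $n$-fold ILP as the two-level instance of tree-fold ILP (one level of linking blocks $A_1$ and one level of diagonal blocks $A_2$), so that strong \FPT solvability already follows from Theorem~\ref{thm:treefold}; the work here is to extract the \emph{explicit} running time by pairing the oracle algorithm of Theorem~\ref{thm:oracle} with a concrete, efficient realization of the Graver-best oracle tailored to the $n$-fold shape. Writing $a=\|A\|_\infty$ and $N=nt$, Theorem~\ref{thm:oracle} reduces the task to (i) producing an initial integer feasible point together with finite bounds, and (ii) answering a strongly polynomial number of Graver-best queries, the arithmetic cost being the number of queries times the cost of one query. Step (i) is handled by the strongly polynomial LP algorithm: solving the LP relaxation and applying a proximity/rounding argument either certifies infeasibility or yields a feasible $\vex$ and replaces $\vel,\veu$ by bounds of strongly polynomial magnitude, contributing the additive $\mathcal{L}(\l A\r)$ term. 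The number of queries is $(nt)^{O(1)}$ by the convergence analysis underlying Theorem~\ref{thm:oracle}, so the whole difficulty concentrates in realizing one Graver-best query within $a^{O(r^2s+rs^2)}\,\poly(nt)$ arithmetic operations.

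To realize the oracle I would first bound norms. Every element $\veg=(\veg^1,\dots,\veg^n)$ of the Graver basis of the $n$-fold matrix satisfies $A_2\veg^i=\vezero$ for all $i$ and $\sum_i A_1\veg^i=\vezero$, and by the Graver-complexity/Steinitz estimates (the $\tau=2$ case of the bounds proved for tree-fold) each brick obeys $\|\veg^i\|_\infty\le g_\infty$ with $g_\infty\le a^{O(s(r+s))}$, while the partial sums $\sum_{j\le i}A_1\veg^j$ can be kept, after a suitable reordering, within $\|\cdot\|_\infty\le\Delta\le r\,a\,g_\infty$. A Graver-best step, for a fixed geometric step length $\lambda\in\{1,2,4,\dots\}$ over which we also optimize, is then the minimum-cost solution of the bounded program asking for $\veg$ with $A_2\veg^i=\vezero$, $\sum_iA_1\veg^i=\vezero$, $\vel\le\vex+\lambda\veg\le\veu$ and $\|\veg^i\|_\infty\le g_\infty$.

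I would solve this by a two-layer dynamic program. The \emph{inner} layer precomputes, for every brick $i$ and every admissible target $\ved=A_1\veg^i\in\Z^r$ (there are at most $(2\Delta+1)^r$ of them), the minimum of $\vew^i\veg^i$ over $\veg^i$ with $\binom{A_1}{A_2}\veg^i=\binom{\ved}{\vezero}$ and the box constraints; each such bounded IP has only $r+s$ constraints and is solved by a base dynamic program over its constraint values in time $\poly(t)\,a^{O((r+s)^2)}$, giving the per-brick tables in total time $n\cdot(2\Delta+1)^r\cdot\poly(t)\,a^{O((r+s)^2)}$. The \emph{outer} layer is a shortest path over a DAG with $n+1$ columns whose vertices are the partial-sum states $\veh\in\Z^r$, $\|\veh\|_\infty\le\Delta$, and whose edges consult the inner tables; it runs in $n\cdot(2\Delta+1)^{2r}\cdot\poly(t)$ time and returns the optimal step ending at $\veh=\vezero$. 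Substituting $g_\infty,\Delta\le a^{O(s(r+s))}$ collapses the exponent to $a^{O(rs(r+s))}=a^{O(r^2s+rs^2)}$, and multiplying one query cost by the $(nt)^{O(1)}$ query bound yields the claimed $a^{O(r^2s+rs^2)}(nt)^3+\mathcal{L}(\l A\r)$.

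The main obstacle is the bookkeeping of the exponent: one must establish the two norm bounds $g_\infty$ and $\Delta$ sharply enough---local boundedness from $A_2$ supplying the inner factor and the global Graver complexity of $(A_1,A_2)$ supplying the range of the partial sums---and then balance the inner table size $(2\Delta+1)^r$ against the outer state count $(2\Delta+1)^{2r}$ and the base-case cost $a^{O((r+s)^2)}$ so that the product lands exactly at $a^{O(r^2s+rs^2)}$ while the polynomial factor stays at $(nt)^3$. Controlling the partial sums so that $\Delta$ depends only on $A$ and not on $n$, via a Steinitz-type reordering, and keeping $t$ out of the exponent by routing all brick choices through the $r$-dimensional targets $\ved$ rather than enumerating $\veg^i$ coordinate-wise, are the delicate points.
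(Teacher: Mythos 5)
Your overall architecture (Theorem~\ref{thm:oracle} plus a tailored Graver-best oracle, with the LP step contributing the additive $\mathcal{L}(\l A\r)$) matches the paper, but your realization of the oracle takes a genuinely different route: a bespoke two-layer dynamic program over partial-sum states in $\Z^r$, whereas the paper bounds $g_1(A)$ and then invokes the generic bounded-treewidth algorithm (Lemma~\ref{lem:incidence_treewidth}, i.e.\ variable splitting plus the $\Gamma$-parameterized DP of Ganian et al., applied once per step length $\lambda$ via Lemma~\ref{lem:lambda_gb_oracle}). Your route is viable in principle, but as written it has a genuine gap in exactly the place you flag as delicate: the norm bookkeeping.

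The problem is that your bound $\Delta\le r\,a\,g_\infty$ on the partial sums $\sum_{j\le i}A_1\veg^j$ is justified only ``after a suitable reordering'' of the bricks, i.e.\ by a Steinitz-type argument that guarantees \emph{some} good order exists. But your outer DP scans the bricks in their \emph{fixed} natural order (each brick carries its own bounds $\vel^i,\veu^i$ and its own $\vex^i$, so they cannot be permuted), and a Graver element whose natural-order partial sums exceed $\Delta$ would fall outside the DP's state space, so the returned step need not be Graver-best. What is actually needed is a bound on $g_1(A)$ for the full $n$-fold matrix, independent of $n$ and $t$; this gives $\bigl\|\sum_{j\le i}A_1\veg^j\bigr\|_\infty\le a\cdot g_1(A)$ in \emph{every} order, and it is also what keeps $t$ out of the exponent of your inner per-brick DP (with only the per-brick $\ell_\infty$ bound, the inner DP's partial-sum state space is of order $(t\,a\,g_\infty)^{r+s}$, i.e.\ $t^{r+s}$, not $\poly(t)$). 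The paper supplies precisely this missing ingredient: $g_1(A)\le g(A)\cdot g_1(A_2)$ with $g(A)=\max_{\vev\in\G(A_1\G(A_2))}\|\vev\|_1$ (from the analysis of Hemmecke et al.), combined with the distinct-columns bound of De Loera et al.\ (Lemma~\ref{lem:rep_columns_bound}) to get $g_1(A)\le(ars)^{O(rs)}$. If you replace your $g_\infty$/reordering argument by this $\ell_1$ bound and set $\Delta=a\cdot g_1(A)$, your two-layer DP goes through and lands at the same exponent; without it, the argument does not establish correctness of the oracle.
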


Next, we turn our attention to structural parameters of the constraint matrix $A$.
We focus on two graphs which can be associated with $A$:
\begin{itemize}
\item the \emph{primal graph} $G_P(A)$, which has a vertex for each column and two vertices are connected if there exists a row such that both columns are non-zero, and,
\item the \emph{dual graph} $G_D(A) = G_P(A^{\transpose})$, which is the above with rows and columns swapped.
\end{itemize}
Two standard parameters of structural sparsity are the \emph{treewidth} (measuring the ``tree-likeness'' of a graph) and the more restrictive \emph{treedepth} (measuring its ``star-likeness'').
We denote the treewidth of $G_P(A)$ and $G_D(A)$ by $\tw_P(A)$ and $\tw_D(A)$; for treedepth we have $\td_P(A)$ and $\td_D(A)$.
Note that bounded treedepth implies bounded treewidth but not vice versa.

We show that ILP parameterized by $\td_P(A) + \|A\|_\infty$ and $\td_D(A) + \|A\|_\infty$ can be reduced to the previously mentioned classes, respectively, implying~\eqref{IP} with these parameters is strongly \FPT.

\begin{theorem} \label{thm:primaltd}
\eqref{IP} is strongly \FPT parameterized by $\td_P(A)$ and $\|A\|_\infty$.
\end{theorem}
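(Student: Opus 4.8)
The plan is to reduce \eqref{IP} with bounded primal treedepth to the multi-stage stochastic case handled by Theorem~\ref{thm:multistage}, and then invoke that theorem as a black box. The guiding observation is that an elimination forest of $G_P(A)$ of bounded height already exhibits the nested block pattern of a multi-stage stochastic matrix. Recall that such a forest $F$, of height $d = \td_P(A)$, is a rooted forest on the columns in which every edge of $G_P(A)$ joins an ancestor--descendant pair. Since the columns in the support of any single row of $A$ form a clique in $G_P(A)$, and a clique in a graph with elimination forest $F$ must be totally ordered by the ancestor relation, the support of each row lies on a single root-to-node path of $F$. Hence, after permuting rows and columns so that columns are ordered according to $F$, every row is supported on the variables of one branch --- precisely the scenario-tree pattern underlying multi-stage stochastic matrices.

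Concretely, I would first compute, in strongly \FPT time, an elimination forest $F$ of $G_P(A)$ of height bounded by a function of $d$; this is a purely combinatorial task independent of the numeric data, so any \FPT treedepth algorithm suffices and the step is strongly polynomial. Adjoining a virtual common root when $F$ is disconnected, and padding short branches with dummy all-zero columns, makes $F$ a tree of depth $\tau = O(d)$ with one variable per level, so each $n_i = 1$ and $\sum_i n_i = O(d)$, while $\|A\|_\infty$ is untouched and the size grows only by a factor $O(d)$. Attaching to each node of the tree the block of rows whose deepest support-column is that node then yields exactly the multi-stage block decomposition, with all entries bounded by $\|A\|_\infty$.

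The main obstacle is that primal treedepth bounds the number of \emph{variables} on any branch but places no bound on the number of \emph{rows} supported on a branch, whereas the parameter of Theorem~\ref{thm:multistage} includes the per-block row count $l$. To control it I would, inside the group of rows sharing a common branch, keep a maximal linearly independent subset --- of size at most the branch length $d$ --- and discard the rest. Discarding rows can only remove edges of $G_P(A)$ and never increases $\|A\|_\infty$, so the decomposition is preserved; moreover a discarded row is a rational combination of the kept ones, hence implied over $\Z$ whenever its right-hand side is consistent with that combination (and witnessing infeasibility of \eqref{IP} otherwise), so its deletion does not change the integer feasible set. Each such rank and solvability test is carried out by strongly polynomial Gaussian elimination. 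After this reduction every node carries $O(d)$ rows, so the common padded value $l = O(d)$ and the aggregate parameter $l + \sum_i n_i = O(d)$ is bounded by a function of $\td_P(A)$ alone, with $\|A\|_\infty$ unchanged. Feeding the resulting instance into the algorithm of Theorem~\ref{thm:multistage}, and noting that every preprocessing step above is strongly polynomial and \FPT, completes the argument; the only delicate points are verifying that the multi-stage class in question permits distinct (rather than identical) blocks of bounded dimension, and that the row-elimination step genuinely preserves both the integer feasible set and the bound on $\|A\|_\infty$.
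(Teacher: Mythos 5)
Your overall strategy --- compute an elimination tree of $G_P(A)$, use the fact that the support of every row of $A$ is a clique in $G_P(A)$ and hence lies on a single root-to-node path, pad with dummy zero columns, and hand the result to Theorem~\ref{thm:multistage} --- is exactly the route the paper takes in Lemma~\ref{lem:primaltd_uniformization}. Your rank-reduction step is also sound as far as it goes: rows supported on a common path span a space of dimension at most the path length, a dependent row with consistent right-hand side is redundant and an inconsistent one certifies infeasibility, and the tests are strongly polynomial.

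The gap is precisely the point you flag as ``delicate'' and leave unverified. The class $T^P(B_1,\dots,B_\tau)$ used in Theorem~\ref{thm:multistage} requires the \emph{same} block $B_s$ at every node of level $s$, and this uniformity is what the Graver-norm bound of Lemma~\ref{lem:multistage_norm} is stated and proved for. Your construction attaches to each node the rows of $A$ actually supported there, so different nodes of the same level carry different matrices; bounding the number of rows per node by a rank argument does not repair this, and the resulting instance is simply not a multi-stage stochastic ILP in the sense needed to invoke Theorem~\ref{thm:multistage} as a black box. The paper closes exactly this hole with a different device: it takes as a universal block the matrix $B \in \Z^{(2a+1)^{\tau^2} \times \tau^2}$ whose rows are \emph{all} possible row vectors with entries in $[-a,a]$ supported on a padded root-to-leaf path, splits it into $B_1,\dots,B_\tau$ along the ``frets'', and appends an identity to $B_\tau$ so that a slack variable with bounds $(-\infty,\infty)$ and right-hand side $0$ switches off every potential row not present in the instance, while rows that are present receive their true right-hand side and a fixed slack. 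This makes every block literally identical, at the price of $l = (2a+1)^{\tau^2}$ rather than your $O(\td_P(A))$ rows per block --- harmless, since $l$ only enters the parameter. To complete your proof you would need either this universal-block-plus-slack trick (which makes the rank reduction unnecessary) or a separate argument that Theorem~\ref{thm:multistage} extends to non-identical blocks of bounded dimensions and entries; as written, the final invocation of Theorem~\ref{thm:multistage} is not justified.
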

This improves in two ways upon the result of Ganian and Ordyniak~\cite{GO} who show that \eqref{IP} with $\vew \equiv \mathbf{0}$ (i.e. deciding the feasibility) is \FPT parameterized by $\td_P(A) + \|A, \veb\|_\infty$~\cite{GO}.
First, we use the smaller parameter $\|A\|_\infty$ instead of $\|A, \veb\|_\infty$, and second, we solve not only the feasibility but also the optimization problem.
An analogous results holds for the parameter $\td_D(A)$, for which previously nothing was known at all.
\begin{theorem} \label{thm:dualtd}
\eqref{IP} is strongly \FPT parameterized by $\td_D(A)$ and $\|A\|_\infty$.
\end{theorem}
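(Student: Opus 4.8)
The plan is to reduce \eqref{IP} with bounded dual treedepth to a tree-fold ILP and then invoke Theorem~\ref{thm:treefold}, in exact analogy to how Theorem~\ref{thm:primaltd} arises by reducing bounded \emph{primal} treedepth to a multi-stage stochastic ILP (Theorem~\ref{thm:multistage}); the two reductions are dual to one another but cannot be merged by simply transposing the matrix, since transposing changes the optimization problem itself. Write $d := \td_D(A)$. The first step is to compute an elimination forest $F$ of the dual graph $G_D(A)$ of height at most $d$. Since deciding and constructing a treedepth decomposition of depth $d$ is fixed-parameter tractable in $d$, and since this computation inspects only the sparsity pattern of $A$ and is oblivious to $\l \vew,\veb,\vel,\veu\r$, it costs $g(d)$ arithmetic operations and is strongly polynomial. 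Recall that the nodes of $F$ are precisely the rows of $A$.

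The structural heart of the argument is the following observation. For any column $j$ of $A$, the set of rows in which $j$ is nonzero forms a clique of $G_D(A)$: any two such rows share the nonzero entry in column $j$ and are therefore adjacent by definition of the dual graph. Because a clique of any graph must lie on a single root-to-leaf path of every elimination forest of that graph, the support of each column of $A$ is contained in one root-to-leaf path of $F$. This is exactly the sparsity pattern underlying a tree-fold matrix, in which the columns of each leaf (brick) are nonzero only in the rows lying on that leaf's root-to-leaf path.

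Exploiting this, I would assemble the tree-fold instance as follows. Join the trees of $F$ under one new root and, by inserting dummy nodes that carry zero rows, extend every root-to-leaf path to the common length $d$, leaving the height at $d+O(1)$ and making all leaves equidistant from the root. For each column $j$ let $v_j$ be the deepest node of $F$ in the support of $j$, and assign $j$ to an arbitrary leaf below $v_j$. Since the support of $j$ consists of ancestors of $v_j$, the column is then zero outside the root-to-leaf path of its assigned leaf, as the tree-fold pattern demands. Collecting the columns assigned to each leaf yields the bricks, and padding the bricks with zero columns and the levels with zero rows makes the block dimensions uniform across the tree; the padded columns are fixed to $0$ via $\vel=\veu=\vezero$ and the padded rows carry right-hand side $0$, so feasibility and optimal value are preserved under the induced permutation of rows and columns. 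Because every node of $F$ is a single row, each level contributes $r_i = 1$, so the resulting instance is a tree-fold ILP (in the sense of Section~\ref{subsec:multistage_treefold}) with $\tau = d+O(1)$, $r_1 + \cdots + r_\tau = O(\td_D(A))$, unchanged $\|A\|_\infty$, and block length $t \le n$. Theorem~\ref{thm:treefold} then solves it in strongly \FPT time parameterized by $\sum_i r_i + \|A\|_\infty = O(\td_D(A)) + \|A\|_\infty$, which yields the claim.

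The step I expect to be the main obstacle is the faithful matching of the constructed matrix to the formal tree-fold definition: one must verify that routing each column below $v_j$ really does respect the tree-fold pattern for blocks of the prescribed dimensions, and that the zero-padding required to equalize block sizes inflates neither the parameter $\sum_i r_i$ nor $\|A\|_\infty$ (the block length $t$, which may grow with $n$, is deliberately kept out of the parameter). Checking that all of this preprocessing is purely combinatorial — so that the strong polynomiality of Theorem~\ref{thm:treefold} is inherited rather than spoiled by operations on the numeric data — is then routine.
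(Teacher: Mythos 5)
Your high-level strategy is the paper's: build an elimination forest of $G_D(A)$, use the fact that the support of every column is a clique in $G_D(A)$ and hence lies on a single root-to-leaf path, equalize leaf depths with dummy zero rows, and reduce to a tree-fold ILP to which Theorem~\ref{thm:treefold} applies. However, there is a genuine gap at the step you yourself flag as the main obstacle: the matrix you assemble is \emph{not} a tree-fold matrix in the sense of Section~\ref{subsec:multistage_treefold}. A tree-fold matrix $T^D(A_1,\dots,A_\tau)$ is highly uniform: the \emph{same} block $A_s$ appears at every vertex of the tree at the corresponding depth, so every brick must present the same ordered list of column vectors (restricted to its root-leaf path). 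In your construction the brick of a leaf is the set of actual columns of $A$ assigned to it, and these column sets differ from leaf to leaf both in content and in order; padding with zero columns equalizes the \emph{dimensions} of the blocks but not their \emph{entries}, so no single choice of $A_1,\dots,A_\tau$ reproduces your matrix. For the same reason the claim that each level contributes $r_i=1$ is unjustified: two distinct vertices at the same depth of the elimination forest are two distinct rows of $A$, yet the tree-fold format forces them to be copies of one and the same row block $A_s$.

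The paper closes exactly this gap in Lemma~\ref{lem:dualtd_uniformization} (mirroring Lemma~\ref{lem:primaltd_uniformization}) with a uniformization step you are missing. After regularizing the forest so that branching occurs only at prescribed depths (``frets''), it builds a \emph{universal} block from \emph{all} $(2a+1)^{\tau^2}$ possible columns with entries in $[-a,a]$, each repeated with the maximum multiplicity with which it occurs along any single root-leaf path; the actual columns of $A$ are injected into this universal column list, and the unused copies are deactivated by setting their lower and upper bounds to $0$, yielding an extended formulation rather than a mere permutation of $A$. This is where the parameter $\|A\|_\infty$ enters the construction (the universal block has $(2a+1)^{\tau^2}$ distinct columns), and it yields blocks with $r_i=\tau$ and $\sum_i r_i=\tau^2$, still a function of $\td_D(A)$ alone, with the block length $t\le n$ kept out of the parameter as you intended. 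Without some such uniformization (or an argument that Theorem~\ref{thm:treefold} survives non-uniform blocks, which is not what that theorem states), your reduction does not go through.
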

We emphasize that the parameterizations cannot be relaxed neither from treedepth to treewidth, nor by removing the parameter $\|A\|_\infty$: \eqref{IP} is \NPh already on instances with $\tw_P(A) = 3$ and $\|A\|_\infty = 2$~\cite[Thm 12]{GO}, and it is strongly \Wh{1} parameterized by $\td_P(A)$ alone~\cite[Thm 11]{GO}; the fact that a problem is \Wh{1} is strong evidence that it is not \FPT.
Similarly, deciding feasibility is \NPh on instances with $\tw_D(A) = 3$ and $\|A\|_\infty = 2$ (Lemma~\ref{lem:dualtd_nph}) and strongly \Wh{1} parameterized by $\td_D(A)$ alone~\cite[Thm 5]{KnopKM:2017}.

\begin{table}[tb]
  \centering
  \resizebox{\textwidth}{!}{%
    \begin{tabular}{llr}
      \toprule
      Type of instance             & Previous best run time & Our result\\
      \toprule
      $n$-fold ILP          & $a^{O(rst + st^2)} n^3 L$~\cite{HOR} &  \\
      \cmidrule{1-2}
      $n$-fold ILP          & $t^{O(r)} (ar)^{r^2} n^3 L$ if $A_2 = (1~1~\cdots~1)$~\cite{KnopKM:2017b} & $a^{O(r^2s + sr^2)} (nt)^3 + \mathcal{L}(\l A \r)$~Thm~\ref{thm:nfolds} \\
      \cmidrule{1-2}
      $n$-fold ILP          & $n^{f_1(a,r,s,t)}$~\cite{DeLoeraHL:2015} & \\
      \midrule
      tree-fold ILP       & $f_{\textrm{tf}'}(a,n_1,\dots,n_{\tau},t) n^3 L$~\cite{MC} & $f_{\textrm{tf}}(a,n_1,\dots,n_{\tau}) (nt)^3 + \mathcal{L}(\l A \r)$~Thm~\ref{thm:treefold}\\
      \midrule
      Multi-stage stochastic ILP & $f_{\textrm{mss}}(a, n_1, \dots, n_{\tau}, l) n^3 L$~\cite{AH} & $f_{\textrm{mss}}(a, n_1, \dots, n_{\tau}, l) n^3 + \mathcal{L}(\l A \r)$~Thm~\ref{thm:multistage} \\
	  \midrule
      Bounded dual treedepth & Open whether fixed-parameter tractable & $f_D(a,\td_D(A) (nt)^3 + \mathcal{L}(\l A \r) $~Thm~\ref{thm:dualtd} \\
      \midrule
      Bounded primal treedepth & $f_{P'}(a, \|\veb\|_\infty, \td_P(A)) n L$~\cite{GO} & $f_P(a, \td_P(A)) n^3 + \mathcal{L}(\l A \r)$~Thm~\ref{thm:primaltd} \\
      \bottomrule
    \end{tabular}}
    \caption{Run time improvements implied by this paper. We denote by $L$ the binary length of the numeric input $\veb, \vel, \veu, \vew$, i.e., $L=\l \veb, \vel, \veu, \vew \r$, and consider $\l A \r$ to be part of the arithmetic input. We denote by $a = \max \{2, \|A\|_\infty\}$,$r,s,t$ are relevant block dimensions (cf. Section~\ref{sec:oracles:prelims}), and $\mathcal{L}(\l A \r)$ is the runtime of a strongly polynomial LP algorithm~\cite{Tar}.}
    \label{tab:results}
\end{table}

\subsection{Interpretation of Results}
We believe our approach also leads to several novel insights.
First, we make it clear that the central question is finding Graver-best oracles; provided these oracles, Theorem~\ref{thm:oracle} shows that tasks such as optimization and finding initial solutions can be handled under very mild assumptions.
Even though we show these tasks are routine, they have been reimplemented repeatedly~\cite{MC,HKW,HOR,KnopKM:2017b}.

Second, we show that the special classes of highly uniform block structured ILPs, namely multi-stage stochastic and tree-fold ILPs, are in some sense \emph{universal} for all ILPs of bounded primal or dual treedepth, respectively.
Specifically, we show that any ILP with bounded primal or dual treedepth can be embedded in an equivalent multi-stage stochastic or tree-fold ILP, respectively~(Lemmas~\ref{lem:primaltd_uniformization} and~\ref{lem:dualtd_uniformization}).

Third, we show that, besides bounded primal or dual treedepth, the crucial property for efficiency is the existence of augmenting steps with bounded $\ell_\infty$- or $\ell_1$-norms, respectively~(Lemmas~\ref{lem:multistage_norm} and~\ref{lem:treefold_norm}).
This suggests that for ILPs whose primal or dual graph is somehow ``sparse'' and ``shallow'', finding augmenting steps of bounded $\ell_\infty$- or $\ell_1$-norm might be both sufficient for reaching the optimum and computationally efficient.

\subsection{Related Work}
We have already covered all relevant work regarding strongly polynomial algorithms for ILP.

Let us focus on structural parameterizations.
It follows from Freuder's algorithm~\cite{Freu} and was reproven by Jansen and Kratsch~\cite{JK} that~\eqref{IP} is \FPT parameterized by $\tw_P(A)$ and the largest domain $\|\veu-\vel\|_\infty$.
Regarding the dual graph $G_D(A)$, the parameters $\td_D(A)$ and $\tw_D(A)$ were only recently considered by Ganian et al.~\cite{GOR}.
They show that even deciding feasibility of~\eqref{IP} is \NPh on instances with $\tw_I(A) = 3$ ($\tw_I(A)$ denotes the treewidth of the \emph{incidence graph}; $\tw_I(A) \leq \tw_D(A) + 1$ always holds) and $\|A\|_\infty = 2$~\cite[Theorem 12]{GOR}.
Furthermore, they show that~\eqref{IP} is \FPT parameterized by $\tw_I(A)$ and parameter $\Gamma$, which is an upper bound on any prefix sum of $A \vex$ for any feasible solution $\vex$.

Dvořák et al~\cite{DvorakEGKO:2017} introduce the  parameter fracture number; having a bounded \emph{variable fracture number} $\mathfrak{p}^V(A)$ implies that deleting a few columns of $A$ breaks it into independent blocks of small size; similarly for \emph{constraint fracture number} $\mathfrak{p}^C(A)$ and deleting a few rows.
Because bounded $\mathfrak{p}^V(A)$ implies bounded $\td_P(A)$ and bounded $\mathfrak{p}^C(A)$ implies bounded $\td_D(A)$, our results generalize theirs.
The remaining case of \emph{mixed fracture number} $\mathfrak{p}(A)$, where deleting both rows and columns is allowed, reduces to the $4$-block $n$-fold ILP problem, which is not known to be either \FPT or \Wh{1}.
Because bounded $\mathfrak{p}(A)$ implies bounded $\td_I(A)$, ILP parameterized by $\td_I(A) + \|A\|_\infty$ is at least as hard as $4$-block $n$-fold ILP, highlighting its status as an important open problem.

\subparagraph{Organization.}
The paper contains three main parts.
In Section~\ref{sec:algo}, we provide the proof of Theorem~\ref{thm:oracle}, showing the existence of a strongly polynomial algorithm whenever a Graver-best oracle is provided.
Then, in Section~\ref{sec:multistage_treefold}, we provide Graver-best oracles for multi-stage stochastic and tree-fold ILPs and discuss $n$-fold ILP, and prove Theorems~\ref{thm:multistage}, \ref{thm:treefold} and~\ref{thm:nfolds}.
Finally, in Section~\ref{sec:treedepth} we show how to embed any instance of bounded primal or dual treedepth into a multi-stage stochastic or tree-fold ILP without increasing the relevant parameters, proving Theorems~\ref{thm:primaltd} and~\ref{thm:dualtd}.
\ifthenelse{\value{Accumulate} = 1}
{Due to space restrictions the proofs of our technical statement and other supplementary material are moved to the Appendix.
The statements with proofs presented in the Appendix are marked
with~~\APXmark.}
{}

\section{The Graver-best Oracle Algorithm} \label{sec:algo}
\subsection{Preliminaries} \label{sec:algo:prelims}
For positive integers $m,n$, $m \leq n$, we set $[m,n] = \{m,\ldots, n\}$ and $[n] = [1,n]$.
We write vectors in boldface (e.g., $\vex, \vey$) and their entries in normal font (e.g., the $i$-th entry of~$\vex$ is~$x_i$).
If~$A$ is a matrix, $A_r$ denotes its $r$-th column.
For an integer $a \in \Z$, we denote by $\l a \r = 1 + \log_2 a$ the binary encoding length of $a$; we extend this notation to vectors, matrices and tuples of these objects.
For example, $\l A, \veb \r = \l A \r + \l \veb \r$, and $\l A \r = \sum_{i,j} \l a_{ij} \r$.
For a graph~$G$ we denote by $V(G)$ its set of vertices.
%


\subparagraph{Graver bases and augmentation.}
Let us now introduce Graver bases and discuss how they are used for optimization.
We define a partial order $\sqsubseteq$ on $\R^n$ as follows: for $\vex,\vey\in\R^n$ we write $\vex\sqsubseteq \vey$ and say that $\vex$ is
{\em conformal} to $\vey$ if $x_iy_i\geq 0$ (that is, $\vex$ and $\vey$ lie in the same orthant) and $|x_i|\leq |y_i|$ for $i \in [n]$.
It is well known that every subset of $\Z^n$ has finitely many
$\sqsubseteq$-minimal elements.

\begin{definition}[Graver basis]\label{def:graver}
The {\em Graver basis} of an integer $m\times n$ matrix $A$ is the finite set $\G(A)\subset\Z^n$ of $\sqsubseteq$-minimal elements in $\{\vex\in\Z^n\,:\, A\vex=0,\ \vex\neq 0\}$.
\end{definition}

We say that $\vex$ is \emph{feasible} for~\eqref{IP} if $A\vex = \veb$ and $\vel \leq \vex \leq \veu$.
Let $\vex$ be a feasible solution for~\eqref{IP}.
We call $\veg$ a \emph{feasible step} if $\vex + \veg$ is feasible for~\eqref{IP}.
Further, call a feasible step $\veg$ \emph{augmenting} if $\vew(\vex+\veg) < \vew(\vex)$.
An augmenting step $\veg$ and a \emph{step length} $\alpha \in \Z$ form an \emph{$\vex$\hy{}feasible step pair} with respect to a feasible solution $\vex$ if $\vel \le \vex + \alpha\veg \le \veu$.
An augmenting step $\veh$ is a \emph{Graver-best step} for $\vex$ if $\vew(\vex + \veh) \leq \vew(\vex + \lambda \veg)$ for all $\vex$\hy{}feasible step pairs $(\veg,\lambda) \in \G(A)\times \Z$.
The \emph{Graver\hy{}best augmentation procedure} for~\eqref{IP} with a given feasible solution $\vex_0$ works as follows:
\begin{enumerate}
  \item \label{step1}If there is no Graver\hy{}best step for $\vex_0$, return it as optimal.
  \item If a Graver\hy{}best step $\veh$ for $\vex_0$ exists, set $\vex_0 := \vex_0 + \veh$ and go to \ref{step1}. 
\end{enumerate}
\begin{proposition}[{\cite[Lemma 3.10]{Onn}}]
\label{prop:graverbest}
  Given a feasible solution~$\vex_0$ for~\eqref{IP}, the Graver\hy{}best augmentation procedure finds an optimum of~\eqref{IP} in at most $(2n-2) \log F$ steps, where $F = \vew\vex_0 - \vew\vex^*$ and $\vex^*$ is any minimizer of $\vew \vex$.
\end{proposition}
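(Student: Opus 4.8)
The plan is to prove that each iteration of the procedure contracts the optimality gap by a fixed multiplicative factor, and then to convert this geometric decay into the stated bound on the number of steps using integrality of the objective. Fix a feasible iterate $\vex$ and set $g := \vew\vex - \vew\vex^* \ge 0$ for its current gap to the optimum $\vex^*$.

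The one ingredient I need beyond the definitions already given is the conformal decomposition (``positive-sum'') property of Graver bases together with its integer Carath\'eodory bound, which I would cite from~\cite{Onn} rather than reprove. Since $\vex$ and $\vex^*$ are both feasible, $\vex^*-\vex \in \ker_{\Z}(A)$, so this property yields distinct elements $\veh_1,\dots,\veh_m \in \G(A)$, positive integers $\mu_1,\dots,\mu_m$, and $m \le 2n-2$ with $\vex^*-\vex = \sum_{j=1}^m \mu_j\veh_j$, where every summand $\mu_j\veh_j$ is conformal to $\vex^*-\vex$. This decomposition is the technical heart of the argument; everything after it is routine.

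Next I would verify that each $(\veh_j,\mu_j)$ is an $\vex$-feasible step pair and extract an augmenting one. Because all summands lie in the orthant of $\vex^*-\vex$ and add up to it, each satisfies $\mu_j\veh_j \sqsubseteq \vex^*-\vex$, so a coordinatewise sign check gives $\vel \le \vex + \mu_j\veh_j \le \veu$ (each coordinate of $\vex+\mu_j\veh_j$ lies between the corresponding coordinates of $\vex$ and $\vex^*$). Applying $\vew$ to the decomposition gives $g = \sum_{j=1}^m \mu_j(-\vew\veh_j)$, so by averaging some index $j$ satisfies $\mu_j(-\vew\veh_j) \ge g/m \ge g/(2n-2)$; in particular $\veh_j$ is augmenting, so whenever $g>0$ a Graver-best step exists and the procedure halts only at an optimum. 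By the defining property of a Graver-best step $\veh$, we get $\vew(\vex+\veh) \le \vew(\vex+\mu_j\veh_j) = \vew\vex - \mu_j(-\vew\veh_j) \le \vew\vex - g/(2n-2)$, so the gap after the step is at most $g\bigl(1-\tfrac{1}{2n-2}\bigr)$.

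Finally I would iterate this contraction: starting from gap $F$, after $t$ steps the gap is at most $F\bigl(1-\tfrac{1}{2n-2}\bigr)^t \le F\,e^{-t/(2n-2)}$. Since $\vew$ and every iterate are integral, the gap is a nonnegative integer, hence it equals $0$ as soon as it drops below $1$, which occurs once $F\,e^{-t/(2n-2)} < 1$, i.e. within $(2n-2)\log F$ steps. The only real obstacle is the conformal Carath\'eodory decomposition with the explicit count $m \le 2n-2$; granting that, the feasibility check, the averaging bound, and the geometric bookkeeping are all short.
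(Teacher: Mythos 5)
Your proof is correct and is essentially the standard argument behind the cited result (the paper itself only imports Proposition~\ref{prop:graverbest} from~\cite{Onn} without reproving it): conformal decomposition of $\vex^*-\vex$ into at most $2n-2$ Graver steps, averaging to find one step capturing a $\tfrac{1}{2n-2}$ fraction of the gap, and geometric decay plus integrality of $\vew\vex$. The only external ingredient is the positive-sum (integer Carath\'eodory) property with the $2n-2$ bound, which you correctly identify and appropriately cite rather than reprove.
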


\begin{definition}[Graver-best oracle] \label{def:gb_oracle}
A {\em Graver-best oracle} for an integer matrix $A$ is one that, queried on $\vew,\veb,\vel,\veu$
and $\vex$ feasible to~\eqref{IP}, returns a Graver-best step $\veh$ for $\vex$.
\end{definition}

\subsection{The Algorithm}
It follows from Proposition~\ref{prop:graverbest} that given a Graver-best oracle, problem \eqref{IP} can be solved in time which is polynomial in the binary encoding length $\l A,\vew,\veb,\vel,\veu \r$
of the input. We now show that, in fact, given such an oracle, the problem admits a
{\em strongly} polynomial algorithm. In the next theorem the input has only arithmetic and numeric parts and no parametric part.

\begin{reptheorem}{thm:oracle}
The problem \eqref{IP} with arithmetic input $\l A\r$ and numeric input $\l \vew,\veb,\vel,\veu \r$, endowed with a Graver-best oracle for $A$, is solvable by a strongly polynomial oracle algorithm.
\end{reptheorem}

\begin{remark}
The partition of the input to the arithmetic input $\l A \r$ and the numeric input $\l \vew, \veb, \vel, \veu \r$ is the same as in the classical results for linear programming~\cite{FT,Tar}.
\end{remark}

\begin{proof}
The algorithm which demonstrates the theorem consists of several steps as follows.

\subparagraph{Step 1: Reducing $\veb,\vel,\veu$.}
Apply the strongly polynomial algorithm of Tardos~\cite{Tar} to the linear programming relaxation $\min\left\{\vew\vey \mid \vey\in\R^n,\, A\vey=\veb, \,\vel\leq \vey\leq \veu\right\}$; the algorithm performs $\mathcal{L}(\l A \r) = \poly(\l A \r)$ arithmetic operations.
If the relaxation is infeasible then so is \eqref{IP} and we are done. If it is unbounded then \eqref{IP}
is either infeasible or unbounded too, and in this case we set $\vew:= \mathbf{0}$ so that all solutions
are optimal, and we proceed as below and terminate at the end of step 3.
Suppose then that we obtain an optimal solution $\vey^*\in\R^n$ to the relaxation,
with round down $\lfloor \vey^*\rfloor\in\Z^n$. Let $a:=\max \{2, \|A\|_\infty \}$.
Let $\C(A)\subseteq\G(A)$ be the set of {\em circuits} of $A$, which are those
$\vecc\in\G(A)$ with support which is a circuit of the linear matroid of $A$.
Let $c_\infty:=\max_{\vecc\in\C(A)}\|\vecc\|_\infty$.
We have $c_\infty\leq n^{n\over2}a^n$~\cite[Lemma 3.18]{Onn}.

We now use the proximity results of \cite{HKW,HS} which assert that either \eqref{IP} is
infeasible or it has an optimal solution $\vex^*$ with $\|\vex^*-\vey^*\|_\infty\leq nc_\infty$ and hence $\|\vex^*-\lfloor \vey^*\rfloor\|_\infty\leq n^{{n\over2}+1}a^n+1$. Thus, making the variable transformation $\vex=\vez+\lfloor \vey^*\rfloor$, problem \eqref{IP} reduces to following,
\begin{equation*}
\min\,\left\{\vew(\vez+\lfloor \vey^*\rfloor) \mid \vez\in\Z^n\,,\ A(\vez+\lfloor \vey^*\rfloor)=\veb
\,,\ \vel\leq \vez+\lfloor \vey^*\rfloor\leq \veu\,,\ \|\vez\|_\infty\leq n^{{n\over2}+1}a^n+1\right\},
\end{equation*}
which is equivalent to the program
\begin{equation}\label{IP1}
\min\,\left\{\vew\vez \mid \vez\in\Z^n\,,\ A\vez=\bar \veb\,,\ \bar \vel\leq \vez\leq \bar \veu\right\}
\end{equation}
where
\[\bar \veb:=\veb-A\lfloor \vey^*\rfloor,\ \
\bar l_i:=\max\{l_i-\lfloor y^*_i\rfloor,-(n^{{n\over2}+1}a^n+1)\},\ \
\bar u_i:=\min\{u_i-\lfloor y^*_i\rfloor,n^{{n\over2}+1}a^n+1\}\ \enspace .\]

If some $\bar l_i>\bar u_i$ then \eqref{IP1} is infeasible
and hence so is \eqref{IP}, so we may assume that
\[-(n^{{n\over2}+1}a^n+1)\leq\bar l_i\leq\bar u_i\leq n^{{n\over2}+1}a^n+1, \quad \mbox{for all }i \enspace .\]

This implies that if $\vez$ is any feasible point in \eqref{IP1} then
$\|A\vez\|_\infty\leq na(n^{{n\over2}+1}a^n+1)$ and so we may assume that
$\|\bar \veb\|_\infty\leq na(n^{{n\over2}+1}a^n+1)$ else there is no feasible solution.
So we have
\[\|\bar \veb\|_\infty,\|\bar \vel\|_\infty,\|\bar \veu\|_\infty \leq 2^{O(n\log n)}a^{O(n)}
\ \mbox{and hence}\ \l\bar \veb,\bar \vel,\bar \veu\r\ \mbox{is polynomial in}\ \l A\r\ \enspace .\]

\subparagraph{Step 2: Solving the system of equations.}
We first search for an integer solution to the system of equations $A\vez=\bar \veb$. This can be done by computing the Hermite normal form of $A$, see \cite{Sch}, using a number of arithmetic operations polynomial in $\l A\r$ and time polynomial in $\l A,\bar \veb\r$ which is polynomial in $\l A\r$, and hence strongly
polynomially in our original input. Then either we conclude that there is no integer solution
to $A\vez=\bar \veb$ and hence \eqref{IP1} is infeasible, or we find a solution $\vez\in\Z^n$ with
$\l \vez\r$ polynomially bounded in $\l A,\bar \veb\r$ and hence also in $\l A\r$.

\subparagraph{Step 3: Finding a feasible point.}
Define relaxed bounds by
\[
{\hat l}_i:=\min\{\bar l_i,z_i\},\ \ {\hat u}_i:=\max\{\bar u_i,z_i\},\quad i \in [n] \enspace .
\]
Now for $i \in [n]$ iterate the following. If $\bar l_i\leq z_i\leq \bar u_i$ then simply
increment $i$ and repeat. If $z_i<\bar l_i$ (and hence $\hat l_i=z_i$ and $\hat u_i=\bar u_i$) then consider the following auxiliary integer program,
\begin{equation}\label{IP2}
\max\,\left\{x_i \mid \vex\in\Z^n\,,\ A\vex=\bar \veb\,,\ \hat \vel\leq \vex\leq \hat \veu\right\}.
\end{equation}
Starting from the point $\vez$ feasible in \eqref{IP2}, and using the Graver-best oracle for $A$, we can solve program \eqref{IP2} using Proposition~\ref{prop:graverbest} in polynomial time and in a number of arithmetic operations and oracle queries which is polynomial in $n$ and $\log F$ (recall $F = z_i^* - z_i$ for some minimizer $z_i^*$), which is bounded by
$\log(\hat u_i-\hat l_i)=\log(\bar u_i-z_i)$, thus polynomial in $\l A\r$.

Let $\vex$ be an optimal solution of \eqref{IP2}. If $x_i<\bar l_i$ then \eqref{IP1}
is infeasible and we are done. Otherwise (in which case $\bar l_i\leq x_i\leq \bar u_i$)
we update $\hat l_i:=\bar l_i$ and $\vez:=\vex$, increment $i$ and repeat. The last case $z_i>\bar u_i$ is treated similarly where in \eqref{IP2} we minimize rather than maximize $x_i$.

Thus, strongly polynomially we either conclude at some iteration $i$
that program \eqref{IP1} is infeasible or complete all iterations and
obtain $\hat \vel=\bar \vel$, $\hat \veu=\bar \veu$, and a point $\vez$ feasible in \eqref{IP1}.

\subparagraph{Step 4: Reducing $\vew$.}
Let $N:=2n(n^{{n\over2}+1}a^n+1)+1$.
Now apply the strongly polynomial algorithm of Frank and Tardos~\cite{FT}, which on arithmetic input $n,\l N\r$ and numeric input $\l \vew\r$, outputs $\bar \vew\in\Z^n$ with $\|\bar \vew\|_\infty \leq 2^{O(n^3)}N^{O(n^2)}$ such that $\sign(\vew\vex)=\sign(\bar \vew\vex)$ for all $\vex\in\Z^n$ with $\|\vex\|_1<N$.
Since $\l N\r=O(\log N)=O(n\log n+ n\log a)$ is polynomial in $\l A\r$, this algorithm is also strongly polynomial in our original input. Now, for every two points $\vex,\vez$ feasible in \eqref{IP1} we have $\|\vex-\vez\|_1<2n(n^{{n\over2}+1}a^n+1)+1=N$, so that for
any two such points we have $\vew\vex\leq \vew\vez$ if and only if $\bar \vew\vex\leq \bar \vew\vez$,
and therefore we can replace \eqref{IP1} by the equivalent program
\begin{equation}\label{IP3}
\min\,\left\{\bar \vew\vez\ :\ \vez\in\Z^n\,,\ A\vez=\bar \veb\,,\ \bar \vel\leq \vez\leq \bar \veu\right\},
\end{equation}
where
\[
\|\bar \vew\|_\infty=2^{O(n^3\log n)}a^{O(n^3)}
\ \mbox{and hence}\ \l\bar \vew,\bar \veb,\bar \vel,\bar \veu\r\ \mbox{is polynomial in}\ \l A\r\ \enspace .
\]

\subparagraph{Step 5: Finding an optimal solution.}
Starting from the point $\vez$ which is feasible in \eqref{IP3}, and using the Graver-best oracle for $A$, we can solve program \eqref{IP3} using again Proposition~\ref{prop:graverbest} in polynomial time and in a number of arithmetic operations and oracle queries which is polynomial in $n$ and in $\log F$, which is bounded by
$\log\left(n\|\bar \vew\|_\infty\|\bar \veu-\bar \vel\|_\infty\right)$,
which is polynomial in $\l A\r$, and hence strongly polynomially.
\epr

\begin{remark}
In fact, the reduced objective $\bar{\vew}$ in step 4 need not be constructed: already its \emph{existence} implies that~\eqref{IP1} is solved in the same number of iterations as~\eqref{IP3}.
\end{remark}

\section{Multi-stage Stochastic and Tree-fold ILP} \label{sec:multistage_treefold}
\begin{accumulate}
\ifthenelse{\value{Accumulate} = 1}{
\section{Additions to Section~\ref{sec:multistage_treefold}}
}{}\end{accumulate}
In this section we prove Theorems~\ref{thm:multistage} and~\ref{thm:treefold}.
We first formalize a common construction for a Graver-best oracle: one constructs a set of relevant step lenghts $\Lambda$ and then for each $\lambda \in \Lambda$ finds a $\lambda$-Graver-best step.
A step with the best improvement among these is then guaranteed to be a Graver-best step.
Thus, we reduce our task to constructing a $\Lambda$-Graver-best oracle.

Both algorithms for multi-stage stochastic ILP and tree-fold ILP follow the same pattern:
\begin{enumerate}
\item show that all elements of $\G(A)$ have bounded norms ($\ell_\infty$ and $\ell_1$, respectively),
\item show that $A$ has bounded treewidth (primal and dual, respectively),
\item apply existing algorithms for~\eqref{IP} which are \FPT parameterized by $\|A\|_\infty$, $\max \|\vex\|_\infty$ and $\max \|\vex\|_1$, and $\tw_P(A)$ and $\tw_D(A)$, respectively.
\end{enumerate}

\subsection{Preliminaries} \label{sec:oracles:prelims}
\subsubsection{Relevant Step Lengths}
We say that $\veh \in \{\vex \in \Z^n\mid A \vex = \mathbf{0} \}$ is a \emph{$\lambda$-Graver-best step} if $\lambda \veh$ is a feasible step and $\lambda \vew \veh \leq \lambda \vew \veg$ for any $\veg \in \G(A)$ such that $\lambda \veg$ is a feasible step.
We denote by $g_1(A) = \max_{\veg \in \G(A)} \|\veg\|_1$ and $g_{\infty}(A) = \max_{\veg \in \G(A)} \|\veg\|_\infty$.
The following lemma states that provided a bound on $g_{\infty}(A)$, in order to find a Graver-best step, it is sufficient to find a $\lambda$-Graver-best step for all $\lambda \in \Lambda$ for some not too large set $\Lambda$.

\begin{definition}[Graver-best step-lengths]
Let $\vex$ be a feasible solution to \eqref{IP}.
We say that $\lambda \in \N$ is a \emph{Graver-best step-length for $\vex$} if there exists $\veg \in \G(A)$ with $\vex + \lambda \veg$ feasible, such that $\forall \lambda' \in \N$ and $\forall \veg' \in \G(A)$, $\vex + \lambda' \veg'$ is either infeasible or $\vew(\vex + \lambda \veg) \leq \vew(\vex + \lambda' \veg')$.
We denote by $\Lambda(\vex) \subseteq \N$ the set of Graver-best step-lengths for $\vex$.
\end{definition}

\begin{lemma}[Polynomial $\Lambda \supseteq \Lambda(\vex)$] \APXmark
\label{lem:gamma_construction}
Let $\vex$ be a feasible solution to \eqref{IP}, let $M \in \N$ be such that $g_\infty(A) \leq M$.
Then it is possible to construct in time $O(Mn)$ a set $\Lambda \subseteq \N$ of size at most $2Mn$ such that $\Lambda(\vex) \subseteq \Lambda$.
\end{lemma}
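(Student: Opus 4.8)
The plan is to build $\Lambda$ as the collection of all \emph{coordinate-wise maximal feasible step-lengths} and to show that every element of $\Lambda(\vex)$ is forced to be one of these. Fix $\veg \in \G(A)$. Since $A\veg = \mathbf{0}$, the point $\vex + \lambda\veg$ satisfies the equality constraints for every $\lambda$, so its feasibility is governed solely by the box constraints $\vel \le \vex + \lambda\veg \le \veu$. For each coordinate $i$ with $g_i \neq 0$ this gives an upper bound on $\lambda$: if $g_i > 0$ we need $\lambda \le (u_i - x_i)/g_i$, and if $g_i < 0$ we need $\lambda \le (x_i - l_i)/|g_i|$. Hence the feasible positive multiples of $\veg$ form an integer interval $\{1,\dots,\lambda_{\max}(\veg)\}$, where
\[
\lambda_{\max}(\veg) = \min_{i\,:\,g_i \neq 0} \left\lfloor \frac{s_i(\veg)}{|g_i|} \right\rfloor,
\qquad
s_i(\veg) = \begin{cases} u_i - x_i & \text{if } g_i > 0,\\ x_i - l_i & \text{if } g_i < 0. \end{cases}
\]
(If every nonzero coordinate of $\veg$ has an infinite bound in the relevant direction then $\veg$ spans an unbounded ray, a case excluded by boundedness of the instance.)

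The key observation is that the objective is affine along a fixed direction: $\vew(\vex + \lambda\veg) = \vew\vex + \lambda(\vew\veg)$, hence monotone in $\lambda$. Consequently, among all feasible multiples of an \emph{augmenting} direction $\veg$ (one with $\vew\veg < 0$) the smallest objective is attained at $\lambda = \lambda_{\max}(\veg)$. Therefore, whenever $\vex$ is not already optimal, any $\lambda \in \Lambda(\vex)$ is witnessed by an augmenting $\veg$ and must equal $\lambda_{\max}(\veg)$; directions with $\vew\veg \ge 0$ realize the global minimum only when $\vex$ is optimal, in which case the augmentation procedure simply reports optimality, so it suffices that $\Lambda$ capture the relevant step-lengths whenever a strict improvement exists.

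It remains to bound the number of distinct values $\lambda_{\max}(\veg)$ can take. The minimum defining $\lambda_{\max}(\veg)$ is attained at some binding coordinate $i^*$, so $\lambda_{\max}(\veg) = \lfloor s_{i^*}(\veg)/|g_{i^*}| \rfloor$. Since $\|\veg\|_\infty \le g_\infty(A) \le M$, the divisor $|g_{i^*}|$ lies in $\{1,\dots,M\}$, and $s_{i^*}(\veg)$ is one of the two quantities $u_{i^*}-x_{i^*}$ or $x_{i^*}-l_{i^*}$ determined by $\vex$ alone. Hence every such value belongs to
\[
\Lambda := \bigcup_{i \in [n]} \bigcup_{d=1}^{M} \left\{ \left\lfloor \tfrac{u_i - x_i}{d} \right\rfloor,\ \left\lfloor \tfrac{x_i - l_i}{d} \right\rfloor \right\} \cap \N .
\]
This set has at most $2Mn$ positive integers and is assembled by $2Mn$ floor divisions, i.e. in $O(Mn)$ arithmetic operations; by the previous paragraph $\Lambda(\vex) \subseteq \Lambda$, as required.

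The main obstacle is really the single conceptual point in the second paragraph: reducing ``best over all feasible multiples'' to ``maximal feasible multiple,'' which rests entirely on the affineness of the objective along a Graver direction (together with the tacit restriction to augmenting directions). Once that is in place, the norm bound $\|\veg\|_\infty \le M$ does the rest, since it forces $\lambda_{\max}(\veg)$ to be a floor of a fixed slack by a divisor in $\{1,\dots,M\}$, collapsing the exponentially many directions into only $2Mn$ precomputable candidate lengths.
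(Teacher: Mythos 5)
Your proof is correct and follows essentially the same route as the paper: a Graver-best (augmenting) step must be tight in some coordinate, so its length is a floor of a coordinate slack divided by an entry of magnitude at most $M$, yielding the same $2Mn$ candidate set $\bigl\{\lfloor (u_i-x_i)/d\rfloor, \lfloor (x_i-l_i)/d\rfloor\bigr\}$. You are in fact slightly more explicit than the paper about why tightness holds (affineness of $\vew$ along $\veg$) and about the degenerate case where $\vex$ is already optimal.
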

\begin{accumulate}
\ifthenelse{\value{Accumulate} = 1}{
\begin{proof}[Proof of Lemma~\ref{lem:gamma_construction}]}
{\bpr}
Observe that if $\lambda \veg$ is a Graver-best step for $\vex$, it is tight in at least one coordinate $i \in [n]$, meaning that $l_i \leq x_i + \lambda g_i \leq u_i$ but either $x_i + (\lambda + 1) g_i < l_i$ or $x_i + (\lambda + 1) g_i > u_i$.
Thus, for every $i \in [n]$ and every $\mu \in [-M, M]$, $\mu \neq 0$, add to $\Lambda$ the unique number $\lambda \in \N$ for which $l_i \leq x_i + \lambda \mu \leq u_i$ holds but $l_i \leq x_i + (\lambda+1) \mu \leq u_i$ does not hold.
\epr
\end{accumulate}

With this $\Lambda$ at hand, in order to realize a Graver-best oracle, it suffices to realize an oracle which finds a $\lambda$-Graver-best step for a given $\lambda$:

\begin{definition}[$\Lambda$-Graver-best oracle]
A \emph{$\Lambda$-Graver-best oracle} for an integer matrix $A$ is one that, queried on $\vew, \veb, \vel, \veu$, $\vex$ feasible to~\eqref{IP}, and an integer $\lambda \in \N$, returns a $\lambda$-Graver-best step $\veh$ for $\vex$.
\end{definition}

\begin{lemma}[$\Lambda$-Graver-best oracle $\Rightarrow$ Graver-best oracle] \APXmark \label{lem:lambda_gb_oracle}
Let $A$ be an integer matrix and let $M \in \N$ satisfy $g_\infty(A) \leq M$.
Then a Graver-best oracle for $A$ can be realized with $2Mn$ calls to a $\Lambda$-Graver-best oracle for $A$.
\end{lemma}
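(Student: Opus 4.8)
The plan is to assemble a Graver-best oracle out of at most $2Mn$ queries to the $\Lambda$-Graver-best oracle. Given a query $\vew,\veb,\vel,\veu$ and a feasible $\vex$, I would first invoke Lemma~\ref{lem:gamma_construction} (its hypothesis $g_\infty(A)\le M$ is exactly our assumption) to construct, in time $O(Mn)$ and \emph{without} any oracle calls, a set $\Lambda\subseteq\N$ with $|\Lambda|\le 2Mn$ and $\Lambda(\vex)\subseteq\Lambda$. Then I would query the $\Lambda$-Graver-best oracle once for each $\lambda\in\Lambda$, obtaining a $\lambda$-Graver-best step $\veh_\lambda$, and let $\lambda^\star\in\Lambda$ minimize $\vew(\vex+\lambda\veh_\lambda)$ over $\lambda\in\Lambda$. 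If $\vew(\vex+\lambda^\star\veh_{\lambda^\star})<\vew(\vex)$ I return $\veh:=\lambda^\star\veh_{\lambda^\star}$; otherwise I report that no Graver-best step exists (i.e.\ $\vex$ is optimal). This uses exactly $|\Lambda|\le 2Mn$ oracle calls, as claimed.

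The core of the proof is verifying that $\veh=\lambda^\star\veh_{\lambda^\star}$ is a genuine Graver-best step whenever one exists. Write $\mathrm{opt}=\min\{\vew(\vex+\lambda\veg)\}$ over all $\vex$-feasible step pairs $(\veg,\lambda)\in\G(A)\times\Z$ (a minimum over a finite set, since after Step~1 of Theorem~\ref{thm:oracle} the box is bounded). Since $\G(A)$ is symmetric ($\veg\in\G(A)\iff-\veg\in\G(A)$) and $\lambda\veg=(-\lambda)(-\veg)$, I may restrict attention to $\lambda\in\N$; let $(\veg_0,\lambda_0)$ with $\lambda_0\in\N$ attain $\mathrm{opt}$. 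By definition $\lambda_0$ is then a Graver-best step-length for $\vex$, so $\lambda_0\in\Lambda(\vex)\subseteq\Lambda$. Because $\lambda_0\veg_0$ is a feasible step, the defining property of the $\lambda_0$-Graver-best step $\veh_{\lambda_0}$ gives $\vew(\vex+\lambda_0\veh_{\lambda_0})\le\vew(\vex+\lambda_0\veg_0)=\mathrm{opt}$. Minimality of $\lambda^\star$ over $\Lambda$, which contains $\lambda_0$, then yields $\vew(\vex+\lambda^\star\veh_{\lambda^\star})\le\vew(\vex+\lambda_0\veh_{\lambda_0})\le\mathrm{opt}\le\vew(\vex+\lambda\veg)$ for every $\vex$-feasible Graver step pair $(\veg,\lambda)$. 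Thus $\veh$ dominates all Graver step pairs, and if $\mathrm{opt}<\vew(\vex)$ the same chain shows $\veh$ is augmenting, making it a bona fide Graver-best step.

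It remains to justify the termination branch, namely that reporting optimality when $\vew(\vex+\lambda^\star\veh_{\lambda^\star})\ge\vew(\vex)$ is correct. For this I would invoke the standard conformal decomposition property of Graver bases: any feasible step decomposes into a $\sqsubseteq$-conformal sum of elements of $\G(A)$, each of which is itself a feasible step, so the objective cannot strictly decrease along a feasible step unless it already strictly decreases along some single Graver element. Hence $\mathrm{opt}\ge\vew(\vex)$ is equivalent to $\vex$ being optimal, and since we proved $\vew(\vex+\lambda^\star\veh_{\lambda^\star})\le\mathrm{opt}$, the two branches of the algorithm exactly separate ``augmenting step found'' from ``$\vex$ optimal.'' The one genuinely delicate point — and the step I expect to be the main obstacle — is the quantifier switch in the second paragraph: a \emph{single} returned step must beat Graver step pairs of \emph{every} length $\lambda$, whereas each oracle call only compares within one fixed length. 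The inclusion $\Lambda(\vex)\subseteq\Lambda$ is precisely what resolves this, by guaranteeing that an optimal length $\lambda_0$ is actually queried, so the globally best Graver improvement is captured among the $\veh_\lambda$.
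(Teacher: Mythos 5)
Your proposal is correct and follows essentially the same route as the paper's proof: build $\Lambda$ via Lemma~\ref{lem:gamma_construction}, query the $\Lambda$-Graver-best oracle once per $\lambda\in\Lambda$, and return the best of the resulting steps, using $\Lambda(\vex)\subseteq\Lambda$ to guarantee that the optimal step-length is among those queried. Your additional care about the sign symmetry of $\G(A)$, the quantifier switch, and the optimality/termination branch (via the test-set property) only fills in details the paper's one-line argument leaves implicit.
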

\begin{accumulate}
\ifthenelse{\value{Accumulate} = 1}{
\begin{proof}[Proof of Lemma~\ref{lem:lambda_gb_oracle}]}
{\bpr}
Construct the set $\Lambda$ by Lemma~\ref{lem:gamma_construction} and for each $\lambda \in \Lambda$, call the $\Lambda$-Graver-best oracle and denote its output $\veh_\lambda$.
Then, return $\veh_\lambda$ which minimizes $\vew \lambda \veh_\lambda$ over all $\lambda \in \Lambda$.
By Lemma~\ref{lem:gamma_construction} it is guaranteed to be Graver-best step, and since $|\Lambda| \leq 2Mn$ we have made at most this many calls of the $\Lambda$-Graver-best oracle.
\epr
\end{accumulate}

\subsubsection{Multi-stage Stochastic and Tree-fold Matrices} \label{subsec:multistage_treefold}
Let the {\em height} of a rooted tree or forest be the maximum root-to-leaf distance
in it (i.e., the number of edges along the root-to-leaf path).
For a vertex $v \in T$, let $T_v$ be the subtree of $T$ rooted in $v$ and let $\ell(v)$ denote the number of leaves of $T$ contained in $T_v$.
In the following we let $T$ be a rooted tree of height $\tau-1 \in \N$ whose all leaves are at \emph{depth} $\tau-1$, that is, the length of every root-leaf path is exactly $\tau-1$.
Let $B_1,B_2,\dots,B_{\tau}$ be a sequence of integer matrices
with each $B_s$ having $l \in \N$ rows and $n_s$ columns, where $n_s\in\N$, $n_s\geq 1$.
We shall define a \emph{multi-stage stochastic} matrix $T^P(B_1, \dots, B_{\tau})$ inductively; the superscript $P$ refers to the fact that $T^P(B_1, \dots, B_{\tau})$ has bounded \emph{primal} treedepth $\td_P$, as we will later see.

For a leaf $v \in T$, $T^P_v(B_{\tau}) := B_{\tau}$.
Let $d \in \N$, $0 \leq d \leq {\tau-2}$, and assume that for all vertices $v \in T$ at depth $d+1$, matrices $T^P_v(B_{\tau-d}, \dots, B_{\tau})$ have been defined.
For $s \in \N$, $1 \leq s \leq \tau$, we set $T^P_v(B_{[s:\tau]}) = T^P_v(B_s, \dots, B_{\tau})$.
Let $v \in T$ be a vertex at depth $d$ with $\delta$ children $v_1, \dots, v_\delta$.
We set
\[
T^P_v(B_{[\tau-d-1:\tau]}):=\left(
\begin{array}{cccc}
B_{\tau-d-1, \ell(v_1)}      & T^P_{v_1}(B_{[\tau-d:\tau]})        & \cdots & 0\\
\vdots   & \vdots    & \ddots & \vdots\\
B_{\tau-d-1, \ell(v_\delta)}            & 0       & \cdots & T^P_{v_\delta}(B_{[\tau-d:\tau]})\\
\end{array}
\right)
\]
where, for $N \in \N$, $B_{s,N}=\left(\begin{smallmatrix} B_s \\ \vdots \\ B_s\end{smallmatrix}\right)$
consists of $N$ copies of the matrix $B_s$.

The structure of a multi-stage stochastic matrix makes it natural to partition any solution of a multi-stage stochastic ILP into \emph{bricks}.
Bricks are defined inductively: for $T_v^P(B_{\tau})$ there is only one brick consisting of all coordinates; for $T_v^P(B_{[s:\tau]})$ the set of bricks is composed of all bricks for all descendants of $v$, plus the first $n_s$ coordinates form an additional brick. 

\begin{example}
For $\tau=3$ and $T$ with root $r$ of degree $2$ and its children $u$ and $v$ of degree $2$ and $3$, we have $T^P_u(B_2, B_3)=
\left(\begin{smallmatrix}
B_2 & B_3  &     \\
B_2 &      & B_3
\end{smallmatrix}\right)$,
$T^P_v(B_2, B_3)=
\left(\begin{smallmatrix}
B_2 & B_3  &    & \\
B_2 &      & B_3 & \\
B_2 &      &  & B_3
\end{smallmatrix}\right)$, and
$T^P(B_1, B_2, B_2) = T^P_r(B_1, B_2, B_2)=
\left(\begin{smallmatrix}
B_1 & B_2 & B_3 &     &     &     &    & \\
B_1 & B_2 &     & B_3 &     &     &    & \\
B_1 &     &     &     & B_2 & B_3 &    & \\
B_1 &     &     &     & B_2 &     & B_3& \\
B_1 &     &     &     & B_2 &     &    & B_3 \\
\end{smallmatrix}\right)$, with a total of $8$ bricks.
\end{example}


\emph{Tree-fold} matrices are essentially transposes of multi-stage stochastic ILP matrices.
Let $T$ be as before and $A_1, \dots, A_{\tau}$ be a sequence of integer matrices with each $A_s \in \Z^{r_s \times t}$, where $t \in \N$, $r_s \in \N$, $r_s \geq 1$.
We shall define $T^D(A_1, \dots, A_{\tau})$ inductively; the superscript $D$ refers to the fact that $T^D(A_1, \dots, A_{\tau})$ has bounded \emph{dual} treedepth.
The inductive definition is the same as before except that, for a vertex $v \in T$ at depth $d$ with $\delta$ children $v_1, \dots, v_\delta$,
we set
\[
T^D_v(A_{[\tau-d-1:\tau]}):=\left(
\begin{array}{ccccc}
A_{\tau-d-1, \ell(v_1)}      & A_{\tau-d-1, \ell(v_2)}  & \cdots & A_{\tau-d-1, \ell(v_\delta)}   \\
T^D_{v_1}(A_{[\tau-d:\tau]})      & 0       & \cdots & 0\\
0 & T^D_{v_2}(A_{[\tau-d:\tau]})           & \cdots & 0\\
\vdots   & \vdots   & \ddots & \vdots\\
0      & 0 & \cdots & T^D_{v_\delta}(A_{[\tau-d:\tau]})      \\
\end{array}
\right)
\]
where, for $N \in \N$, $A_{s,N}=\left( A_s \\ \cdots \\ A_s\right)$ consists of $N$ copies of the matrix $A_s$.
A solution $\vex$ of a tree-fold ILP is partitioned into bricks $(\vex^1, \dots, \vex^n)$ where $n$ is the number of leaves of $T$, and each $\vex^i$ is a $t$-dimensional vector.

\subsubsection{Structural Parameters}
We consider two graph parameters, namely \emph{treewidth} $\tw(G)$ and \emph{treedepth} $\td(G)$.
\ifthenelse{\value{Accumulate} = 1}{
We postpone the definition of treewidth to the Appendix as it is not central for us.}{}

\begin{accumulate}
\begin{definition}[Treewidth]
A \emph{tree decomposition} of a graph $G=(V,E)$ 
is a pair $(T, B)$, where $T$ is a tree and $B$ is a mapping
$B: V(T) \rightarrow 2^V$ satisfying
\begin{itemize}
	\item for any $uv \in E$, there exists $a \in V(T)$ such that
	$u, v \in B(a)$,
	\item if $v \in B(a)$ and $v \in B(b)$, then $v \in B(c)$ for all
	$c$ on the path from $a$ to $b$ in $T$.
\end{itemize}
We use the convention that the vertices of the tree are called \emph{nodes} and the sets
$B(a)$ are called \emph{bags}.
The {\em treewidth $\tw((T, B))$ of a tree decomposition} $(T, B)$ is
the size of the largest bag of $(T, B)$ minus one.
The {\em treewidth $\tw(G)$ of a graph} $G$ is the
minimum treewidth over all possible tree decompositions of $G$.
\end{definition}
\end{accumulate}

\begin{definition}[Treedepth]
\label{def:tree-depth}
The {\em closure} $\cl(F)$ of a rooted forest $F$
is the graph obtained from $F$ by making every vertex adjacent to all of its
ancestors.
The {\em treedepth} $\td(G)$ of a graph $G$ is one more than the minimum
height of a forest $F$ such that $G\subseteq \cl(F)$.
\end{definition}

It is known that $\tw(G) \leq \td(G)$.
The treedepth $\td(G)$ of a graph $G$ with a witness forest $F$ can be computed in time $f_{\textrm{td}}(\td(G)) \cdot |V(G)|$ for some computable function $f_{\textrm{td}}$~\cite{ReidlRVS:2014}.

\begin{definition}[Primal and dual graph]\label{primaldual-graph}
Given a matrix $A \in \Z^{m \times n}$, its \emph{primal graph} $G_P(A) = (V,E)$ is defined as $V = [n]$ and $E = \{\{i,j\} \in \binom{[n]}{2} \mid \exists k \in [m]: A_{k,i}, A_{k,j} \neq 0\}$.
In other words, its vertices are the columns of $A$ and two vertices are connected if there is a row with non-zero entries at the corresponding columns.
The \emph{dual graph of $A$} is defined as $G_D(A) = G_P(A^\transpose)$, that is, the primal graph of the transpose of $A$.
\end{definition}

\begin{accumulate}
\subparagraph{Graphs of $A$.}
Besides $G_P(A)$ and $G_D(A)$, also the incidence graph is studied.
The \emph{incidence graph of $A$} $G_I(A) = (V_I, E_I)$ is defined as $V_I = \{v_i \mid i \in [n]\} \cup \{c_j \mid j \in [m]\}$ and $E_I = \left\{ \{v_i, c_j\} \mid A_{i,j} \neq 0, \, i \in [n], j \in [m] \right\}$
To gain some intuition for how the primal, dual and incidence graphs are related, consider the following.
For a graph $G$ let $G^2$ denote the \emph{square of $G$} which is obtained from $G$ by adding an edge between all vertices in distance $2$.
For a subset of vertices $W \subseteq V(G)$, we denote by $G[W]$ the subgraph of $G$ induced by $W$.
It is easy to see that $G_P(A) = G_I(A)^2[\{v_i \mid i \in [n]\}]$ and $G_D(A) = G_I(A)^2[\{c_j \mid j \in [m]\}]$.
\end{accumulate}

\begin{definition}[Matrix treewidth]\label{matrix-treewidth}
Given a matrix $A$, its \emph{primal treewidth} $\tw_P(A)$ is defined as the treewidth of its primal graph, i.e., $\tw(G_P(A))$, and its \emph{dual treewidth} $\tw_D(A)$ is $\tw(G_D(A))$.
Similarly, we define the \emph{primal} and \emph{dual treedepth} as $\td_P(A) = \td(G_P(A))$ and $\td_D(A) = \td(G_D(A))$, respectively.
\end{definition}

\begin{accumulate}
We have the following similar bounds which will be useful later.
\begin{lemma}[{Kolaitis and Vardi~\cite{KolaitisV:2000}}] \label{lem:inc_prim_tw}
$\tw_I(A) \leq \tw_P(A) + 1$ and $\tw_I(A) \leq \tw_D(A) + 1$.
\end{lemma}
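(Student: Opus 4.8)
The plan is to prove the single inequality $\tw_I(A) \le \tw_P(A) + 1$; the second inequality $\tw_I(A) \le \tw_D(A) + 1$ then follows for free by applying the first to the transpose. Indeed, $G_D(A) = G_P(A^\transpose)$ by definition, and the incidence graph is symmetric in the roles of rows and columns, so $G_I(A^\transpose) \cong G_I(A)$ as abstract graphs and thus $\tw_I(A) = \tw_I(A^\transpose) \le \tw_P(A^\transpose) + 1 = \tw_D(A) + 1$.

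To prove $\tw_I(A) \le \tw_P(A)+1$, I would start from an optimal tree decomposition $(T,B)$ of the primal graph $G_P(A)$, whose bags all have size at most $\tw_P(A)+1$ and whose underlying vertex set is the set of column-vertices $\{v_i : i \in [n]\}$. The goal is to insert the constraint-vertices $\{c_j : j \in [m]\}$ into this decomposition while enlarging each bag by at most one vertex. The key structural observation is this: for a fixed constraint-vertex $c_j$, let $S_j \subseteq \{v_i : i \in [n]\}$ be its set of neighbors in $G_I(A)$, that is, the column-vertices corresponding to columns having a nonzero entry in row $j$. By the definition of the primal graph, any two distinct columns that both have a nonzero entry in row $j$ are adjacent in $G_P(A)$; hence $S_j$ is a clique in $G_P(A)$. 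I would then invoke the standard fact that in any tree decomposition every clique is contained in a single bag, giving a node $a_j \in V(T)$ with $S_j \subseteq B(a_j)$.

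Given this, the construction is to attach, for each $j \in [m]$, a fresh pendant node $a_j'$ adjacent to $a_j$ in $T$, with bag $B(a_j') := S_j \cup \{c_j\}$. I would then verify the three tree-decomposition axioms for $G_I(A)$: (i) every vertex appears in some bag — the column-vertices are inherited from $(T,B)$ and each $c_j$ lies in $B(a_j')$; (ii) every edge $\{v_i, c_j\}$ of $G_I(A)$, which exists exactly when $v_i \in S_j$, is covered by $B(a_j')$; and (iii) for each vertex the bags containing it induce a connected subtree. The only point in (iii) requiring care is a column-vertex $v_i$: its bags are the original connected subtree together with every pendant $a_j'$ for which $v_i \in S_j$, but each such pendant is attached to $a_j$, whose bag already contains $S_j \ni v_i$, so attaching it keeps the subtree connected; each $c_j$ appears only in $B(a_j')$ and is trivially connected. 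Since $|B(a_j')| = |S_j| + 1 \le (\tw_P(A)+1) + 1$, the resulting decomposition has width at most $\tw_P(A)+1$, as required.

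The content here is entirely standard, so I do not anticipate a genuine obstacle. The only step demanding attention is the connectivity check in (iii) for the column-vertices: one must ensure the pendant insertions do not break any subtree, which is precisely why each pendant $a_j'$ must be hung off a node whose bag already contains the clique $S_j$. The clique property of each neighborhood $S_j$, inherited directly from the definition of $G_P(A)$, is exactly what makes such a node available.
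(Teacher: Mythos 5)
Your proof is correct and follows essentially the same route as the paper's: both exploit that the neighborhood of a row-vertex in $G_I(A)$ is a clique in $G_P(A)$, hence lies in a single bag, and then insert the row-vertex by enlarging (the paper) or duplicating-and-appending (you, via a pendant node) that bag, with the dual case handled by transposition. Your write-up is just a more explicit version of the paper's ``possibly copying bags'' step, including the connectivity check the paper leaves implicit.
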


\bpr
Construct a tree decomposition $T'$ of $G_I(A)$ from an optimal tree decomposition $T$ of $G_P(A)$ as follows.
Consider a row $\vea_i$ of $A$: its non-zeros correspond to a clique of columns in $G_P(A)$, so there must exist a bag of $T$ containing all of them; now add the vertex corresponding to $\vea_i$ to this bag.
Repeating this for all rows and possibly copying bags obtains $T'$ of width at most one larger than $T$.
A similar argument applies for $G_D(A)$, exchanging rows for columns and vice-versa.
\epr

\begin{lemma} \label{lem:inc_prim_td}
For $A \in \Z^{m \times n}$, we have $\td_I(A) \leq \td_P(A) + 1$ and $\td_I(A) \leq \td_D(A) + 1$.
\end{lemma}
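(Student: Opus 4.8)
The plan is to prove the primal inequality $\td_I(A) \le \td_P(A) + 1$ directly by surgery on an optimal treedepth elimination forest, and then obtain the dual inequality for free by applying the primal one to $A^\transpose$. For the latter step I would observe that the incidence graph is symmetric in rows and columns, so $G_I(A^\transpose) = G_I(A)$ and hence $\td_I(A^\transpose) = \td_I(A)$, while $G_P(A^\transpose) = G_D(A)$ by Definition~\ref{primaldual-graph}; the primal bound applied to $A^\transpose$ then reads exactly $\td_I(A) \le \td_D(A) + 1$.

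For the primal bound, let $F$ be a rooted forest of minimum height with $G_P(A) \subseteq \cl(F)$, so that its height equals $\td_P(A) - 1$ and its vertex set is the column set $\{v_i\}$. I would extend $F$ to a forest $F'$ on all of $V_I$ by inserting the row vertices as leaves. The key observation is that for each row $j$ the set $S_j = \{v_i : A_{i,j} \neq 0\}$ of its nonzero columns is, by the very definition of the primal graph, a clique of $G_P(A)$, hence a clique of $\cl(F)$. Recall that in $\cl(F)$ two vertices are adjacent precisely when one is an ancestor of the other, so pairwise adjacency means pairwise comparability; thus $S_j$ is a chain in the ancestor-descendant order, i.e. its vertices all lie on one root-to-node path of $F$. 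Consequently $S_j$ has a unique deepest element $p_j$ (take $c_j$ to be a new isolated root when $S_j = \emptyset$), and I attach $c_j$ as a child of $p_j$.

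It then remains to check two things. First, $G_I(A) \subseteq \cl(F')$: since the incidence graph is bipartite between rows and columns, its only edges are the pairs $\{v_i, c_j\}$ with $v_i \in S_j$, and each such $v_i$ is an ancestor of (or equal to) $p_j$ and therefore an ancestor of $c_j$ in $F'$, so the edge lies in $\cl(F')$. Second, the height grows by at most one: every $c_j$ sits exactly one level below $p_j$, whose depth is at most $\td_P(A) - 1$, so $F'$ has height at most $\td_P(A)$ and therefore $\td_I(A) \le \td_P(A) + 1$. The only point requiring care — and the crux of the argument — is the clique-to-chain step: one must be sure that the entire support of a row collapses onto a single root-to-node path of $F$, since this is exactly what guarantees a well-defined deepest vertex $p_j$ below which $c_j$ can be hung while keeping all of $S_j$ among its ancestors. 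Everything else is routine bookkeeping.
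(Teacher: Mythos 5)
Your proposal is correct and follows essentially the same route as the paper: prove the primal bound by attaching each row vertex as a new leaf beneath the deepest column in its support within an optimal elimination forest for $G_P(A)$, and deduce the dual bound from $G_I(A^\transpose)=G_I(A)$. Your explicit clique-to-chain argument (the support of a row is a clique in $G_P(A)$, hence a chain in $\cl(F)$, hence lies on a single root-to-node path) is exactly the step the paper uses implicitly when it asserts that every non-zero row belongs to some $R_p$, so the two write-ups differ only in bookkeeping.
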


\bpr
We shall prove $\td_I(A) \leq \td_P(A) + 1$; the other claim follows by noting that $G_I(A^\transpose) = G_I(A)$.
Let $F$ be a forest of height $\td_P(A) - 1$ such that $G_P(A) \subseteq \cl(F)$.
Every root-leaf path $p$ in $F$ corresponds to a subset $S_p \subseteq [n]$ of columns of $A$.
Let $R_p \subseteq [m]$ be the subset of rows of $A$ whose non-zero entries are contained in $S_p$.
Obtain $F'$ from $F$ by, for every root-leaf path $p$, appending a leaf to the last vertex of $p$ for every $r \in R_p$.
Let us argue that $G_I(A) \subseteq \cl(F')$.
Every non-zero row $r$ belongs to some $R_p$ and thus has a corresponding leaf.
Moreover, $r$ only has edges in $G_I(A)$ connecting it to vertices from $S_p$, and all such edges belong to $\cl(F')$.
Finally, the depth of $F'$ is clearly $1$ plus the height of $F$ as we have only added one more layer of leaves.
\epr
\end{accumulate}

\ifthenelse{\value{Accumulate} = 1}{
Using a proof of Ganian et al.~\cite[Theorem 12]{GOR} we show that we cannot hope to relax the parameter $\td_D(A)$ to $\tw_D(A)$, even if $\|A\|_\infty$ was a constant.
\begin{lemma} \APXmark \label{lem:dualtd_nph}
\eqref{IP} is \NPh already when $\tw_D(A) = 3$, $\|A\|_\infty = 2$, and $\vew = \mathbf{0}$.
\end{lemma}}
{}

\subsection{Multi-stage Stochastic ILP is strongly \FPT}
To prove Theorem~\ref{thm:multistage}, we need two ingredients:
a bound on $g_\infty(A)$, and an algorithm for~\eqref{IP} with bounded $\tw_P(A)$ and $\max \|\vex\|_\infty$.

\begin{lemma}[Multi-stage stochastic $\Rightarrow$ bounded $g_\infty(A)$] \APXmark
\label{lem:multistage_norm}
Let $A = T^P(B_1, \dots, B_\tau)$.
Then $g_\infty(A) \leq f_{\textrm{mss-norm}}(a, n_1, \dots, n_\tau, l)$ for some computable function $f_{\textrm{mss-norm}}$.
\end{lemma}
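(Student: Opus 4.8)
The plan is to prove, by induction on the number of stages $\tau$, a statement stronger than the lemma and better suited to the recursive structure of $A = T^P(B_1,\dots,B_\tau)$. Call an integer vector $\vey$ with $A\vey = \ved$ \emph{conformally minimal} if there is no $\veg \in \G(A)$ with $\veg \sqsubseteq \vey$ and $\veg \neq \vey$. Since the nonzero conformally minimal solutions of the homogeneous system $A\vey = \mathbf{0}$ are exactly the elements of $\G(A)$, it suffices to bound $\|\vey\|_\infty$ over all conformally minimal solutions of $A\vey = \ved$ with $\|\ved\|_\infty \le R$ by a computable function $\Phi(\tau, a, n_1,\dots,n_\tau, l, R)$, and then set $f_{\textrm{mss-norm}} := \Phi(\tau, a, n_1, \dots, n_\tau, l, 0)$. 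For the base case $\tau = 1$ we have $A = B_1$, a single $l \times n_1$ matrix with entries bounded by $a$; because this matrix has fixed size, a Cramer/subdeterminant estimate bounds $\|\vey\|_\infty$ for every conformally minimal solution of $B_1\vey = \ved$ by a computable function of $a, n_1, l$ and $R$, which defines $\Phi(1,\cdot)$.

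For the inductive step let $r$ be the root of $T$ with children $v_1,\dots,v_\delta$, so that each subtree induces a $(\tau-1)$-stage matrix $T^P_{v_i}(B_2,\dots,B_\tau)$. Partition a conformally minimal solution as $\vey = (\vey^0, \vey^{(1)}, \dots, \vey^{(\delta)})$, where $\vey^0 \in \Z^{n_1}$ is the root brick; the system decouples into $T^P_{v_i}(B_2,\dots,B_\tau)\,\vey^{(i)} = \ved^{(i)} - B_{1,\ell(v_i)}\vey^0$ for each $i$, and note that $B_{1,\ell(v_i)}\vey^0$ consists of stacked copies of $B_1\vey^0$, so the right-hand side of the $i$-th subsystem has $\ell_\infty$-norm at most $R + n_1 a\|\vey^0\|_\infty$. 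The key use of conformal minimality is that each $\vey^{(i)}$ is itself conformally minimal for its subsystem: if some $\veh \in \G(T^P_{v_i})$ satisfied $\veh \sqsubseteq \vey^{(i)}$, $\veh \neq \vey^{(i)}$, then the vector that equals $\veh$ on subtree $i$ and is zero elsewhere (in particular zero on the root brick) would lie in $\Ker A$, be conformal to $\vey$ and different from $\vey$, and hence contain an element of $\G(A)$ conformally below and different from $\vey$, a contradiction. Applying the induction hypothesis gives $\|\vey^{(i)}\|_\infty \le \Phi(\tau-1, a, n_2, \dots, n_\tau, l, R + n_1 a\|\vey^0\|_\infty)$ for every $i$. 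Everything therefore reduces to bounding the root brick $\|\vey^0\|_\infty$ by a computable function of the parameters and $R$ alone.

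Bounding $\|\vey^0\|_\infty$ is the heart of the argument and the step I expect to be the main obstacle. The crucial structural observation is that the root brick influences each subtree only through the single $l$-dimensional vector $B_1\vey^0 \in \Z^l$, a dimension that does not grow with the tree. I would argue contrapositively: if $\|\vey^0\|_\infty$ is large, I want to produce a nonzero $\veh = (\veh^0, \veh^{(1)}, \dots, \veh^{(\delta)}) \in \Ker A$ with $\veh \sqsubseteq \vey$ and $\veh \neq \vey$, contradicting conformal minimality. Such an $\veh$ requires a single conformal increment $\veh^0$ of the root brick, with $0 \sqsubseteq \veh^0 \sqsubseteq \vey^0$, that is simultaneously \emph{absorbable} in every subtree, i.e.\ for each $i$ there is $\veh^{(i)} \sqsubseteq \vey^{(i)}$ with $T^P_{v_i}\veh^{(i)} = -B_{1,\ell(v_i)}\veh^0$; equivalently, the vector $B_1\veh^0$ must lie in the intersection of the ``absorbable sets'' $D_i \subseteq \Z^l$ of the subtrees. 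I would classify, by a Dickson/well-quasi-ordering argument carried out inside the fixed space $\Z^l$ (so that the classification is independent of $\delta$ and of the tree size), the finitely many conformal interaction patterns a subtree can exhibit, and show that once $\|\vey^0\|_\infty$ exceeds a threshold depending only on $a, n_1, \dots, n_\tau, l, R$, two distinct conformal prefixes of $\vey^0$ realize the same pattern in every subtree simultaneously, so that their difference is the desired nonzero conformal kernel step.

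The delicate point in this last step, which I would have to handle carefully to keep the argument non-circular, is that the absorbable sets $D_i$ a priori depend on the (large) subtree solutions $\vey^{(i)}$, whereas the threshold for $\|\vey^0\|_\infty$ must not; the resolution is to perform the pigeonhole in the $l$-dimensional interaction space $\Z^l$ rather than over the subtree coordinates, using only that $\|B_1\veh^0\|_\infty \le n_1 a\|\vey^0\|_\infty$ and that conformal absorbability is monotone along conformal prefixes of $\vey^0$. Assembling the resulting root-brick threshold with the subtree bound from the previous paragraph yields the recursive definition of $\Phi$; evaluating at $R = 0$ produces the claimed computable $f_{\textrm{mss-norm}}$. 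This is, in effect, the explicit and treedepth-controlled incarnation of the classical finiteness theorem for multi-stage stochastic Graver bases, the bounded primal treedepth being exactly what keeps the separating root brick of size $n_1$ at every level of the recursion.
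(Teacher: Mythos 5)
The paper's own proof of this lemma is a two-line citation: it invokes Aschenbrenner and Hemmecke \cite[Proposition 8.11]{AH} for the statement that every brick of a Graver element lies in a finite, $n$-independent set, and \cite[Section 9]{AH} for computability of the resulting bound. You are instead attempting to reprove that result from scratch, and while your outer skeleton is sound and in fact mirrors how the finiteness theorem is actually organized --- induction on $\tau$, splitting a conformally minimal $\vey$ into the root brick $\vey^0$ and subtree parts, observing that each $\vey^{(i)}$ is conformally minimal for its own subsystem with right-hand side shifted by $B_1\vey^0$, and reducing everything to a bound on $\|\vey^0\|_\infty$ --- the step you yourself flag as ``the heart of the argument'' is precisely the deep content of the Aschenbrenner--Hemmecke theorem, and your sketch of it does not go through as stated.

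Concretely, three things break. First, your pigeonhole needs \emph{two} conformal prefixes of $\vey^0$ that realize the same absorbability pattern in \emph{every} subtree simultaneously; since the number of subtrees $\delta$ grows with $n$, a pattern count of $P$ per subtree naively forces a chain of prefixes of length about $P^\delta$ before a collision is guaranteed, which reintroduces a dependence on $n$ and defeats the purpose. Second, the monotonicity you invoke to escape this (``conformal absorbability is monotone along conformal prefixes of $\vey^0$'') is asserted, not proved, and is not obviously true: absorbing a larger increment $\veh^0$ requires a larger compensating $\veh^{(i)}$ that must still fit conformally under $\vey^{(i)}$, and there is no a priori reason this property is inherited by conformal extensions. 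Third, the claim that each subtree exhibits only finitely many ``conformal interaction patterns,'' with the count depending only on $a, n_2,\dots,n_\tau, l$ and not on the (unboundedly large) subtree solution $\vey^{(i)}$, is essentially equivalent to the finiteness statement you are trying to establish; in \cite{AH} this is where the genuinely hard machinery (a well-quasi-ordering argument far beyond Dickson's lemma on $\Z^l$) enters, and obtaining a \emph{computable} threshold from it is the subject of their Section 9. So the proposal is not wrong in outline, but it has a genuine gap exactly where the lemma is nontrivial; the honest options are to cite \cite{AH} as the paper does, or to supply the missing well-quasi-ordering argument in full, which is a substantial undertaking.
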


\begin{accumulate}
\ifthenelse{\value{Accumulate} = 1}{
\begin{proof}[Proof of Lemma~\ref{lem:multistage_norm}]}
{\bpr}
Aschenbrenner and Hemmecke~\cite{AH} show that each brick of $\veg$ comes from a union $\bigcup_{k=1}^\tau \mathcal{H}_{k,\infty}$ which is independent of $n$ and finite~\cite[Proposition 8.11]{AH}.
Thus, $\|\veg\|_\infty \leq f_{\textrm{mss-norm}}(a, n_1, \dots, n_\tau, l)$ for some function $f_{\textrm{mss-norm}}$ which is computable by arguments in~\cite[Section 9]{AH}.
\epr
\end{accumulate}

\begin{lemma} \APXmark \label{lem:primal_treewidth}
Let $X \in \N$.
Problem~\eqref{IP} with the additional constraint $\|\vex\|_\infty \leq X$
can be solved in time $(X+1)^{O(\tw_P(A))} \cdot (n+m)$.
\end{lemma}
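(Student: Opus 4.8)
The plan is to solve the bounded instance by dynamic programming over a tree decomposition of the primal graph $G_P(A)$, exploiting the fact that the extra constraint $\|\vex\|_\infty \le X$ restricts every variable to the domain $D_j := \{-X,\dots,X\} \cap [l_j,u_j]$ of size at most $2X+1$. (If $X=0$ the only candidate is $\vex=\vezero$, checked directly, so assume $X\ge 1$.) First I would compute a tree decomposition of $G_P(A)$ of width $O(\tw_P(A))$; using a constant-factor approximation running in $2^{O(\tw_P(A))}\cdot(n+m)$ time and converting it to a nice tree decomposition with $O(\tw_P(A)\cdot n)$ introduce/forget/join nodes keeps this preprocessing within budget, since $2^{O(\tw_P(A))}\le (X+1)^{O(\tw_P(A))}$ for $X\ge 1$.

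The crucial structural observation is that the support of any row $\vea_k$ of $A$ is, by definition of $G_P(A)$, a clique in the primal graph, and hence is contained in a single bag of the decomposition; I assign each constraint $\sum_j A_{kj}x_j = b_k$ to one such bag. The DP state at a node $t$ with bag $B_t$ is an assignment $\phi\colon B_t\to\Z$ with $\phi(j)\in D_j$; the table entry $f_t(\phi)$ records the minimum of $\sum_j w_j x_j$ over all completions of $\phi$ to the variables already forgotten in the subtree of $t$ that satisfy every constraint assigned to that subtree, or $+\infty$ if none exists. The transitions are standard: an introduce node adds the term $w_j\phi(j)$ for the new variable $j$; a forget node minimizes over the value of the forgotten variable; a join node adds the two children's entries and subtracts the once-double-counted weight of the shared bag; and at every node the entry is set to $+\infty$ whenever an assigned constraint is violated by $\phi$ (which is well-defined precisely because all variables of that constraint lie in $B_t$). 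The optimum of \eqref{IP} is then read off the root, and infeasibility is detected when that entry is $+\infty$.

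For the running time, each of the $O(\tw_P(A)\cdot n)$ nodes carries at most $(2X+1)^{\tw_P(A)+1} = (X+1)^{O(\tw_P(A))}$ states, and the per-state work is polynomial in the bag size $\tw_P(A)+1$; charging each constraint to the single bag it is assigned to contributes an additive $m$ term. Together this gives $(X+1)^{O(\tw_P(A))}\cdot(n+m)$, absorbing the polynomial-in-$\tw_P(A)$ factors into the exponent via $\tw_P(A)\le (X+1)^{\tw_P(A)}$. I expect the main obstacle to be not any single calculation but the bookkeeping that makes the recurrence correct: verifying that every constraint can indeed be evaluated at its assigned bag (the clique property), that each variable's weight is counted exactly once across introduce and join nodes, and that the domain restriction is faithfully propagated. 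This is exactly the treewidth-based dynamic programming for constraint satisfaction underlying Freuder's algorithm~\cite{Freu} and its reproof by Jansen and Kratsch~\cite{JK}, specialized to linear constraints with the domain bound $2X+1$.
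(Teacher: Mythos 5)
Your proposal is correct and is essentially the paper's own argument: the paper simply clips the bounds to $\tilde{l}_i = \max\{l_i,-X\}$, $\tilde{u}_i = \min\{u_i,X\}$ and invokes Freuder's algorithm~\cite{Freu,JK} as a black box (Proposition~\ref{prop:primal_treewidth}), whereas you unpack that black box into the explicit bag-based dynamic programming over a tree decomposition of $G_P(A)$, using the same key fact that each row's support is a clique and hence lies in a single bag. The extra bookkeeping you supply (nice decomposition, join-node correction, absorbing $\poly(\tw_P(A))$ factors) is sound and adds nothing beyond what the citation already provides.
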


\begin{accumulate}
\ifthenelse{\value{Accumulate} = 1}{
\begin{proof}[Proof of Lemma~\ref{lem:primal_treewidth}]}
{\bpr}
The algorithm follows from Freuder's algorithm:
\begin{proposition}[Freuder~{\cite{Freu,JK}}] \label{prop:primal_treewidth}
\eqref{IP} can be solved in time $\|\veu - \vel\|_\infty^{O(\tw_P(A))} \cdot n$.
\end{proposition}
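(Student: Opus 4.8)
The plan is to solve~\eqref{IP} by dynamic programming over a tree decomposition of the primal graph $G_P(A)$; this is the classical approach of Freuder, recast in the language of tree decompositions as by Jansen and Kratsch~\cite{JK}. Write $w := \tw_P(A)$ and $D := \|\veu - \vel\|_\infty + 1$, and observe that if some coordinate is unbounded then $D = \infty$ and the claimed bound is vacuous, so we may assume $\vel, \veu$ are finite and each variable $x_j$ ranges over the domain $[l_j, u_j]$ of size at most $D$; moreover we may assume $D \geq 2$, since otherwise every variable is fixed and feasibility is checked directly. First I would compute a tree decomposition of $G_P(A)$ of width $O(w)$ using a single-exponential constant-factor approximation, which costs $2^{O(w)} \cdot n \leq D^{O(w)} \cdot n$ time, and refine it into a nice tree decomposition with leaf, introduce, forget and join nodes and $O(wn)$ nodes in total. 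The structural fact driving the whole argument is that, for every row $k$ of $A$, its support $\{ j : A_{kj} \neq 0 \}$ is a clique of $G_P(A)$ by the definition of the primal graph, so by the clique-containment property of tree decompositions some bag contains this support entirely; I assign the equality constraint ``$\sum_j A_{kj} x_j = b_k$'' to one such node, so that every constraint is charged to exactly one node whose bag contains all of its variables.

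For a node $a$, let $V_a$ denote the union of the bags in the subtree rooted at $a$, and for each assignment $\phi$ of values in their respective domains to the bag variables $B(a)$, define $\mathrm{dp}_a[\phi]$ to be the minimum of the partial objective $\sum_{j \in V_a} w_j x_j$ over all integer assignments to $V_a$ that agree with $\phi$ on $B(a)$ and satisfy every constraint assigned to a node of the subtree rooted at $a$, with $\mathrm{dp}_a[\phi] = +\infty$ if none exists. I would compute these tables bottom-up: at a leaf the table is trivial; at an introduce node adding $v$ I extend each child entry by the term $w_v \phi(v)$; at a forget node dropping $v$ I set $\mathrm{dp}_a[\phi] = \min_{x_v \in [l_v, u_v]} \mathrm{dp}_c[\phi \cup \{v \mapsto x_v\}]$; and at a join node with children $c_1, c_2$ sharing the bag $B$ I set $\mathrm{dp}_a[\phi] = \mathrm{dp}_{c_1}[\phi] + \mathrm{dp}_{c_2}[\phi] - \sum_{j \in B} w_j \phi(j)$, the final term correcting for double-counting the bag objective. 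After each step I check every constraint assigned to $a$: since its support lies inside $B(a)$, its value is fully determined by $\phi$, so I set the entry to $+\infty$ whenever $\phi$ violates it. The optimum of~\eqref{IP} is then read off as $\min_\phi \mathrm{dp}_{\mathrm{root}}[\phi]$, with the value $+\infty$ signalling infeasibility.

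Correctness follows by induction over the tree decomposition: the contribution $w_j x_j$ of each variable is added exactly once (at the introduce node of $j$), the double-counting at join nodes is exactly cancelled, and each constraint is verified exactly once at the node it was assigned to, where the presence of its entire support guarantees that its satisfaction depends only on $\phi$ and not on the forgotten coordinates. For the complexity, the nice tree decomposition has $O(wn)$ nodes, each table has at most $D^{|B(a)|} \leq D^{O(w)}$ entries, and every transition (the minimization at forget nodes, the addition at join nodes) processes a pair of tables in time $D^{O(w)}$. The per-node constraint checks cost $O(w)$ per assigned constraint and are linear in the size of $A$ overall, hence dominated. Together with the initial $2^{O(w)} n$ decomposition step, the whole computation runs in $D^{O(w)} \cdot n = \|\veu - \vel\|_\infty^{O(\tw_P(A))} \cdot n$ time, as claimed.

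I expect the only genuine subtlety to be the bookkeeping that ensures each objective term and each constraint is accounted for exactly once; this is precisely what the correction term $-\sum_{j \in B} w_j \phi(j)$ at join nodes and the per-node assignment of constraints are designed to handle, and it is the place where an informal argument could silently double-count the objective or skip a constraint. Everything else is routine: the clique-containment property is standard, and the domain-size factor $D$ enters cleanly because the dynamic program enumerates assignments of the bag variables over their finite domains. A secondary point worth stating explicitly is the use of a single-exponential treewidth approximation, so that the width inflates only by a constant factor and the exponent in the running time remains $O(\tw_P(A))$.
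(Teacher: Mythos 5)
Your dynamic program is correct and is essentially the argument behind the result as it is used here: the paper does not prove this proposition at all, but cites it as known (Freuder~\cite{Freu}, reproven by Jansen and Kratsch~\cite{JK}), and your proof --- a table DP over a tree decomposition of $G_P(A)$, with each row charged to a bag containing its support, which exists because every row's support is a clique of the primal graph --- is exactly the standard proof behind that citation. The inductive correctness argument (introduce/forget/join, with the double-counting correction at joins) and the use of a single-exponential constant-factor treewidth approximation are all sound.

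One piece of accounting in your last paragraph is wrong, although it does not damage the approach. A constraint assigned to a node must be evaluated against \emph{every} assignment $\phi$ in that node's table, so the total cost of constraint checks is of order $m \cdot (\|\veu-\vel\|_\infty+1)^{O(\tw_P(A))}$, not ``linear in the size of $A$ overall'' as you claim; what your algorithm honestly delivers is a bound of the form $(\|\veu-\vel\|_\infty+1)^{O(\tw_P(A))}\cdot(n+m)$. This extra $m$ term is unavoidable (any correct algorithm must read all $m$ rows), and it is precisely the form in which the paper actually invokes the result: Lemma~\ref{lem:primal_treewidth}, which is what the proposition is quoted for, states the running time as $(X+1)^{O(\tw_P(A))}\cdot(n+m)$. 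The proposition's ``$\cdot\, n$'' is best read as suppressing the input-reading term; likewise your base $D=\|\veu-\vel\|_\infty+1$ matches the lemma's $(X+1)$ and is the right way to phrase the bound when $\|\veu-\vel\|_\infty\le 1$.
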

We apply it to~\eqref{IP} with modified bounds $\tilde{\vel}$ and $\tilde{\veu}$ defined by $\tilde{l}_i = \max \{l_i, -X\}$ and $\tilde{u}_i = \min \{u_i, X\}$.
It is clear that $\|\tilde{\veu} - \tilde{\vel}\|_\infty \leq X$ and since $A$ and thus $\tw_P(A)$ stay the same, the claim follows.
\epr
\end{accumulate}

\begin{proof}[Proof of Theorem~\ref{thm:multistage}]
Let $A =  T^P(B_1, \dots, B_\tau)$ be a multi-stage stochastic matrix.
By Lemma~\ref{lem:multistage_norm}, $g_\infty(A)$ is bounded by $M = f_{\textrm{mss-norm}}(a, n_1, \dots, n_\tau, l)$.
We show how to construct a $\Lambda$-Graver-best oracle.
Given an integer $\lambda \in \N$, use Lemma~\ref{lem:primal_treewidth} to solve
$$
\min \{\lambda \vew \veh \mid A \veh = 0, \, \vel \leq \vex + \lambda \veh \leq \veu, \, \|\veh\|_\infty \leq M\} \enspace .
$$
This returns a $\lambda$-Graver-best step, because any optimal solution satisfies $\lambda \veh \leq \lambda \veg$ for all $\veg \in \G(A)$.
Using a simple induction and the inductive construction of $A$, one gets that $A$ has $\tw_P(A) \leq \td_P(A) \leq n_1 + \cdots + n_\tau + 1$ and thus the oracle is realized in \FPT time.
Lemma~\ref{lem:lambda_gb_oracle} then yields a Graver-best oracle, which, combined with Theorem~\ref{thm:oracle}, finishes the proof.
\end{proof}

\subsection{Tree-fold ILP is strongly \FPT}
As before, to prove Theorem~\ref{thm:treefold}, we need two ingredients:
a bound on $g_1(A)$, and an algorithm for~\eqref{IP} with bounded $\tw_D(A)$ and $\max \|\vex\|_1$.
\begin{lemma}[Tree-fold $\Rightarrow$ bounded $g_1(A)$] \APXmark\label{lem:treefold_norm}
Let $A_i \in \Z^{r_i \times t}$ for $i \in [\tau]$ with $a = \max \{ 2, \max_{i \in [\tau]} \|A_i\|_\infty \}$, $r = \sum_{i=1}^\tau r_i$. 
Let $A = T^D(A_1, \dots, A_\tau)$.
There exists a computable function $f_{\textrm{tf-norm}}(a, r_1, \dots, r_\tau)$ such that $g_1(A) \leq f_{\textrm{tf-norm}}(a, r_1, \dots, r_\tau)$.
\end{lemma}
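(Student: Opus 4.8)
The plan is to prove the bound by induction on the number of blocks $\tau$ (equivalently, the height of $T$), peeling off the topmost coupling block $A_1$ at each step and controlling the resulting growth of the $\ell_1$-norm by means of the Steinitz lemma. The single tool I would invoke at every step is the following integer consequence of that lemma: if $\ves_1, \dots, \ves_p \in \Z^{d}$ satisfy $\|\ves_j\|_\infty \le C$ and $\sum_j \ves_j = \vezero$, then they can be reordered so that every partial sum lies in the box $\{\vey \in \Z^{d} : \|\vey\|_\infty \le dC\}$, which contains at most $(2dC+1)^{d}$ integer points. The point to keep in view is that $d$ will always be a \emph{row}-count ($r_1$ at the current level), never the block width $t$.

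For the base case $\tau=1$, where $A = A_1 \in \Z^{r_1 \times t}$ is a single matrix, I would take $\veg \in \G(A_1)$ and expand it into a list of $\|\veg\|_1$ unit pieces, one copy of $\sign(g_j)(A_1)_j \in \Z^{r_1}$ for each unit of $|g_j|$; these have $\ell_\infty$-norm at most $a$ and sum to $\vezero$. Applying Steinitz with $d=r_1$, $C=a$, if $\|\veg\|_1$ exceeded $(2r_1a+1)^{r_1}$ then two partial sums would coincide and the pieces strictly between them would assemble into a nonzero $\veh \sqsubseteq \veg$ with $\veh \ne \veg$ and $A_1\veh = \vezero$, contradicting $\sqsubseteq$-minimality. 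This already gives a bound independent of $t$.

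For the inductive step I would write $A = T^D(A_1,\dots,A_\tau)$, whose unique block row containing $A_1$ couples all leaf-bricks via $A_1$, while deleting that block row leaves a block-diagonal system whose diagonal blocks are the child subtree matrices $A'_1, \dots, A'_\delta$, each a tree-fold matrix with one fewer block. By the inductive hypothesis every $A'_i$ has $g_1(A'_i) \le M' := f_{\textrm{tf-norm}}(a, r_2, \dots, r_\tau)$. Thus any $\veg \in \ker(A)$ splits as $(\veg_1,\dots,\veg_\delta)$ with $\veg_i \in \ker(A'_i)$, subject only to the top constraint $A_1 \sum_b \veg^b = \vezero$ over all leaf-bricks. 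For $\veg \in \G(A)$ I would decompose each $\veg_i$ conformally into Graver elements of $A'_i$, $\veg_i = \sum_k \veh_{i,k}$ with $\|\veh_{i,k}\|_1 \le M'$, and attach to each piece the signature $\ves_{i,k} := A_1 \sum_b \veh_{i,k}^b \in \Z^{r_1}$, which has $\|\ves_{i,k}\|_\infty \le aM'$ and for which $\sum_{i,k}\ves_{i,k} = \vezero$. Applying Steinitz to these signatures ($d=r_1$, $C=aM'$), a repeated partial sum would again produce a proper nonempty sub-collection whose signatures cancel; collecting it yields $\veh$ that lies in $\ker(A)$ (a sum of $\ker(A'_i)$-elements within each child, with the top constraint restored by the vanishing signature sum) and satisfies $\vezero \ne \veh \sqsubseteq \veg$, $\veh \ne \veg$, contradicting minimality. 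Hence the number of pieces is at most $(2r_1aM'+1)^{r_1}$ and $\|\veg\|_1 \le (2r_1aM'+1)^{r_1}M' =: f_{\textrm{tf-norm}}(a,r_1,\dots,r_\tau)$.

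I expect the main obstacle to be the bookkeeping that makes the extracted sub-collection simultaneously conformal to $\veg$, strictly smaller, and a genuine kernel element of the \emph{coupled} system, since the Steinitz reordering freely mixes pieces coming from different children; conformality survives because pieces within one child are conformal to $\veg_i$ while distinct children have disjoint supports, and kernel membership survives because each child's selected pieces sum inside $\ker(A'_i)$. The reason $t$ never enters is precisely that each coupling is measured through the signatures in $\Z^{r_1}$, whose dimension is the row-count of $A_1$ and is independent of the block width $t$ and of $n$, so the recursion $M_\tau = (2r_1aM_{\tau-1}+1)^{r_1}M_{\tau-1}$ closes with a bound depending only on $a$ and $r_1,\dots,r_\tau$.
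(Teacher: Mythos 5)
Your proof is correct, but it takes a genuinely different route from the one in the paper. The paper's proof is a reduction: it first deletes duplicate columns of the stacked matrix $\left(\begin{smallmatrix}A_1\\ \vdots\\ A_\tau\end{smallmatrix}\right)$ so that the block width drops to at most $(2a+1)^r$, then invokes the decomposition lemma of Chen and Marx (which requires $t$ to be bounded) to bound $g_1$ of the deduplicated tree-fold matrix, and finally transfers the bound back to $A$ via the fact that repeating columns does not increase the $\ell_1$-norm of Graver elements (De Loera et al.). Your argument is instead a self-contained induction on $\tau$ that peels off the top coupling block and controls the number of Graver pieces of the children via the Steinitz lemma applied to the $r_1$-dimensional signatures $A_1\sum_b \veh_{i,k}^b$; the key observations — that conformality of the extracted sub-collection survives because pieces within one child are conformal to $\veg_i$ and distinct children have disjoint supports, and that the Steinitz dimension is always a row count rather than $t$ — are exactly right, and the base case correctly recovers a $t$-independent bound for a single block. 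What your approach buys is an explicit, elementary recursion $M \le (2r_1 a M'+1)^{r_1}M'$ (so a bound roughly single-exponential in $\prod_i(r_i+1)$), avoiding both the column-deduplication detour and the black-box Chen--Marx lemma, whereas the paper's route reuses existing machinery at the cost of a non-explicit function $f_2$. Two cosmetic points: in your closing recursion the subscript on $M_{\tau-1}$ should be read as ``the bound for the tree-fold matrix with blocks $A_2,\dots,A_\tau$,'' so the exponent $r_1$ is the row count of the block peeled at the current level; and you implicitly use that every element of $\ker(A'_i)$ is a conformal sum of elements of $\G(A'_i)$, which is standard but worth stating.
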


\ifthenelse{\value{Accumulate} = 1}{
\subparagraph{Proof sketch.} Chen and Marx~\cite{MC} prove a similar result under the assumption that $t$ is also a parameter; thus, the remaining problem are essentially duplicitous columns.
However, De Loera et al.~\cite{DHK} show that repeating columns of any matrix $A'$ does not increase $g_1(A')$, and thus we can take $A$, delete duplicitous columns, apply the result of Chen and Marx, and our Lemma follows.
}
{}
\begin{accumulate}
\ifthenelse{\value{Accumulate} = 1}{
\begin{proof}[Proof of Lemma~\ref{lem:treefold_norm}]}
{\bpr}
Let $C = \left(\begin{smallmatrix}A_1 \\ \vdots \\ A_\tau \end{smallmatrix}\right)$ and obtain $\tilde{C} = \left(\begin{smallmatrix}\tilde{A}_1 \\ \vdots \\ \tilde{A}_\tau \end{smallmatrix}\right)$ by deleting duplicitous columns of $C$.
Let $\tilde{A} = T^D(\tilde{A}_1, \dots, \tilde{A}_\tau)$.
This tree-fold matrix (unlike $A$ itself) satisfies the property that the number $\tilde{t}$ of columns of $\tilde{A}_i$, for each $i \in [\tau]$, is bounded by $(2a+1)^r$.

Let $n' \in \N$ be the number of bricks of $A$.
Chen and Marx~\cite[Lemma 11]{MC} show that for every $\tilde{\veg} = (\tilde{\veg}^1, \dots, \tilde{\veg}^{n'}) \in \G(\tilde{A})$ and for every $I \subseteq [n']$, $\sum_{i \in I} \tilde{\veg}^i$ is a sum of at most $f_2(\tilde{A}_1, \dots, \tilde{A}_\tau)$ elements from $\G(\tilde{A}_\tau)$ for some computable function $f_2$.
By~\cite[Lemma 3.20]{Onn}, $g_1(\tilde{A}_\tau) \leq (\tilde{t}-r_\tau)(a\sqrt{r_\tau})^{r_\tau} \leq \tilde{t}(ar_\tau)^{O(r_\tau)} =: \zeta$, and thus $\zeta \cdot f_2(\tilde{A}_1, \dots, \tilde{A}_\tau)$ is an upper bound on $g_1(\tilde{A})$.

Observe that $A$ can be obtained from $\tilde{A}$ by repeating columns.
De Loera et al.~\cite[Corollary 3.7.2]{DHK} show that repeating columns does not increase the $\ell_1$-norm of Graver elements, and thus $g_1(A) \leq f_{\textrm{tf-norm}}(a, r_1, \dots, r_\tau) = \zeta \cdot f_2(\tilde{A}_1, \dots, \tilde{A}_\tau)$.
\epr
\end{accumulate}
\begin{lemma} \APXmark \label{lem:incidence_treewidth}
Let $X \in \N$.
Problem~\eqref{IP} with the additional constraint $\|\vex\|_1 \leq X$
can be solved in time $(aX)^{O(\tw_D(A))} \cdot n$, where $a = \max \{2, \|A\|_\infty\}$.
\end{lemma}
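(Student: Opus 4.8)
The plan is to solve the augmented program by dynamic programming over a tree decomposition of the dual graph $G_D(A)$, exploiting that the extra constraint $\|\vex\|_1 \le X$ keeps every relevant partial constraint-sum small. First I would compute a tree decomposition $(T,B)$ of $G_D(A)$ of width $w = O(\tw_D(A))$ in \FPT time and turn it into a nice tree decomposition; here every bag $B(t)$ is a set of rows (constraints) of $A$. The decisive observation is that if $\|\vex\|_1 \le X$ and $a = \max\{2,\|A\|_\infty\}$, then for every row $k$ and every subset $S$ of columns one has $|\sum_{j \in S} A_{kj} x_j| \le a\|\vex\|_1 \le aX$; hence any partial evaluation of a constraint ranges over the bounded set $\{-aX, \dots, aX\}$.

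Next I would set up the DP. Since each column $j$ has its nonzero rows forming a clique in $G_D(A)$, these rows lie in a common bag, so I can assign $j$ to a node $\nu(j)$ with $\{k : A_{kj}\neq 0\}\subseteq B(\nu(j))$. For a node $t$ let $C_t$ be the columns assigned in the subtree rooted at $t$. The table entry $D_t(\vep, s)$, indexed by a target partial-sum vector $\vep \in \{-aX,\dots,aX\}^{B(t)}$ and an $\ell_1$-budget $s \in \{0,\dots,X\}$, stores the minimum of $\vew\vex$ over partial assignments of $C_t$ such that (i) $\sum_{j\in C_t}A_{kj}x_j = p_k$ for each $k \in B(t)$, (ii) $\sum_{j\in C_t}|x_j| = s$, (iii) $\max\{l_j,-X\}\le x_j\le\min\{u_j,X\}$, and (iv) every row already forgotten below $t$ has its full constraint satisfied. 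The transitions are the standard ones: an introduce-row node initializes the new coordinate to $0$; a forget-row node for $k$ keeps only entries with $p_k = b_k$ (this is legitimate because, by the tree-decomposition properties, all columns nonzero in $k$ are assigned below the node where $k$ is forgotten, so the constraint is fully accumulated) and then projects $p_k$ out; a join node combines children by $D_t(\vep,s)=\min_{\vep_1+\vep_2=\vep,\, s_1+s_2=s} D_{t_1}(\vep_1,s_1)+D_{t_2}(\vep_2,s_2)$; and processing the columns assigned to $\nu(j)$ branches over the at most $2X+1$ admissible values of $x_j$, updating $p_k$ for $k$ in the support, $s$ by $|x_j|$, and the objective by $w_jx_j$. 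The optimum is read off at the (empty-bag) root.

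For the running time I would first discard all-zero rows in a linear-time scan, so that every remaining row is incident to some column. The clique bound then gives the key count: for each column $j$ the rows nonzero in column $j$ form a clique of $G_D(A)$, so there are at most $w+1$ of them; summing the nonzero incidences over columns yields $m \le (w+1)n$. Consequently a nice tree decomposition has $O(w\,m) + n = O(w^2 n)$ nodes. Each node carries a table of size $(2aX+1)^{w+1}(X+1) = (aX)^{O(w)}$, and every transition --- join being the costliest, at table-size squared --- runs in $(aX)^{O(w)}$ time; multiplying by the number of nodes gives the claimed $(aX)^{O(\tw_D(A))}\cdot n$ bound.

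I expect the main obstacle to be bookkeeping rather than conceptual: making the global constraint $\|\vex\|_1 \le X$ interact correctly with the local DP (handled by carrying the budget coordinate $s$ and adding budgets at joins), and verifying that each constraint is completely summed before its row is forgotten. The one genuinely load-bearing estimate is the clique argument $m \le (w+1)n$, which is what reduces the natural $(aX)^{O(w)}(n+m)$ bound to the stated linear-in-$n$ form.
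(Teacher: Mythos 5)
Your proof is correct, but it takes a different route from the paper. The paper does not build the dynamic program itself: it linearizes the constraint $\|\vex\|_1 \le X$ by splitting each variable into nonnegative parts $x_i = x_i^+ - x_i^-$ with the extra row $\sum_i (x_i^+ + x_i^-) \le X$, checks that this raises the \emph{incidence} treewidth only to $2\tw_I(A)+2$ (and $\tw_I(A) \le \tw_D(A)+1$), observes that $\|\vex\|_1 \le X$ forces the Ganian et al.\ prefix-sum parameter $\Gamma$ to be at most $aX$, and then invokes their $\Gamma^{O(\tw_I(A))}\cdot n$ algorithm as a black box. You instead run the DP directly on a tree decomposition of $G_D(A)$, carrying the bounded partial constraint-sums together with an explicit $\ell_1$-budget coordinate that adds at join nodes; your key observation that partial sums lie in $\{-aX,\dots,aX\}$ is exactly the reason $\Gamma \le aX$ in the paper's reduction, so the two arguments hinge on the same estimate. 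What your version buys is self-containedness (no variable splitting, no detour through the incidence graph) and an explicit justification of the linear-in-$n$ runtime via the clique bound $m \le (\tw_D(A)+1)n$ after discarding zero rows, which the paper leaves implicit inside the cited result; what the paper's version buys is brevity and a statement that actually holds for the weaker parameter $\tw_I(A)$, as it remarks. One minor point to make explicit in your write-up: obtaining the decomposition of width $O(\tw_D(A))$ should use an approximation algorithm running in single-exponential time in the width so that this preprocessing is absorbed into $(aX)^{O(\tw_D(A))}\cdot n$ (recall $a \ge 2$); otherwise your argument is complete.
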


\ifthenelse{\value{Accumulate} = 1}{
\subparagraph{Proof sketch.} Lemma~\ref{lem:incidence_treewidth} is proved by reformulating the nonlinear constraint $\|\vex\|_1 \leq X$ by ``splitting'' each variable $x_i$ into two non-negative variables $x_i = x_i^+ - x_i^-$, imposing the constraint $\sum_{i=1}^n (x_i^+ + x_i^-) \leq X$, and showing that this does not increase $\tw_D(A)$ much; then, a recent dynamic programming algorithm of Ganian et al.~\cite[Theorem 6]{GOR} does the job.}
{}
\begin{accumulate}
\ifthenelse{\value{Accumulate} = 1}{
\begin{proof}[Proof of Lemma~\ref{lem:incidence_treewidth}]}
{\bpr}
The algorithm follows from a recent result of Ganian et al.~\cite{GOR}:
\begin{proposition}[Ganian et al.~{\cite[Theorem 6]{GOR}}] \label{prop:incidence_treewidth}
\eqref{IP} can be solved in time $\Gamma^{O(\tw_I(A))} \cdot n$,
where $\Gamma = \max_{\vex \in \Z^n \,:\, A \vex = \veb,\, \vel \leq \vex \leq \veu} \max_{i \in [n]} \left \|\sum_{j=1}^i A_j x_j \right \|_\infty$.
\end{proposition}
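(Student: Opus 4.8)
The plan is to realize the nonlinear constraint $\|\vex\|_1 \le X$ by a linear reformulation into an auxiliary instance $(A', \vew', \veb', \vel', \veu')$ of~\eqref{IP} whose incidence treewidth and whose prefix-sum parameter $\Gamma$ (from Proposition~\ref{prop:incidence_treewidth}) are both controlled, and then to invoke that proposition. Concretely, I first clamp the bounds, replacing $l_i, u_i$ by $\tilde l_i := \max\{l_i, -X\}$ and $\tilde u_i := \min\{u_i, X\}$; since every $\vex$ with $\|\vex\|_1 \le X$ satisfies $|x_i| \le X$, this leaves the feasible region unchanged (and if some $\tilde l_i > \tilde u_i$, the instance is infeasible). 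Then I split each variable into its positive and negative parts. Guided by the signs of $\tilde l_i, \tilde u_i$, I introduce nonnegative variables $x_i^+$ and/or $x_i^-$, with the column of $x_i^+$ equal to $A_i$ and that of $x_i^-$ equal to $-A_i$, and with boxes chosen so that $x_i := x_i^+ - x_i^-$ ranges exactly over $[\tilde l_i, \tilde u_i]$ while $x_i^+ + x_i^- \ge |x_i|$, with equality attainable. Finally I append a single \emph{sum row} $\sum_i (x_i^+ + x_i^-) + y = X$ with a fresh slack $y \in [0, X]$, and set $\vew'$ to $w_i$ on $x_i^+$, to $-w_i$ on $x_i^-$, and to $0$ on $y$.

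The equivalence is then routine. The projection $\vex = \vex^+ - \vex^-$ sends a feasible point of the auxiliary instance to a feasible point of~\eqref{IP} with $\|\vex\|_1 \le X$, preserving the objective: the original rows give $A(\vex^+ - \vex^-) = \veb$, the boxes force $x_i \in [\tilde l_i, \tilde u_i]$, and $\|\vex\|_1 = \sum_i |x_i^+ - x_i^-| \le \sum_i (x_i^+ + x_i^-) \le X$ by the sum row. Conversely, the canonical split $x_i^+ = \max\{x_i, 0\}$, $x_i^- = \max\{-x_i, 0\}$, $y = X - \|\vex\|_1$ witnesses the reverse direction, with $\sum_i (x_i^+ + x_i^-) = \|\vex\|_1$. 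Hence the two problems have equal optima and a solution of one yields a solution of the other.

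Next I bound the two quantities governing the running time. For the incidence treewidth: each column of $A'$ is a copy of $A_i$ or of $-A_i$, hence has the same row-support as column $i$ of $A$, so the columns of $A'$ induce exactly the adjacencies of $G_D(A)$ among the original rows. The only new row is the sum row, whose only neighbours in $G_D(A')$ are original rows; placing it into every bag of an optimal tree decomposition of $G_D(A)$ therefore yields a valid tree decomposition of $G_D(A')$, proving $\tw_D(A') \le \tw_D(A) + 1$, whence $\tw_I(A') \le \tw_D(A') + 1 \le \tw_D(A) + 2$ by Lemma~\ref{lem:inc_prim_tw}. For $\Gamma$, the key point is that $\Gamma$ is a maximum of \emph{per-row} prefix sums: any prefix sum of an original row is at most $\|A\|_\infty \sum_i (x_i^+ + x_i^-) \le aX$, and any prefix sum of the sum row is at most $X$, so $\Gamma \le aX$.

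Applying Proposition~\ref{prop:incidence_treewidth} to the auxiliary instance, which has $n' \le 2n+1 = O(n)$ variables, now yields running time $\Gamma^{O(\tw_I(A'))} \cdot n' = (aX)^{O(\tw_D(A))} \cdot n$, as claimed. The one point needing care, and the main obstacle, is the treatment of the original box constraints under the split: a naive split turns $\tilde l_i \le x_i \le \tilde u_i$ into a two-variable constraint $\tilde l_i \le x_i^+ - x_i^- \le \tilde u_i$ that is no longer a box and, if added as explicit rows, must be shown to attach only to cliques of $G_D(A)$ and to contribute only $O(X)$ to $\Gamma$. Choosing the split per variable according to the signs of $\tilde l_i, \tilde u_i$ keeps these as native bounds and avoids adding any row beyond the single sum row; this is exactly what pins the treewidth increase at $+1$ and keeps $\Gamma$ at $O(aX)$ rather than $O(nX)$.
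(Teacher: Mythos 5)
Your proposal does not prove the statement in question; it assumes it. The statement here is the result of Ganian et al.\ (Proposition~\ref{prop:incidence_treewidth}, i.e.\ \cite[Theorem 6]{GOR}): the dynamic-programming algorithm itself, with running time $\Gamma^{O(\tw_I(A))} \cdot n$. The paper never proves this; it imports it as a black box, and its own proof obligation lies elsewhere. Your argument's stated plan is to ``invoke that proposition,'' and every bound you derive (the clamping of boxes, $\tw_I(A') \le \tw_D(A)+2$, $\Gamma \le aX$) serves only to set up that invocation. As a proof of the Proposition this is circular: nowhere do you construct an algorithm, define a state space over a tree decomposition of the incidence graph $G_I(A)$, or explain why $\Gamma^{O(\tw_I(A))}$ states per bag suffice. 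A genuine proof would have to do exactly that: traverse a (nice) tree decomposition of $G_I(A)$, maintain for each constraint-vertex present in a bag the partial sum of its row over the variables already introduced or forgotten (each such partial sum ranging over $O(\Gamma)$ values, which is where the definition of $\Gamma$ enters), and show that these tables can be merged and propagated in total time $\Gamma^{O(\tw_I(A))} \cdot n$. None of this appears in your write-up, so the gap for the actual statement is total.

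What you have in fact written is a proof of Lemma~\ref{lem:incidence_treewidth} (solving \eqref{IP} with the extra constraint $\|\vex\|_1 \le X$ in time $(aX)^{O(\tw_D(A))} \cdot n$), i.e.\ of the statement whose proof in the paper \emph{uses} the Proposition. Judged as that, your argument is correct and even slightly sharper than the paper's: by clamping the bounds to $[-X,X]$ and splitting each variable according to the signs of its box, you keep all box constraints as native variable bounds, add only the single sum row, and obtain $\tw_I(A') \le \tw_D(A) + 2$ together with $\Gamma \le aX$; the paper instead introduces the two-variable constraints $l_i \le x_i^+ - x_i^- \le u_i$ as rows and settles for the cruder bound $\tw_I(A') \le 2\tw_I(A) + 2$. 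But that is a different statement from the one you were asked to prove, and no amount of care in this reduction substitutes for the missing algorithmic core of the Proposition itself.
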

In other words, the parameter $\Gamma$ is bounding the largest number in absolute value which appears in any prefix sum of $A \vex$ for any feasible solution $\vex$.
Note that $\tw_I(A) \leq \tw_D(A) + 1$ by Lemma~\ref{lem:inc_prim_tw}; our proof applies for the more general case of incidence treewidth, but we only state Lemma~\ref{lem:incidence_treewidth} in terms of $\tw_D(A)$ for simplicity of exposition.

In order to use Proposition~\ref{prop:incidence_treewidth}, our only task is to formulate an auxiliary problem where we replace the nonlinear constraint $\|\vex\|_1 \leq X$ with a linear constraint.
This is easy by splitting every variable $x_i$ into its positive and negative part $x_i^+$ and $x_i^-$ which we force to be nonnegative by setting $x_i^+, x_i^- \geq 0$.
Correspondingly, every column $A_i$ of $A$ is now split into $A_i^+ = A_i$ and $A_i^- = -A_i$.
The bounds $l_i \leq x_i \leq u_i$ are rewritten to $l_i \leq x_i^+ - x_i^- \leq u_i$.
Finally, $\|\vex\|_1 \leq X$ in the original problem is equivalent to setting $\sum_{i=1}^n (x_i^+ + x_i^-) \leq X$, and we additionally set $0 \leq x_i^-, x_i^+ \leq X$.
It is easy to observe that this auxiliary problem has incidence treewidth at most $2\tw_I(A) + 2$: replace $x_i$ in each bag by $x_i^+, x_i^-$, add the constraint $\sum_{i=1}^n (x_i^+ + x_i^-) \leq X$ into each bag, and add one of $l \leq x_i^+ - x_i^-$ or $x_i^+ - x_i^- \leq u$ for at most one $i$ for each bag, perhaps for multiple copies of the original bag.
Moreover, by the fact that $\|\vex\|_1 \leq X$ and $a = \|A\|_\infty$, we have that $\Gamma \leq aX$.
The claim follows.
\epr
\end{accumulate}

\begin{proof}[Proof of Theorem~\ref{thm:treefold}]
Let $A = T^D(A_1, \dots, A_\tau)$ be a tree-fold matrix.
By Lemma~\ref{lem:treefold_norm} we have that $g_1(A) \leq f_{\textrm{tf-norm}}(a, r_1, \dots, r_\tau) =: M$
We show how to construct a $\Lambda$-Graver-best oracle.
Given an integer $\lambda \in \N$, solve
\mbox{
$
\min \{\lambda \vew \veh \mid A \veh = 0, \, \vel \leq \vex + \lambda \veh \leq \veu, \, \|\veh\|_1 \leq M\}
$
}
using Lemma~\ref{lem:incidence_treewidth}; clearly the result is a $\lambda$-Graver-best step.
Using a simple induction and the inductive construction of $A$, one gets that $A$ has $\tw_D(A) \leq \td_D(A) \leq r_1 + \cdots + r_\tau + 1$ and thus the oracle is realized in \FPT time.
Lemma~\ref{lem:lambda_gb_oracle} then yields a Graver-best oracle, which, combined with Theorem~\ref{thm:oracle}, finishes the proof.
\end{proof}

\subparagraph{$n$-fold ILP.}
A special case of tree-fold ILP is $n$-fold ILP, obtained by taking $T$ to be the star with $n$ leaves and $A = T^D(A_1, A_2)$, where $A_1 \in \Z^{r \times t}$ and $A_2 \in \Z^{s \times t}$.


\begin{proof}[Proof of Theorem~\ref{thm:nfolds}]
Before we apply Lemma~\ref{lem:incidence_treewidth}, we need to bound $g_1(A)$.
It follows from the proof of \cite[Lemma 6.1]{HOR} that there is a number $g(A) = \max_{\vev \in \G(A_1 \G(A_2))} \|\vev\|_1$ such that $g_1(A) \leq g(A) \cdot g_1(A_2)$.

Let $d_2 \leq (2a+1)^s$ be the number of distinct columns of $A_2$.
De Loera et al.~\cite{DHK} give a bound on $g_1(A)$ in terms of the number of distinct columns of a matrix $A$:
\begin{lemma}[$g_1(A)$ only depends on distinct columns~{\cite[Corollary 3.7.4]{DHK}}] \label{lem:rep_columns_bound}
Let $A \in \Z^{m \times n}$ be a matrix of rank $r$, let $d$ be the number of different columns in $A$, and let $a = \max \{2, \|A\|_\infty\}$. Then
$
g_1(A) \leq (d-r)(r+1)(\sqrt{m}a)^m \enspace .
$
\end{lemma}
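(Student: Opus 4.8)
The plan is to split the estimate into two independent parts: a reduction that lets us assume the columns of $A$ are pairwise distinct (so that the number of columns equals $d$), followed by a bound on $g_1$ of a rank-$r$ matrix with exactly $d$ columns, obtained by decomposing every Graver element conformally into a controlled number of circuits of controlled norm.

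First I would eliminate repeated columns. If $A_j = A_k$ for some $j \neq k$, then $\mathbf{e}_j - \mathbf{e}_k \in \Ker A$ is a circuit, and no $\veg \in \G(A)$ can have $g_j, g_k$ of strictly opposite signs: otherwise one of $\pm(\mathbf{e}_j - \mathbf{e}_k)$ would be conformal to $\veg$ and distinct from it, contradicting $\sqsubseteq$-minimality. Hence within each class of equal columns the nonzero entries of $\veg$ share a common sign, so collapsing each class by summing its coordinates produces $\tilde\veg \in \Ker \tilde A$ with \emph{no} cancellation, where $\tilde A \in \Z^{m \times d}$ is the matrix of distinct columns; in particular $\|\tilde\veg\|_1 = \|\veg\|_1$. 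A short lifting argument (distributing any conformal refinement of $\tilde\veg$ back among the collapsed coordinates, which is possible since the signs agree and the total mass matches) shows that $\tilde\veg$ is itself $\sqsubseteq$-minimal, i.e. $\tilde\veg \in \G(\tilde A)$; thus $g_1(A) \le g_1(\tilde A)$ and it remains to bound $g_1(\tilde A)$.

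Second I would bound the circuits and then the Graver elements of $\tilde A$. Every $\vecc \in \C(\tilde A)$ has support of size at most $r+1$, being a minimal linearly dependent set of columns in a rank-$r$ matrix, and by Cramer's rule its entries are, up to sign, maximal minors of the submatrix indexed by $\supp(\vecc)$. These are determinants of order at most $r$ with entries bounded by $a$, so Hadamard's inequality gives $\|\vecc\|_\infty \le (\sqrt m a)^m$ and hence $\|\vecc\|_1 \le (r+1)(\sqrt m a)^m$. To pass from circuits to Graver elements, I would write each $\veg \in \G(\tilde A)$ as a conformal sum $\veg = \sum_i \vecc_i$ of circuits lying in the orthant of $\veg$; since conformality forbids cancellation, $\ell_1$-norms add, so $\|\veg\|_1 = \sum_i \|\vecc_i\|_1$. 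Bounding the number of summands by $d - r$ then yields $g_1(\tilde A) \le (d-r)(r+1)(\sqrt m a)^m$, which combined with the first step proves the claim.

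The main obstacle is precisely this last bound on the number of circuits. The real-arithmetic Carathéodory theorem only shows that $\veg$ lies on a face of the orthant cone generated by at most $d - r$ extreme rays (the dimension of $\Ker \tilde A$), which controls a \emph{fractional} conic combination but not the integral, conformal decomposition whose $\ell_1$-norms we are adding. Upgrading this to an integer conformal sum of at most $d-r$ circuits, each contributing with the stated norm, is the technical heart of the statement and is exactly where the conformal-decomposition (``positive sum'') machinery of \cite{DHK} (see also \cite{Onn}) is invoked; I would import that result rather than reprove it.
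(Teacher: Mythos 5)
You should first note that the paper does not prove this lemma at all: it is imported verbatim as \cite[Corollary~3.7.4]{DHK} (just as the companion fact that repeating columns does not increase $g_1$ is imported as \cite[Corollary~3.7.2]{DHK}), so there is no in-paper argument to compare against. Your two-stage outline --- collapse repeated columns, then bound Graver elements of the distinct-column matrix by a conformal circuit decomposition with Cramer/Hadamard control on each circuit --- is exactly the standard route behind the cited result, and the first stage and the circuit-norm estimates are essentially fine (modulo the harmless edge case $\veg=\pm(\mathbf{e}_j-\mathbf{e}_k)$, which your sign argument excludes from the collapsing step but which has $\ell_1$-norm $2$ anyway).

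The one place you go astray is in how you frame the ``technical heart.'' You ask for an \emph{integral} conformal sum $\veg=\sum_i\vecc_i$ of at most $d-r$ circuits with $\|\veg\|_1=\sum_i\|\vecc_i\|_1$; no such decomposition exists in general (Graver elements need not be integer combinations of circuits), and it is not what \cite{DHK} proves. What is actually used is the \emph{fractional} statement, and it is not hard: the circuits conformal to $\veg$ are the extreme rays of the pointed cone $\Ker(A)\cap O_{\veg}$ (the kernel intersected with the closed orthant of $\veg$), which has dimension at most $d-r$, so real Carath\'eodory already gives $\veg=\sum_{i=1}^{t}\lambda_i\vecc_i$ with $t\le d-r$, $\lambda_i\ge 0$, and all $\vecc_i\sqsubseteq\veg$. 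The missing observation is that $\sqsubseteq$-minimality forces $\lambda_i<1$ for every $i$ unless $\veg$ is itself a circuit: if some $\lambda_1\ge 1$, then $\veg-\vecc_1=(\lambda_1-1)\vecc_1+\sum_{i\ge2}\lambda_i\vecc_i$ lies in the same orthant, hence $\vecc_1\sqsubseteq\veg$ and so $\vecc_1=\veg$. Since the decomposition is conformal, $\|\veg\|_1=\sum_i\lambda_i\|\vecc_i\|_1<(d-r)\max_{\vecc\in\C(A)}\|\vecc\|_1\le(d-r)(r+1)(\sqrt{m}a)^m$. With that two-line upgrade your sketch closes; as written, the step you defer to \cite{DHK} is both mis-stated and the only real gap.
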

Thus, $g_1(A_2) \leq (d_2-s)(s+1)(\sqrt{s}a)^s \leq (as)^{O(s)}$.
Let $G_2$ be a matrix whose columns are elements of $\G(A_2)$.
We have that $\|A_1G_2\|_\infty \leq a \cdot (as)^{O(s)} \leq (as)^{O(s)}$.
Moreover, since $A_1 G_2$ has $r$ rows, it has at most $d_1 = \left((as)^{O(s)} \right)^r = (as)^{O(rs)}$ distinct columns.
Again, by Lemma~\ref{lem:rep_columns_bound} we have that $g(A) = g_1(A_1 G_2) \leq (d_1 - r)(r+1)(\sqrt{r}as)^{O(s)})^r \leq (ars)^{O(rs)}$.
Combining, we get $g_1(A) \leq (ars)^{O(rs)} \cdot (as)^{O(s)} \leq (ars)^{O(rs)} =: M$.

We have $\tw_D(A) \leq r+s + 1$ and thus running the algorithm of Lemma~\ref{lem:incidence_treewidth} once takes time $\left( (ars)^{O(rs)} \right)^{r+s} nt \leq (ars)^{O(r^2 s + r s^2)} nt$ and finds the $\lambda$-Graver-best step.
Lemma~\ref{lem:lambda_gb_oracle} then yields a Graver-best oracle, which, combined with Theorem~\ref{thm:oracle}, finishes the proof.
\epr

\section{Primal and Dual Treedepth} \label{sec:treedepth}
\begin{accumulate}
\ifthenelse{\value{Accumulate} = 1}{
\section{Additions to Section~\ref{sec:treedepth}}
\setcounter{figure}{0}
}{}
\end{accumulate}

We prove Theorems~\ref{thm:primaltd} and~\ref{thm:dualtd} by showing that an ILP with bounded primal (dual) treedepth can be embedded into a multi-stage stochastic (tree-fold) ILP without increasing the parameters too much.
The precise notion of how one ILP is embedded in another is captured as follows.

\begin{definition}[Extended formulation]
Let $n' \geq n$, $m' \in \N$, $A \in \Z^{m \times n}$, $\veb \in \Z^m$, $\vel, \veu \in (\Z \cup \{\pm \infty\}^n$ and $A' \in \Z^{m^{\prime} \times n^{\prime}}$, $\veb' \in \Z^{m'}$, $\vel', \veu' \in (\Z \cup \{\pm \infty\})^{n'}$.
We say that $A'(\vex, \vey) = \veb', \vel' \leq (\vex, \vey) \leq \veu'$ is an \emph{extended formulation} of $A \vex = \veb, \vel \leq \vex \leq \veu$ if $\{\vex \mid A \vex = \veb, \vel \leq \vex \leq \veu\} = \{\vex \mid \exists \vey: A'(\vex, \vey) = \veb', \vel' \leq (\vex, \vey) \leq \veu'\}$.
\end{definition}

We note that from here on we always assume that if $\td_P(A) = k$ or $\td_D(A) = k$, then there is a \emph{tree} (not a forest) $F$ of height $k-1$ such that $G_P(A) \subseteq \cl(F)$ or $G_D(A) \subseteq \cl(F)$, respectively.
Otherwise $G_P(A)$ is not connected and each component corresponds to a subset of variables which defines an ILP that can be solved independently; similarly for $G_D(A)$.

\subsection{Primal Treedepth}
\begin{lemma}[Bounded primal treedepth $\Rightarrow$ multi-stage stochastic]
\label{lem:primaltd_uniformization}
Let $A, \veb, \vel$ and $\veu$ as in~\eqref{IP} be given, let $a = \max \{2, \|A\|_\infty\}$ and $\tau + 1 = \td_P(A)$.
Then there exists $C \in \Z^{m' \times n'}$, $\veb' \in \Z^{m'}$ and $\vel', \veu' \in (\Z \cup \{\pm \infty\})^{n'}$, $n' \leq n \tau$, $m' \leq (2a+1)^{\tau^2}$,  which define an integer program $C(\vex, \vey) = \veb', \vel' \leq (\vex, \vey) \leq \veu'$, which is an extended formulation of $A \vex = \veb, \vel \leq \vex \leq \veu$.
Moreover, there exist matrices $B_1, \dots, B_{\tau-1} \in \Z^{(2a+1)^{\tau^2} \times \tau}$ and $B_\tau \in \Z^{(2a+1)^{\tau^2} \times \left(\tau + (2a+1)^{\tau^2}\right)}$ and a tree $T$ such that $C = T^P(B_1, \dots, B_\tau)$ is a multi-stage stochastic constraint matrix, and all can be computed in time $f_{\textrm{P-embed}}(a, \td_P(A)) \cdot n^2$ for some computable function $f_{\textrm{P-embed}}$.
\end{lemma}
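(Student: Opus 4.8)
The plan is to use a treedepth decomposition of $G_P(A)$ as the blueprint for the multi-stage stochastic tree $T$. First I would invoke the algorithm of Reidl et al.~\cite{ReidlRVS:2014} to compute, in time $f_{\textrm{td}}(\td_P(A)) \cdot n$, a rooted tree $F$ of height $\tau$ with $G_P(A) \subseteq \cl(F)$; by the convention stated just before the lemma we may assume $F$ is a single tree. The structural fact driving everything is this: the support of any row $\vea_k$ of $A$ is a clique in $G_P(A)$, and cliques of $\cl(F)$ are exactly root-to-node chains, so the support of each row lies entirely on a single root-to-leaf path of $F$. Hence each constraint of $A$ can be attached to the path (equivalently, the leaf) beneath which its support sits, which is precisely the \emph{brick} organization that a multi-stage stochastic matrix demands.

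The construction then proceeds by \emph{uniformizing} $F$ into the tree $T$ underlying $C$. I would (i) extend every root-leaf path to a common depth and pad every node's column group to the fixed width $\tau$ using dummy variables pinned to $0$ through their bounds $\vel', \veu'$; and (ii) replace the non-uniform local constraint blocks by a single \emph{universal} block at each level. Concretely, along any root-leaf path there are at most $\tau^2$ primary coordinates, each an entry in $\{-a,\dots,a\}$, so there are at most $l := (2a+1)^{\tau^2}$ possible constraint rows; I would have every brick carry exactly these $l$ rows, which makes $B_1, \dots, B_\tau$ literally identical at each level and thus certifies $C = T^P(B_1, \dots, B_\tau)$ as a genuine multi-stage stochastic matrix. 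Each actual row of $A$ coincides with one of the $l$ universal rows of the appropriate brick, and I set the corresponding entry of $\veb'$ to the original right-hand side; the universal rows corresponding to no row of $A$ are neutralized by the $(2a+1)^{\tau^2}$ auxiliary (slack) columns placed in $B_\tau$, so that those rows impose nothing on $\vex$.

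It then remains to verify that $C$ has the claimed block shape, that $C(\vex,\vey) = \veb'$, $\vel' \le (\vex,\vey) \le \veu'$ is an extended formulation of $A\vex = \veb$, $\vel \le \vex \le \veu$, and that the stated size and running-time bound $f_{\textrm{P-embed}}(a, \td_P(A)) \cdot n^2$ follow. For the extended-formulation claim I would establish both inclusions of the $\vex$-projection: the dummy variables are fixed to $0$ and so leave every true constraint unchanged, while the slacks let each spurious universal row be satisfied for free, so any $\vex$ feasible for the original problem lifts to some feasible $(\vex, \vey)$ and, conversely, any feasible $(\vex, \vey)$ restricts to a feasible $\vex$. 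The dimension and time estimates are then a direct count over the nodes of $T$ together with the common brick height $l$.

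I expect the main obstacle to be the uniformization itself: reconciling nodes of different degrees, paths of different lengths, and columns with genuinely different constraint interactions into blocks that are \emph{identical} at every level, while proving that the auxiliary slacks cancel exactly the spurious universal rows and no genuine ones. Getting this routing right — so that the $\vex$-projection is preserved exactly in both directions — is the delicate part; once it is in place, the claimed bounds are routine bookkeeping.
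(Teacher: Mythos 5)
Your proposal follows essentially the same route as the paper's proof: compute the treedepth decomposition $F$, observe that each row's support is a clique of $\cl(F)$ and hence lies on a single root-leaf path, pad with zero columns (pinned to $0$ by their bounds) to uniformize all paths to depth $\tau^2$ with $\tau$ evenly spaced branching levels, take as universal block the matrix of all $(2a+1)^{\tau^2}$ possible rows split into $\tau$ column groups, append an identity of slack columns to $B_\tau$ to neutralize the universal rows not present in $A$, and match each genuine row to its universal copy with the original right-hand side. The delicate points you flag (uniformization across unequal degrees and path lengths, and routing the slacks so only spurious rows are cancelled) are exactly the ones the paper resolves via its ``fret'' construction and the per-path identification of each row $\vea$ with the unique row $(\vea,1)$ of $C$, so no change of approach or missing idea is involved.
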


\bpr
Let $F$ be a rooted tree of height $\tau$ such that $G_P(A) \subseteq \cl(F)$ (recall that it can be computed in time $f_{\textrm{td}}(\td_P(A)) \cdot |V(G_P(A))|$).

\subparagraph{Step 1: Dummy columns.}

\begin{accumulate}
\begin{figure}[bt]
\centering
\includegraphics[scale=1]{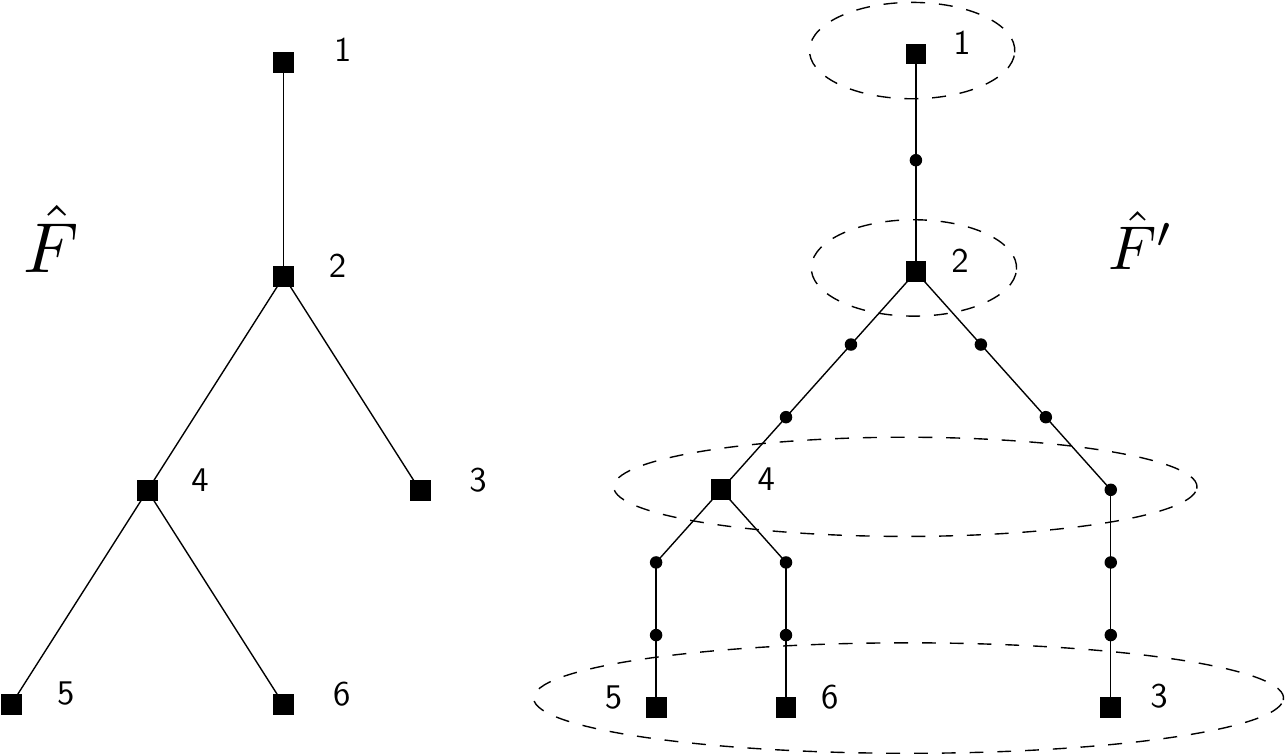}
\caption{Tree $\hat{F}'$ obtained from $\hat{F}$ by the transformation of step 1. We have $\tau = 3$, squares correspond to original variables; discs correspond to dummy variables; the frets are encircled.}
\label{fig:trees}
\end{figure}

\end{accumulate}

We make $F$ structured by adding dummy columns.
Observe that every root-leaf path is of length at most $\tau$ and thus contains at most $\tau$ branching vertices.
Unless $F$ is a path, obtain a matrix $A'$ from $A$ by inserting zero columns into $A$ in order to make the path between any two branching vertices of length $\tau$; a special case is the root which we force to be in distance $\tau-1$ from the closest branching vertex.
Set lower and upper bounds on the corresponding new variables to $0$.
A zero column is an isolated vertex in the primal graph and thus can be inserted to an arbitrary path of the tree $F$.
Moreover, if any leaf is at depth less than $\tau^2-1$, insert zero columns in the same way to make it be at depth exacty $\tau^2-1$.
Now there exists a rooted tree $F'$ of height $\tau^2-1$ such that $G_P(A') \subseteq \cl(F')$, all branching vertices are in distances $0, \tau-1, 2\tau-1, \dots, (\tau-1)\tau-1$ from the root, and all leaves are at depth exactly $\tau^2-1$.
We call all vertices at depth $0$ or $i \tau-1$, for $i \in [\tau]$, \emph{frets}, including the root and leaves.
For example, take the tree $\hat{F}$ in Figure~\ref{fig:trees}; the result $\hat{F}'$ of this procedure is depicted next to it.

\subparagraph{Step 2: Multi-stage stochastic extended formulation.}
Consider a root-leaf path $P$ in $F'$: its vertex set $V(P)$ corresponds to a certain subset of the columns of $A'$, with $|V(P)| \leq \tau^2$.
Furthermore, any row $\vea$ of $A'$ with $\suppo(\vea) \subseteq V(P)$ can be written as a vector $(\vea^1, \dots, \vea^\tau) \in \Z^{\tau^2}$ with $\vea^i \in \Z^\tau$ for each $i \in [\tau]$, and with $\|\vea\|_\infty \leq a$.
The \emph{bricks} $\vea^i$ correspond to segments between frets (including the end fret, i.e., the fret farthest from the root; segments adjacent to the root also contain the root).
Also, for any row $\vea$ of $A'$, there exists some root-leaf path $P$ such that $\suppo(\vea) \subseteq V(P)$.

This inspires the following construction: let $B \in \Z^{(2a+1)^{\tau^2} \times \tau^2}$ be the matrix whose columns are all the possible vectors $\vea \in \Z^{\tau^2}$ with $\|\vea\|_\infty \leq \|A\|_\infty$.
Let $B_i \in \Z^{(2a+1)^{\tau^2} \times \tau}$, for $i \in [\tau]$, be the submatrix of $B$ formed by rows $(i-1)\tau+1, \dots, i\tau$ and modify the last such submatrix $B_\tau$ by putting $B_\tau := (B_\tau \mid I)$ where $I \in \Z^{(2a+1)^{\tau^2} \times (2a+1)^{\tau^2}}$ is the identity matrix; the variables corresponding to columns of $I$ will play the role of slack variables.
Let $T$ be the tree of height $\tau$ obtained from $F'$ by contracting all paths between frets.

Now, let $C = T^P(B_1, \dots, B_\tau)$.
Obtain $\tilde{F}$ from $F'$ by appending a leaf to every leaf, and observe that $G_P(T^P(B_1, \dots, B_\tau)) \subseteq \cl(\tilde{F})$; the new leaves correspond to the slack variables in $B_\tau$.
Our goal now is to construct a right hand side vector $\veb'$ and lower and upper bounds $\vel', \veu'$ to enforce exactly the constraints present in $A\vex = \veb$.
For every root-leaf path $P$ in $F'$ there is a corresponding root-leaf path $\tilde{P}$ in $\tilde{F}$ such that $\tilde{P}$ is $P$ with an additional leaf.
Fix a root-leaf path $P$ in $F'$. 
For every row $\vea$ of $A'$ with $\suppo(\vea) \subseteq V(P)$ and right hand side $\beta$, there exists a unique row $\vecc$ of $C$ with $\suppo(\vecc) \subseteq V(\tilde{P})$ such that $\vecc = (\vea, 1)$, and we set the right hand side of row $\vecc$ to $\beta$.


For every row $\vecc$ of $C$ which was not considered in the previous paragraph, set the right hand side to $0$ and for the slack variable of this row set the lower bound to $-\infty$ and the upper bound to $\infty$.
Let us remark in passing that we are not limited by using the standard equality form of ILP: transforming an instance $A \vex \leq \veb$ into the standard equality form by adding slack variables only possibly increases the treedepth by $1$.
\epr

\subsection{Dual Treedepth}
\begin{lemma}[Bounded dual treedepth $\Rightarrow$ tree-fold] \APXmark
\label{lem:dualtd_uniformization}
Let $A, \veb, \vel$ and $\veu$ be as in~\eqref{IP}, $a = \max \{2, \|A\|_\infty\}$ and $\tau + 1 = \td_D(A)$.
Then there exists $D \in \Z^{m' \times n'}$, $\veb' \in \Z^{m'}$ and $\vel', \veu' \in (\Z \cup \{\pm \infty\})^{n'}$, $n' \leq nt$, $t \leq n$, $m' \leq m \cdot \tau$, which define an extended formulation of $A \vex = \veb, \vel \leq \vex \leq \veu$.
Moreover, there exist matrices $A_1, \dots, A_\tau \in \Z^{\tau \times t}$ and a tree $T$ such that $D = T^D(A_1, \dots, A_\tau)$ is a tree-fold constraint matrix, and all can be computed in time $f_{\textrm{D-embed}}(a, \td_D(A)) \cdot n^2$ for some computable function $f_\textrm{D-embed}$.
\end{lemma}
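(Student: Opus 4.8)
The plan is to prove Lemma~\ref{lem:dualtd_uniformization} as the row-for-column dual of Lemma~\ref{lem:primaltd_uniformization}, exploiting that $G_D(A) = G_P(A^\transpose)$ and that tree-fold matrices are, up to the block structure, transposes of multi-stage stochastic matrices. The one genuine difference is that an extended formulation always adds \emph{columns}, so the construction cannot be obtained by literally transposing the primal one; instead I mirror the steps of the primal proof with the roles of rows and columns -- and of the right-hand side and the variable bounds -- interchanged.

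First I would take a tree $F$ of height $\tau$ with $G_D(A) \subseteq \cl(F)$, whose vertices are the rows of $A$; the defining property of $\cl(F)$ is that the support (set of rows) of every column of $A$ is a clique in $G_D(A)$ and hence lies on a single root-leaf path of $F$. Dually to Step~1 of Lemma~\ref{lem:primaltd_uniformization}, I insert \emph{dummy zero rows} into $A$ -- which are isolated vertices of $G_D(A)$ and may therefore be placed on any path -- so as to make all branching occur at depths $0, \tau-1, 2\tau-1, \dots$ (the \emph{frets}) and all leaves sit at depth $\tau^2-1$. This yields $A'$ and a tree $F'$ of height $\tau^2-1$ with $G_D(A') \subseteq \cl(F')$, in which every root-leaf path carries exactly $\tau^2$ rows, grouped into $\tau$ fret-segments of $\tau$ rows each (padding short segments with further zero rows).

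Next, restricting any column of $A'$ to a root-leaf path gives a vector in $\Z^{\tau^2}$ of $\infty$-norm at most $a$, split level-wise into $\tau$ segments in $\Z^\tau$. I group the original columns into \emph{types} according to this pattern; counting each type with the largest multiplicity it attains on a single path, the total number of slots is $\sum_k \max_p c_{k,p} \le \sum_{k,p} c_{k,p} = n$, so $t \le n$. Each slot defines one column of each block: the $s$-th segment of a slot of type $k$ becomes the corresponding column of $A_s \in \Z^{\tau\times t}$, so the blocks are the \emph{same} at every node of a given level. Contracting each fret-segment of $F'$ to a single node produces a tree $T$ of height $\tau-1$, and I set $D := T^D(A_1,\dots,A_\tau)$. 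By the tree-fold recursion the $\tau$ rows of each node are placed on that node's fret-segment of $A'$, and each slot is instantiated once per brick along its root-leaf path, so that along the path $p_i$ of an original column $i$ the blocks reassemble exactly the pattern of column $i$.

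Finally I read off the extended formulation, dualizing the primal use of the right-hand side and slacks: the rows of $D$ are the rows of $A'$, so I set $\veb'$ to $\veb$ on the original rows and to $\mathbf{0}$ on the dummy rows; the variable of each original column $i$ is kept with its bounds $l_i,u_i$, while every \emph{spurious} slot (a type instantiated on a brick carrying no corresponding original column) is fixed by setting $l' = u' = 0$. Then $(\vex,\mathbf{0})$ is feasible for $D$ exactly when $\vex$ is feasible for $A\vex=\veb,\ \vel\le\vex\le\veu$, and conversely every feasible point of $D$ has all auxiliary coordinates zero, so $D(\vex,\vey)=\veb'$ reduces to $A'\vex=\veb'$; hence $D$ is an extended formulation. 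The main obstacle is the typing step: making the blocks $A_1,\dots,A_\tau$ simultaneously \emph{uniform} across all nodes of each level and rich enough to realize every original column while keeping $t\le n$. This is exactly what grouping by pattern achieves -- in contrast to the primal proof, which could afford the cruder all-patterns block of exponential size and absorb unused rows by slack columns -- and verifying that fixing the spurious slots to $0$ leaves the projection onto $\vex$ unchanged is the crux; the size bounds $n'\le nt$, $t\le n$, $m'\le m\tau$ then follow by counting bricks, slots, and fret-nodes.
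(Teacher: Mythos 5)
Your proposal is correct and follows essentially the same route as the paper's proof: restructure the tree with dummy zero rows so that all branching happens at frets and all leaves sit at depth $\tau^2-1$, form the blocks from the column patterns taken with their maximum per-path multiplicity (which is exactly how the paper keeps $t\le n$, since its universal matrix $\tilde C$ of all $(2a+1)^{\tau^2}$ patterns is weighted by multiplicities that vanish for unused patterns), map each original column injectively to a matching slot, and kill the spurious slots by setting $l'=u'=0$. The points you flag as the crux --- uniformity of the blocks across each level and the harmlessness of the zero-fixed slots --- are precisely the injective mapping $\varphi$ and the final feasibility argument in the paper's proof.
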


\ifthenelse{\value{Accumulate} = 1}{
\subparagraph{Proof sketch.} The proof is analogous to that of Lemma~\ref{lem:primaltd_uniformization}, except now paths in the tree $F$ correspond to collections of columns of $A$ instead of rows and thus there are no slack variables in the new tree-fold instance; however, we have to deal with duplicitous columns.
}
{}

\begin{accumulate}
\ifthenelse{\value{Accumulate} = 1}{
\begin{proof}[Proof of Lemma~\ref{lem:dualtd_uniformization}]}
{\bpr}
\ifthenelse{\value{Accumulate} = 0}{
The construction is analogous to the one in Lemma~\ref{lem:primaltd_uniformization}; the key differences are that
the paths in a tree $F$ ($G_D(A) \subseteq \cl(F)$) correspond to collections of columns of $A$ instead of rows and thus there are no slack variables in the tree-fold instance which we shall construct, however we have to deal with duplicitous columns.
}{}

Let $F$ be a tree of height $\tau$ such that $G_D(A) \subseteq \cl(F)$, let $A'$ be a matrix and $F'$ be a tree of height $\tau^2-1$ obtained as in the proof of Lemma~\ref{lem:primaltd_uniformization} (swapping rows and columns and noting that adding zero rows with zero right hand sides does not change the problem) such that $G_D(A') \subseteq \cl(F')$.
Similarly let $T$ be the tree obtained from $F'$ by contracting all paths between frets.

Let $\tilde{C} \in \Z^{\tau^2 \times (2a+1)^{\tau^2}}$ be the matrix of all columns of dimension $\tau^2$ with values from $[-a, a]$.
For every column $\vecc$ of $\tilde{C}$ let $m_\vecc = \max_{P: \text{ root-leaf path in } F'} |\{\vea \mid \vea = \vecc,\, \suppo(\vea) \subseteq V(P)\}|$ be the maximum multiplicity of columns identical to $\vecc$ over all root-leaf paths in $F'$.
Now let $C \in \Z^{\tau^2 \times t}$, $t = \sum_{\vecc \in C} m_\vecc$, be obtained from $\tilde{C}$ by taking each column $\vecc$ with multiplicity $m_{\vecc}$ and let $A_i \in \Z^{\tau \times t}$ for $i \in [\tau]$ be the matrix composed of rows $(i-1)\tau+1, \dots, i\tau$ of $C$.
Finally, let $D = T^D(A_1, \dots, A_\tau)$; we have $G_D(T^D(A_1, \dots, A_\tau)) \subseteq \cl(F')$.
It is now easy to see that we can construct an injective mapping $\varphi$ from the columns of $A'$ to the columns of $D$ such that $\varphi(\vea) = \ved$ satisfies $\vea = \ved$ and $\suppo(\vea) = \suppo(\ved) = V(P)$ for some root-leaf path $P$ in $F'$.

It remains to construct $\vel', \veu'$ and $\veb'$.
Let $\veb'$ be constructed from $\veb$ by inserting zeroes in place where zero rows were inserted when constructing $A'$ from $A$.
Let $\vea_i$ be the $i$-th column of $A'$ and $\ved_j = \varphi(\vea_i)$ be the $j$-th column of $D$.
Then set $l'_j = l_i$ and $u'_j = u_i$.
For every column $\ved_j$ of $D$ without a preimage in $\varphi$ set $l'_j = u'_j = 0$.

We show that $D \vex' = \veb', \vel' \leq \vex' \leq \veu'$ is an extended formulation of $A \vex = \veb, \vel \leq \vex \leq \veu$.
The columns of $D$ which do not have a preimage in $A'$ correspond to variables which are always $0$, so we can disregard them.
Thus the set of columns of $D$ corresponding to possibly non-zero variables is exactly the set of columns of $A'$.
Moreover, $A'$ has been derived from $A$ by inserting zero rows, which do not affect feasibility.
\epr
\end{accumulate}

\begin{accumulate}
\subsection{Avoiding Reductions}
We desire to highlight that it is in fact not necessary to reduce a bounded primal (dual) treedepth instance to a multi-stage stochastic (tree-fold) instance.
All that is needed are the bounds on $g_\infty(A)$ and $g_1(A)$, since then the algorithms we use (Lemma~\ref{lem:primal_treewidth} and Lemma~\ref{lem:incidence_treewidth}) already work on bounded treedepth instances.
To see that, it suffices to show that $g_\infty(A) \leq g_\infty(A')$ and $g_1(A) \leq g_1(A')$ for any extended formulation of~\eqref{IP} with a matrix $A'$, i.e., the norm bounds transfer.
These claims are an easy corollary of the following stronger lemma.

\begin{lemma}[Graver basis of extended formulations] \label{lem:graver_ef}
Let $A' (\vex, \vey) = \veb', \vel' \leq (\vex, \vey) \leq \veu'$ be an extended formulation of $A \vex = \veb, \vel \leq \vex \leq \veu$.
Then $\G(A) \subseteq H = \{\veg \mid \exists \veh: (\veg, \veh) \in \G(A')\}$.
\end{lemma}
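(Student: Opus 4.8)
The plan is to exploit the fundamental \emph{conformal decomposition} property of Graver bases together with the correspondence between the integer kernels of $A$ and $A'$ that an extended formulation induces. Recall the standard fact (the positive-sum / sign-compatible decomposition property, see~\cite{Onn}) that for any integer matrix $M$, every $\vez \in \Ker_\Z(M) \setminus \{0\}$ admits a conformal decomposition $\vez = \sum_{j} \veg_j$ into (not necessarily distinct) Graver elements $\veg_j \in \G(M)$ with $\veg_j \sqsubseteq \vez$ for all $j$. The property of the extended formulation I will use is the kernel-level statement $\pi_{\vex}\bigl(\Ker_\Z(A')\bigr) = \Ker_\Z(A)$, where $\pi_{\vex}$ projects onto the $\vex$-coordinates; concretely, $\vez \in \Ker_\Z(A)$ if and only if $(\vez, \vew) \in \Ker_\Z(A')$ for some $\vew$.

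First I would fix $\veg \in \G(A)$, so that $\veg \in \Ker_\Z(A) \setminus \{0\}$, and use the lifting direction of the kernel correspondence to obtain $\veh$ with $(\veg, \veh) \in \Ker_\Z(A')$; note $(\veg,\veh) \neq 0$ since $\veg \neq 0$. Next I would apply conformal decomposition in $\Ker_\Z(A')$ to write $(\veg, \veh) = \sum_{j=1}^{k} (\vep_j, \veq_j)$ with each $(\vep_j, \veq_j) \in \G(A')$ and $(\vep_j, \veq_j) \sqsubseteq (\veg, \veh)$. Projecting onto the $\vex$-coordinates yields the conformal sum $\veg = \sum_{j=1}^k \vep_j$ with each $\vep_j \sqsubseteq \veg$, and by the projection direction of the correspondence each $\vep_j \in \Ker_\Z(A)$.

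The heart of the argument is then a minimality-forcing step. For each $j$, either $\vep_j = 0$, or $\vep_j$ is a nonzero element of $\Ker_\Z(A)$ with $\vep_j \sqsubseteq \veg$; since $\veg$ is $\sqsubseteq$-minimal in $\Ker_\Z(A)\setminus\{0\}$, the latter forces $\vep_j = \veg$. Hence each $\vep_j \in \{0, \veg\}$. Because $\veg = \sum_j \vep_j$ is a conformal sum (all terms lie in the orthant of $\veg$, so no cancellation occurs) and $\veg \neq 0$, exactly one index $j_0$ can satisfy $\vep_{j_0} = \veg$ while all other terms vanish. Consequently $(\vep_{j_0}, \veq_{j_0}) = (\veg, \veq_{j_0}) \in \G(A')$, and taking $\veh := \veq_{j_0}$ witnesses $\veg \in H$, which completes the proof.

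The main obstacle is not the combinatorial core above — which is robust and in particular absorbs any Graver elements of $A'$ supported purely on the $\vey$-coordinates, as these simply contribute the $\vep_j = 0$ terms — but rather justifying the kernel correspondence $\pi_{\vex}(\Ker_\Z(A')) = \Ker_\Z(A)$ from the hypothesis. The bare feasible-set definition of an extended formulation fixes only a single right-hand side and bounded region, which does not by itself pin down the homogeneous integer kernels; I would therefore establish the correspondence at the level of the matrices for the formulations actually constructed (Lemmas~\ref{lem:primaltd_uniformization} and~\ref{lem:dualtd_uniformization}), where the additional $\vey$-variables arise as slacks, dummy (zero) columns, and duplicated columns, each operation leaving $\pi_{\vex}(\Ker_\Z(A'))$ equal to $\Ker_\Z(A)$. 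This is the one place where the explicit structure of the construction, rather than the abstract definition alone, is needed.
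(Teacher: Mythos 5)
Your conformal-decomposition argument is a genuinely different route from the paper's: the paper never touches kernels or sign-compatible decompositions, but instead observes that $H$ is a test set for~\eqref{IP} (lift a non-optimal feasible $\vex$ to a non-optimal feasible $(\vex,\vey)$ of the extended formulation, take an augmenting $(\veg,\veh)\in\G(A')$ there, and project; feasibility of $\vex+\veg$ and $\vew(\vex+\veg)<\vew\vex$ are immediate from the definition of extended formulation) and then invokes the fact that $\G(A)$ is the inclusion-wise \emph{minimal} test set for ILPs with matrix $A$. Your core step --- decompose a lift of $\veg\in\G(A)$ conformally in $\G(A')$, project, and use $\sqsubseteq$-minimality of $\veg$ to force exactly one nonzero projected summand --- is correct \emph{given} the premise $\pi_{\vex}(\Ker_\Z(A'))\subseteq\Ker_\Z(A)$, and you rightly identify that premise as the crux.

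The genuine gap is that this premise fails for precisely one of the two constructions you defer to. In Lemma~\ref{lem:primaltd_uniformization}, every row $\vecc$ of $C$ has the form $(\vea,1)$, i.e., every constraint carries its own fresh slack variable (fixed to $0$ by the bounds when it encodes a genuine row of $A$, free otherwise). At the matrix level this means that for \emph{any} $\vex$ one can choose the slacks to satisfy $C(\vex,\vey)=\vezero$, so $\pi_{\vex}(\Ker_\Z(C))$ is all of $\Z^{n}$ rather than $\Ker_\Z(A)$; your projected summands $\vep_j$ then need not lie in $\Ker_\Z(A)$, and the minimality-forcing step collapses. The information that the slacks vanish lives entirely in the bounds $\vel',\veu'$, which a purely kernel-level argument discards --- and this is exactly why the paper argues through test sets, where the augmenting step is taken between \emph{feasible} points and the bounds do the work. (A smaller shared caveat: the single-right-hand-side definition of extended formulation is, strictly speaking, too weak for either proof --- the paper needs the correspondence for all $\veb,\vel,\veu$ to invoke inclusion-minimality of $\G(A)$ as a test set --- but that is repaired by the concrete constructions, whereas the slack issue is not.)
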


To prove the lemma we need the notion of a test set.
A set $\T \subseteq \Z^n$ is a \emph{test set} for~\eqref{IP} if, for any non-optimal feasible solution $\vex$, there exists $\veg \in \T$ such that $\vex + \veg$ is feasible and $\vew (\vex+\veg) < \vew \vex$.
It is known that $\G(A)$ is an inclusion-wise minimal test set for all ILPs with the constraint matrix $A$~\cite[Lemma 3.3.2]{DHK}.

\begin{proof}[Proof of Lemma~\ref{lem:graver_ef}]
It suffices to show that $H$ is a test set.
Let $\vex$ be a non-optimal feasible solution in~\eqref{IP}.
By definition, there exists a $\vey$ such that $(\vex, \vey)$ is a non-optimal feasible solution in $A' (\vex, \vey) = \veb', \vel' \leq (\vex, \vey) \leq \veu'$.
By the test set property of $\G(A')$ we have that there exists $(\veg, \veh) \in \G(A')$ which is an augmenting step for $(\vex, \vey)$ and thus $\veg$ is an augmenting step for $\vex$.
\end{proof}
\end{accumulate}

\begin{accumulate}
\subsection{Hardness}

\ifthenelse{\value{Accumulate} = 0}{
Using a proof of Ganian et al.~\cite[Theorem 12]{GOR} we show that we cannot hope to relax the parameter $\td_D(A)$ to $\tw_D(A)$, even if $\|A\|_\infty$ was a constant.
\begin{lemma} \APXmark \label{lem:dualtd_nph}
\eqref{IP} is \NPh already when $\tw_D(A) = 3$, $\|A\|_\infty = 2$, and $\vew = \mathbf{0}$.
\end{lemma}}
{}

\ifthenelse{\value{Accumulate} = 1}{
\begin{proof}[Proof of Lemma~\ref{lem:dualtd_nph}]}
{\bpr}
The proof uses the same construction as Ganian et al.~\cite[Theorem 12]{GOR}; they only observe that it has bounded incidence treewidth, while we notice that it also has bounded dual treewidth.
Let $S = \{s_1, \dots, s_n\} \subseteq \N$, $s \in \N$, be an instance of \textsc{Subset Sum}, i.e., we are to decide whether there is a subset $S' \subseteq S$ such that $\sum_{s_i \in S'} s_i = s$.
We formulate this as an ILP feasibility problem.
Let $L = \max_{i \in [n]} \l s_i \r$.
Let $x_1, \dots, x_n$ be binary variables where $x_i$ encodes whether $s_i \in S'$.
Let $z_i, \dots, z_n$ be variables where $z_i$ has bounds $0 \leq z_i \leq s_i$ and we will enforce $(x_i = 1) \implies (z_i = s_i)$ and $(x_i = 0) \implies (z_i = 0)$.
The main trick needed to keep the coefficients small is the following.
Let $o_i^j$ be the $j$-th bit of the binary encoding of $s_i$.
We will introduce variables $y_i^j$, $j \in [0,L]$ and enforce that $(x_i = 1) \implies (y_i^j = o_i^j \cdot 2^j)$ and $(x_i = 0) \implies (y_i^j = 0)$; thus, we can then set $z_i = \sum_{i=0}^L o_i^j y_i^j$.
The constraints $A (\vex, \vey, \vez) = \veb$ are as follows:
\begin{align}
y_i^0 &= x_i & \forall i \in [n] \tag{$X_i$}\\
y_i^j &= 2y_i^{j-1} & \forall i \in [n], j \in [L] \tag{$Y_i^j$} \\
z_i &= \sum_{i=0}^L o_i^j y_i^j & \forall i \in [n] \tag{$Z_i$} \\
\sum_{i=1}^n z_i &= s & \enspace .\tag{$S$} 
\end{align}
Let us analyze the dual graph $G_D(A)$.
Its vertices are the constraints $X_i$, $Y_i^j$, $Z_i$ and $S$.
There are the following edges:
\begin{itemize}
\item Between $S$ and each $Z_i$,
\item between $Z_i$ and each $Y_i^j$, for all $i \in [n]$,
\item between each $Y_i^j$ and $Y_i^{j+1}$, for all $i \in [n]$ and $j \in [0,L-1]$, and,
\item between $X_i$ and $Y_i^0$.
\end{itemize}
We can construct a tree decomposition (in fact, a path decomposition) by consecutively taking the following segment of bags for each $i \in [n]$: \[\{S, Z_i, Y_i^0, X_i\}, \{S, Z_i, Y_i^0, Y_i^1\}, \{S, Z_i, Y_i^1, Y_i^2\}, \dots, \{S, Z_i, Y_i^{L-1}, Y_i^L\} \enspace .\]
Since each bag is of size $4$, the treewidth is $3$; we disregard lower and upper bounds here as they can be handled independently for each variable without increasing the treewidth.
Note that due to the long path $Y_i^0 - Y_i^1 - \cdots - Y_i^L$ we know that $\td_D(A) \geq \log L$, and \textsc{Subset Sum} is only hard for $L$ sufficiently large in $n$.
\epr
\end{accumulate}

\begin{accumulate}
The fact that~\eqref{IP} is \NPh already with $\tw_P(A)$, $\tw_D(A)$ and $\|A\|_\infty$ constant suggest that $g_\infty(A)$ and $g_1(A)$ should be large; otherwise we could apply Lemma~\ref{lem:primal_treewidth} or~\ref{lem:incidence_treewidth}, contradicting \textsf{P $\neq$ NP}.
Now we prove that indeed $g_\infty(A)$ (and thus also $g_1(A)$) can be large already for constant $\tw_P(A)$, $\tw_D(A)$ and $\|A\|_\infty$.

\begin{lemma} \label{lem:largeg1}
For each $n \geq 2$, $n \in \N$, there exists a matrix $A \in \Z^{(n-1) \times n}$ with $\tw_P(A) = \tw_D(A) = 1$, $\|A\|_\infty = 2$, and $2^{n-1} \leq g_\infty(A) \leq g_1(A)$.
\end{lemma}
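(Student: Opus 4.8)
The plan is to give an explicit bidiagonal matrix whose one-dimensional kernel is spanned by a geometrically growing vector, which is thereby forced into the Graver basis. First I would take $A \in \Z^{(n-1) \times n}$ to be the matrix with $-2$ on the main diagonal and $1$ on the superdiagonal, i.e.\ row $i$ has $a_{i,i} = -2$, $a_{i,i+1} = 1$, and zeros elsewhere. Then $A\vex = \mathbf{0}$ is equivalent to the chain of equations $x_{i+1} = 2x_i$ for $i \in [n-1]$, so every integer kernel vector is of the form $x_1 \cdot \vev$ with $\vev = (1, 2, 4, \dots, 2^{n-1})$ and $x_1 \in \Z$. Clearly $\|A\|_\infty = 2$.

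Second, I would argue that $\G(A) = \{\vev, -\vev\}$. Since the real kernel is the line spanned by $\vev$, every nonzero integer kernel vector equals $c\vev$ for some $c \in \Z \setminus \{0\}$ (its first coordinate being exactly $c$). For $|c| \geq 2$ one has $\vev \sqsubseteq c\vev$, so such vectors are not $\sqsubseteq$-minimal, whereas $\vev$ and $-\vev$ are minimal. Hence $g_\infty(A) = \|\vev\|_\infty = 2^{n-1}$, which gives the claimed bound $2^{n-1} \leq g_\infty(A)$; the inequality $g_\infty(A) \leq g_1(A)$ is immediate from $\|\cdot\|_\infty \leq \|\cdot\|_1$.

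Third, I would verify the treewidth conditions by describing both graphs explicitly. In the primal graph each row $i$ makes columns $i$ and $i+1$ adjacent, so $G_P(A)$ is the path $1 - 2 - \cdots - n$; in the dual graph each column $j$ (for $2 \leq j \leq n-1$) has its nonzeros in rows $j-1$ and $j$ and thus makes them adjacent, so $G_D(A)$ is the path on the $n-1$ rows. A path has treewidth $1$, whence $\tw_P(A) = \tw_D(A) = 1$. The only point requiring care is confirming that no strictly smaller conformal integer vector survives in the kernel, i.e.\ the exact identification of $\G(A)$; this is the crux of the argument, but because the kernel is one-dimensional it reduces to the elementary scalar-multiple observation above, so I expect the remaining work to be bookkeeping rather than a genuine obstacle.
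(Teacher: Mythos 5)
Your proposal is correct and follows essentially the same route as the paper: the same bidiagonal matrix (up to a sign), the observation that $A\vex=\mathbf{0}$ forces $x_{i+1}=2x_i$ so the kernel is spanned by $(1,2,\dots,2^{n-1})$, and the path structure of both the primal and dual graphs giving treewidth $1$. Your explicit identification of $\G(A)=\{\vev,-\vev\}$ is slightly more detailed than the paper's argument, which only verifies that $\vev$ is $\sqsubseteq$-minimal, but the substance is identical.
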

\begin{proof}
Let
\[
A:=\left(
\begin{array}{ccccccc}
2      & -1  & 0 & \cdots & 0 & 0   \\
0      & 2       & -1 & \cdots & 0 & 0\\
0   & 0  & 2    &  \cdots & \vdots & \vdots\\
\vdots   &  \vdots & \vdots     &  \ddots & -1 & 0\\
0      & 0 & 0 &\cdots  &    2 & -1  \\
\end{array}
\right),
\]
be an $(n-1) \times n$ matrix.
The sequence $\{1,2\}, \{2,3\}, \dots, \{n-1,n\}$ forms a tree decomposition of $G_P(A)$ of width $1$; analogously $\{1,2\}, \{2,3\}, \dots, \{n-2,n-1\}$ is a tree decomposition of $G_D(A)$ of width $1$.
Observe that every $\vex \in \Z^n$ with $A \vex = \mathbf{0}$ must satisfy that $x_{i+1} = 2x_i$ for each $i \in [n-1]$.
Clearly  $\veh = (1, 2, 4, \dots, 2^n) \in \N^{n}$ satisfies $A \veh = \mathbf{0}$, and from the previous observation it immediatelly follows that there is no $\veh' \in \N^n$ with $\veh' \sqsubset \veh$, and thus $\veh \in \G(A)$.
\end{proof}
\end{accumulate}

%
%
%

%

%

\section{Outlook}
We highlight a few questions related to our results.
First, how far can Theorem~\ref{thm:oracle} be extended towards separable convex objectives?
The main missing piece seems to be a strongly polynomial algorithms for some classes of separable convex objectives similar to the algorithm of Tardos~\cite{Tar}.
Second, is $4$-block $n$-fold ILP \FPT or \Wh{1}?
The importance of this notorious problem is highlighted here: we give complete answers for parameter $\td_P(A)$ and $\td_D(A)$, but parameterization by $\td_I(A)$ is still wide open, and $4$-block $n$-fold is the simplest open case.
Third, we are interested in other strongly \FPT algorithms.

\bibliographystyle{plainurl}
\bibliography{psp}

\ifaccumulating{%
\newpage\appendix\sloppy
\appendix
\renewcommand\thefigure{\thesection.\arabic{figure}}
\accuprint
}

\end{document}